\setlist[itemize]{leftmargin=*}
\crefname{section}{\S}{\S}
\Crefname{section}{\S}{\S}
\definecolor{darkgreen}{RGB}{0,128,0}
\definecolor{gray}{RGB}{200,200,200}
\newcommand{\SYSTEM}{\textsc{Plinko}\xspace}
\newcommand{\Q}[1]{(\textit{\textbf{Q#1}})}
\DeclarePairedDelimiter{\abs}{\lvert}{\rvert}
\newcommand{\DeclareAutoPairedDelimiter}[3]{%
	\expandafter\DeclarePairedDelimiter\csname Auto\string#1\endcsname{#2}{#3}%
	\begingroup\edef\x{\endgroup
		\noexpand\DeclareRobustCommand{\noexpand#1}{%
			\expandafter\noexpand\csname Auto\string#1\endcsname*}}%
	\x}
\DeclareAutoPairedDelimiter{\deno}{\llbracket}{\rrbracket}
\newcommand*\Let[2]{\State #1 $\gets$ #2}
\newcommand*\dom[1]{\mathrm{dom}\left( #1 \right)}
\newcommand*\E[1]{\mathbb{E}\left[ #1 \right]}
\algrenewcommand\algorithmicindent{0.75em}%
\algnewcommand\algorithmicswitch{\textbf{switch}}
\algnewcommand\algorithmiccase{\textbf{case}}
\algnewcommand\algorithmicassert{\texttt{assert}}
\algnewcommand\Assert[1]{\State \algorithmicassert(#1)}%
\tikzset{%
	remember picture with id/.style={%
		remember picture,
		overlay,
		save picture id=#1,
	},
	save picture id/.code={%
		\edef\pgf@temp{#1}%
		\immediate\write\pgfutil@auxout{%
			\noexpand\savepointas{\pgf@temp}{\pgfpictureid}}%
	},
	if picture id/.code args={#1#2#3}{%
		\@ifundefined{save@pt@#1}{%
			\pgfkeysalso{#3}%
		}{
			\pgfkeysalso{#2}%
		}
	}
}
\def\savepointas#1#2{%
	\expandafter\gdef\csname save@pt@#1\endcsname{#2}%
}
\def\tmk@labeldef#1,#2\@nil{%
	\def\tmk@label{#1}%
	\def\tmk@def{#2}%
}
\newlength\AlgIndent
\newcounter{mymark}
\newcommand\ColorLine[1]{%
	\stepcounter{mymark}%
	\tikz[remember picture with id=mark-\themymark,overlay] {;}%
	\begin{tikzpicture}[remember picture,overlay]%
		\filldraw[#1]%
		let \p1=(pic cs:mark-\themymark), 
		\p2=(current page.east)  in 
    ([xshift=-0.4em,yshift=-1.0ex]0,\y1)  rectangle ++([xshift=-7.7cm]\x2,\baselineskip);
    % ([xshift=-\ALG@thistlm-0.3em,yshift=-0.7ex]0,\y1)  rectangle ++(\linewidth+\AlgIndent,\baselineskip);
	\end{tikzpicture}%
}%
\algnewcommand\cRequire[1]{\item[\ColorLine{#1}\algorithmicrequire]}%
\algnewcommand\cEnsure[1]{\item[\ColorLine{#1}\algorithmicensure]}%
\algnewcommand\cState[1]{\State\ColorLine{#1}}%
\algnewcommand\cStatex[1]{\Statex\ColorLine{#1}}%
\algnewcommand\cComment[1]{\Comment\ColorLine{#1}}%
\begin{document}

%% Title information
\title{Symbolic Execution for Randomized Programs}         %% [Short Title] is optional;
%% when present, will be used in
%% header instead of Full Title.
% \titlenote{with title note}             %% \titlenote is optional;
%% can be repeated if necessary;
%% contents suppressed with 'anonymous'
% \subtitle{Subtitle}                     %% \subtitle is optional
% \subtitlenote{with subtitle note}       %% \subtitlenote is optional;
%% can be repeated if necessary;
%% contents suppressed with 'anonymous'

%% Author information
%% Contents and number of authors suppressed with 'anonymous'.
%% Each author should be introduced by \author, followed by
%% \authornote (optional), \orcid (optional), \affiliation, and
%% \email.
%% An author may have multiple affiliations and/or emails; repeat the
%% appropriate command.
%% Many elements are not rendered, but should be provided for metadata
%% extraction tools.

%% Author with single affiliation.
% \author{First Last}
% \authornote{with author1 note}          %% \authornote is optional;
%                                         %% can be repeated if necessary
% \orcid{nnnn-nnnn-nnnn-nnnn}             %% \orcid is optional
% \affiliation{
% \position{Position1}
% \department{Department1}              %% \department is recommended
% \institution{Institution1}            %% \institution is required
% \streetaddress{Street1 Address1}
% \city{City1}
% \state{State1}
% \postcode{Post-Code1}
% \country{Country1}                    %% \country is recommended
% }
%   \email{first1.last1@inst1.edu}          %% \email is recommended

\author{Zachary Susag}
% \authornote{with author1 note}          %% \authornote is optional;
%% can be repeated if necessary
\orcid{0000-0002-3981-9529}             %% \orcid is optional
\affiliation{
  % \position{Position1}
  % \department{Department1}              %% \department is recommended
  \institution{Cornell University}            %% \institution is required
  % \streetaddress{Street1 Address1}
  % \city{City1}
  % \state{State1}
  % \postcode{Post-Code1}
  \country{USA}                    %% \country is recommended
}
\email{zjs@cs.cornell.edu}          %% \email is recommended

\author{Sumit Lahiri}
% \authornote{with author1 note}          %% \authornote is optional;
%% can be repeated if necessary
\orcid{0000-0002-6867-9035}             %% \orcid is optional
\affiliation{
  % \position{Position1}
  % \department{Department1}              %% \department is recommended
  \institution{IIT Kanpur}            %% \institution is required
  % \streetaddress{Street1 Address1}
  % \city{City1}
  % \state{State1}
  % \postcode{Post-Code1}
  \country{India}                    %% \country is recommended
}
\email{sumitl@cse.iitk.ac.in}          %% \email is recommended

\author{Justin Hsu}
% \authornote{with author1 note}          %% \authornote is optional;
%% can be repeated if necessary
\orcid{0000-0002-8953-7060}             %% \orcid is optional
\affiliation{
  % \position{Position1}
  % \department{Department1}              %% \department is recommended
  \institution{Cornell University}            %% \institution is required
  % \streetaddress{Street1 Address1}
  % \city{City1}
  % \state{State1}
  % \postcode{Post-Code1}
  \country{USA}                    %% \country is recommended
}
\email{justin@cs.cornell.edu}          %% \email is recommended

\author{Subhajit Roy}
% \authornote{with author1 note}          %% \authornote is optional;
%% can be repeated if necessary
\orcid{0000-0002-3394-023X}             %% \orcid is optional
\affiliation{
  % \position{Position1}
  % \department{Department1}              %% \department is recommended
  \institution{IIT Kanpur}            %% \institution is required
  % \streetaddress{Street1 Address1}
  % \city{City1}
  % \state{State1}
  % \postcode{Post-Code1}
  \country{India}                    %% \country is recommended
}
\email{subhajit@cse.iitk.ac.in}          %% \email is recommended

%% Abstract
%% Note: \begin{abstract}...\end{abstract} environment must come
 %% before \maketitle command
\begin{abstract}
  We propose a symbolic execution method for programs that can draw random
  samples. In contrast to existing work, our method can verify randomized
  programs with unknown inputs and can prove probabilistic properties that
  universally quantify over all possible inputs. Our technique augments standard
  symbolic execution with a new class of \emph{probabilistic symbolic
  variables}, which represent the results of random draws, and computes symbolic
  expressions representing the probability of taking individual paths. We
  implement our method on top of the \textsc{KLEE} symbolic execution engine
  alongside multiple optimizations and use it
  to prove properties about probabilities and expected values for a range of
  challenging case studies written in C++, including Freivalds' algorithm,
  randomized quicksort, and a randomized property-testing algorithm for
  monotonicity. We evaluate our method against \textsc{Psi}, an exact
  probabilistic symbolic inference engine, and \textsc{Storm}, a probabilistic
  model checker, and show that our method significantly outperforms both tools.
\end{abstract}

%% 2012 ACM Computing Classification System (CSS) concepts
%% Generate at 'http://dl.acm.org/ccs/ccs.cfm'.
\begin{CCSXML}
<ccs2012>
   <concept>
       <concept_id>10002950.10003648.10003662</concept_id>
       <concept_desc>Mathematics of computing~Probabilistic inference problems</concept_desc>
       <concept_significance>300</concept_significance>
       </concept>
   <concept>
       <concept_id>10011007.10010940.10010992.10010998.10010999</concept_id>
       <concept_desc>Software and its engineering~Software verification</concept_desc>
       <concept_significance>500</concept_significance>
       </concept>
   <concept>
       <concept_id>10011007.10010940.10010992.10010998.10011000</concept_id>
       <concept_desc>Software and its engineering~Automated static analysis</concept_desc>
       <concept_significance>300</concept_significance>
       </concept>
   <concept>
       <concept_id>10003752.10003790.10003794</concept_id>
       <concept_desc>Theory of computation~Automated reasoning</concept_desc>
       <concept_significance>500</concept_significance>
       </concept>
   <concept>
       <concept_id>10003752.10010124.10010138.10010142</concept_id>
       <concept_desc>Theory of computation~Program verification</concept_desc>
       <concept_significance>300</concept_significance>
       </concept>
 </ccs2012>
\end{CCSXML}

\ccsdesc[300]{Mathematics of computing~Probabilistic inference problems}
\ccsdesc[500]{Software and its engineering~Software verification}
\ccsdesc[300]{Software and its engineering~Automated static analysis}
\ccsdesc[500]{Theory of computation~Automated reasoning}
\ccsdesc[300]{Theory of computation~Program verification}

%% Keywords
%% comma separated list
\keywords{Probabilistic programming languages, symbolic execution, automated software verification}  %% \keywords are mandatory in final camera-ready submission

%% \maketitle
%% Note: \maketitle command must come after title commands, author
%% commands, abstract environment, Computing Classification System
%% environment and commands, and keywords command.
\maketitle

\section{Introduction}
\label{sec:intro}

Symbolic execution (SE)~\citep{king1976} is a highly successful method to
automatically find bugs in programs. In a nutshell, SE iteratively explores the
space of possible program paths while treating program states as \emph{symbolic
functions} of program inputs (symbolic parameters). Whenever a path to an error
state is discovered, SE searches for a concrete setting of the symbolic
parameters which realizes this path, thereby triggering the error. This
basic bug-finding strategy has proved to be a powerful technique, supporting
some of the most effective tools for analyzing large
codebases~\citep{bessey_2010}.

Researchers have adapted SE to many
domains~(e.g.,~\citep{geldenhuys_2012, filieri_2013, borges_2014, p4wn_2021,
farina_2019, raimondas_2010, raimondas_2011, adam_2012}). In this work, we
consider how to perform SE for \emph{randomized} programs, which can draw random
samples from built-in distributions. These programs play a critical role in many
important applications today, from machine learning to security and privacy.
However, randomized programs, like all programs, are susceptible to
bugs~\citep{axprof_2019}.  Correctness properties are difficult to formally
verify; existing methods typically require substantial manual effort and target
narrow properties. Moreover, randomized programs are difficult to test: the
desired behavior is not deterministic, and it may not be possible to detect that
a program is producing an incorrect distribution of outputs by running the program.

\paragraph*{Challenges and prior work.} 
Accordingly, randomized programs are an attractive target for formal
verification methods. However, there are several challenges in developing a SE
procedure for randomized programs:
\begin{itemize}
\item \textbf{Quantitative properties.} Standard SE aims to check whether a bad
  program state is reachable. In
  randomized programs, we are often more interested in whether a bad state is
  reached \emph{too often}, or whether a good state is reached \emph{often
  enough}. Accordingly, SE for randomized programs must be able to analyze the
  \emph{quantitative} probability of reaching program states.
\item \textbf{Computing branch probabilities.} A concrete execution of a
  non-probabilistic program follows one branch at each branching instruction,
  since the branch condition is either true or false in any program state. In
  probabilistic programs, the program can be thought of as transforming a
  \emph{distribution} of program states. Thus, a branch condition has some
  probability of being true and some probability of being false in every probabilistic
  program state. A SE procedure for probabilistic programs should
  be able to compute probabilities of branches and combine them
  to reason about probabilities of whole paths.
\item \textbf{Handling unknown inputs.} Like standard programs, randomized
  programs usually accept input parameters. These unknown values are qualitatively
  different from the unknown results of random sampling commands: there is no
  probability assigned to different settings of the inputs, and the target
  correctness property universally quantifies over all possible inputs.

  Prior work has considered SE for probabilistic
  programs~\citep{geldenhuys_2012,sampson_2014}, but unlike traditional SE, most
  existing methods do not treat the program inputs as \textit{unknown} values.
  Instead, existing methods typically consider probabilistic programs without
  inputs~\citep{sampson_2014} or assume that the inputs are
  drawn from a known probability distribution~\citep{geldenhuys_2012}. This simplification
  allows existing SE methods to compute probabilities exactly using model
  counting or approximately using statistical sampling. However, it
  significantly limits the kinds of programs and properties that can be
  analyzed.
\end{itemize}

\paragraph*{Our work.}
We propose a symbolic execution method for randomized programs with unknown
inputs. These programs can be thought of as producing a family of output
distributions, one for each concrete input, with inputs ranging over a large,
possibly infinite set. We consider two kinds of properties:
\begin{itemize}
\item \textbf{Probability bounds.} For all inputs, the \emph{probability of
  reaching a program state} is at most/at least/exactly equal to some quantity,
  which may depend on the inputs. These properties are the probabilistic
  analog of the reachability properties considered by traditional SE.
\item \textbf{Expectation bounds.} For all inputs, the \emph{expected value of a
  program expression} in the output is at most/at least/exactly equal to some
  quantity, which may depend on the inputs. These properties are useful for
  bounding expected resource usage or running time.
\end{itemize}
There are two key technical ingredients in our approach:
\begin{itemize}
\item \textbf{Distinguishing between regular and probabilistic symbolic variables.}
  Our method models sampling statements by introducing a new kind of \emph{probabilistic}
  symbolic variable. While regular symbolic variables represent unknown inputs,
  probabilistic symbolic variables represent the result from sampling a known
  distribution.
\item \textbf{Computing symbolic branch probabilities.} Branch probabilities may
  depend on unknown program inputs. Accordingly, we cannot use approaches like
  model counting to compute the branch probabilities since the probabilities
  are not constants. Instead, our method computes the branch probability expressions
  symbolically.
\end{itemize}

\paragraph*{Contributions and outline.}
After illustrating our method on an example (\cref{sec:overview}), we present
our main contributions:
\begin{itemize}
\item A symbolic execution method for probabilistic programs with unknown input
  parameters and a formal proof of soundness for our method (\cref{sec:pse}).
\item A collection of case studies from the randomized algorithms literature
  (\cref{sec:case_studies}), including bounding the soundness probability of
  Freivalds' algorithm~\citep{freivalds1977}, the expected number of comparisons
  for randomized quicksort, and the correctness probability of a randomized
  algorithm for checking monotonicity~\citep{goldreich_2017}. All examples have
  unknown input parameters.
\item An optimized implementation of our approach called \SYSTEM (\cref{sec:implementation}).
  \SYSTEM is built on top of the \textsc{KLEE} symbolic execution engine~\citep{cadar2008}, 
  allowing \SYSTEM to perform symbolic execution on real implementations of randomized programs 
  written in any language that can generate LLVM bytecode, while also faithfully modeling program 
  states on real hardware (e.g., with overflowing arithmetic, arrays, etc.).
\item A thorough experimental evaluation of our tool (\cref{sec:evaluation}). We show
  that our method significantly outperforms two well-developed tools for
  analyzing probabilistic programs:
  \textsc{Psi}~\citep{psi_2016,lambda_psi_2020}, a leading tool for exact
  probabilistic symbolic inference, and \textsc{Storm}, a probabilistic model
  checker~\citep{hensel_2022}. We also evaluate the factors affecting
  the performance of \SYSTEM and the effectiveness of our optimizations.
\end{itemize}

\noindent{}We conclude by discussing other potential optimizations (\cref{sec:filtering}),
related work (\cref{sec:related}), and future directions (\cref{sec:conclusion}).
Source code for \SYSTEM and our case studies is available on Zenodo~\citep{plinkoArtifact2022}.
\nottoggle{SUPPLEMENTAL}{The full version of this paper is available on \texttt{arXiv}\citep{susag2022arXiv}, which contains full proofs and additional details.}{}

\section{Overview}
\label{sec:overview}

The Monty Hall problem~\citep{selvin1975} is a classic probability puzzle based on the American television show \textit{Let's Make a Deal}.
The problem itself is simple:
\begingroup
\addtolength\leftmargini{-0.21in}
\begin{quote}
	You are a contestant on a game show. 
  Behind one of three doors there is a car, and behind the others, goats.
	You pick a door, and the host, who knows what is behind each of the doors, opens a different door, behind which is a goat.
	The host then offers you the choice to switch to the remaining closed door.
	Should you?
\end{quote}
\endgroup
\noindent{}While it may seem counterintuitive, you should always switch doors: regardless of your original door choice, you will win the car two-thirds of the time if you do switch and only a third of the time if you do not.
We model the Monty Hall problem as the probabilistic program in \Cref{fig:montyhall}. The variable $\mathtt{choice} \in [1,3]$ represents the door that is originally chosen by the contestant, while the boolean variable $\mathtt{door\_switch}$ represents whether the contestant chooses to switch doors.
Regardless of the value of \texttt{choice}, if $\mathtt{door\_switch}$ is set to \texttt{true} then \texttt{monty\_hall} should return \texttt{true} (i.e., the contestant wins) two-thirds of the time.

Our goal is to automate this reasoning.
More generally, the problem we aim to solve is: given a probabilistic program with discrete sampling statements and a target probability bound, how do we verify that the program satisfies the bound?
While the property for the Monty Hall problem is possible to verify with existing methods due to its small input space (i.e., it is feasible to try all possible inputs and verify the probability bound on each output distribution), more realistic probabilistic programs often have an enormous space of inputs (e.g., randomized quicksort takes an array of integers as input).
We solve this problem by developing a symbolic execution method.
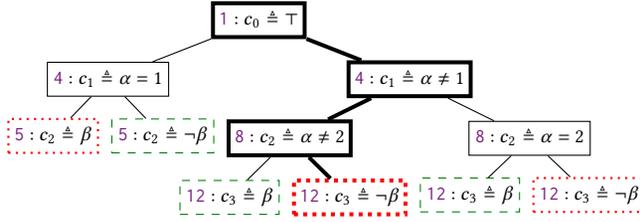
\begin{figure}[t]
	\centering
	{\scriptsize
		\begin{forest}baseline,for tree=draw,
			[{$\mathtt{\ref{line:monty_beg}}: c_0\triangleq\top$},align=center, base=bottom, line width=1.5pt
			[{$\mathtt{\ref{line:monty_choice_!=_car}}: c_1 \triangleq \alpha = 1$},align=center, base=bottom
			[{$\mathtt{\ref{line:monty_not_switch}}: c_2 \triangleq \beta$},align=center, base=bottom, node options={dotted,thick}, draw=red] %loss
			[{$\mathtt{\ref{line:monty_not_switch}}: c_2 \triangleq \neg\beta$},align=center, base=bottom, node options={dashed}, draw=darkgreen]] %win
			[{$\mathtt{\ref{line:monty_choice_!=_car}}: c_1 \triangleq \alpha \neq 1$},align=center, base=bottom, line width=1.5pt, edge={line width=1.5pt}
			[{$\mathtt{\ref{line:monty_choice_!=_2}}: c_2 \triangleq \alpha \neq 2$},align=center, base=bottom, line width=1.5pt, edge={line width=1.5pt}
			[{$\mathtt{\ref{line:monty_switch}}: c_3 \triangleq \beta$},align=center, base=bottom, node options={dashed}, draw=darkgreen ] % win
			[{$\mathtt{\ref{line:monty_switch}}: c_3 \triangleq \neg\beta$},align=center, base=bottom, edge={line width=1.5pt}, node options={dotted,thick}, draw=red, line width=1.5pt]] % loss
			[{$\mathtt{\ref{line:monty_choice_!=_2}}: c_2 \triangleq \alpha = 2$},align=center, base=bottom
			[{$\mathtt{\ref{line:monty_switch}}: c_3 \triangleq \beta$},align=center, base=bottom, node options={dashed}, draw=darkgreen ] % win
			[{$\mathtt{\ref{line:monty_switch}}: c_3 \triangleq \neg\beta$},align=center, base=bottom, node options={dotted,thick}, draw=red ]]]] % loss
		\end{forest}
	}
	\caption{Symbolic execution tree for the Monty Hall Problem if the car is behind door 1.}
	\label{fig:montyhall_tree}
  \vspace{-2em}
\end{figure}
In symbolic execution, program inputs are replaced by \textit{symbolic} variables that can take on any value.
The program is then ``run'' on these symbolic variables.
When a branch is encountered (e.g., an \texttt{if} condition), execution\begin{wrapfigure}[26]{r}{0.5\textwidth}
  \centering
  \inputminted[numbersep=3pt,xleftmargin=10pt,tabsize=2,fontsize=\footnotesize,linenos,escapeinside=||]{c}{montyhall.c}
  \vspace{-1em}
	\caption{C code for the Monty Hall problem.}
	\label{fig:montyhall}
\end{wrapfigure}proceeds along both branches, and the branch's guard is recorded in that path's \textit{path condition}, denoted by $\varphi$.
This yields a \textit{symbolic execution tree} that describes all possible paths through the program.

To illustrate this idea, suppose we were to use traditional symbolic execution to analyze the program in \Cref{fig:montyhall} but fixed the car behind door 1.
The two inputs, \texttt{choice} and \texttt{door\_switch}, would become the symbolic variables $\alpha$ and $\beta$ respectively.
The execution tree is shown in \Cref{fig:montyhall_tree}.
(Symmetric trees can be made for the cases when the car is behind doors 2 and 3.)
Each node contains the corresponding line number and branch guard.
Leaves surrounded by a green dashed line (\tikz{\draw[dashed, line width=1pt, color=darkgreen] (0,0) -- (0.35,0) }) denote paths where the program returns \texttt{true} (i.e., the contestant wins the car), and leaves surrounded by a red dotted line (\tikz{\draw[color=red, dotted, thick, line width=1pt] (0,0) -- (0.35,0) }) denote paths where the program returns \texttt{false} (i.e., the contestant wins a goat).
A path condition $\varphi$ is a conjunction of all the $c_i$ between the root of the tree and a leaf.

For example, consider the bolded path in~\Cref{fig:montyhall_tree}.
Execution begins on line~\ref{line:monty_beg} with an empty path condition: $\varphi = \top$.
When the first \texttt{if} condition is reached on line~\ref{line:monty_choice_!=_car}, \texttt{if (choice == car\_door)}, the path follows the false branch. 
We then conjoin the corresponding symbolic expression to $\varphi$, i.e., $\varphi = \alpha \neq 1$.
Execution along this path then jumps to line~\ref{line:monty_choice_!=_1}.
However, the guard  \texttt{choice != 1 \&\& car\_door != 1} is always false as \texttt{car\_door == 1}.
Since there is no ambiguity about which branch to follow, $\varphi$ remains unchanged.
The next branch is on line~\ref{line:monty_choice_!=_2}, which is equivalent to checking whether \texttt{car\_door != 2}. 
The path takes the true branch, so we update $\varphi$ with the new guard: $\varphi = (\alpha \neq 1) \wedge (\alpha \neq 2)$.
The last branch is on line~\ref{line:monty_switch}, where the program checks whether the contestant chose to switch doors.
This guard ultimately determines whether the contestant wins the car along this path as we previously determined that the contestant did not originally pick the winning door.
Here, switching wins the car while not switching only wins a goat.
The entire path condition $\varphi$ can then be written as $\varphi = (\alpha \neq 1) \wedge (\alpha \neq 2) \wedge \neg\beta$.

Given a setting of the input parameters, this analysis can tell us whether the contestant will win the car when the car is behind door 1; however, it cannot tell us how likely winning the car is.
In principle, if we combined the tree shown in~\Cref{fig:montyhall_tree} with the symmetric trees for the cases where the car is behind doors 2 and 3, and if we knew how likely it was to take one branch over another, we could extend this reasoning to determine how often we would reach a winning state.

To carry out this kind of quantitative reasoning within symbolic execution, \textbf{our core idea is to represent probabilistic samples as a new class of symbolic variables.}
We refer to these variables as \textit{probabilistic} symbolic variables in contrast to regular symbolic variables, which we instead call \emph{universal} symbolic variables.
Just as in standard symbolic execution, universal symbolic variables represent unknown program inputs.
Since these program inputs are universally quantified instead of being drawn from some assumed distribution, the name ``universal'' symbolic variable is appropriate.
In contrast, probabilistic symbolic variables represent the results of random draws.
During symbolic execution, instead of actually drawing a random sample, we represent the would-be sample as a probabilistic symbolic variable and record the source distribution.
Our method uses this information to compute the (symbolic) probabilities of taking branches.

To demonstrate, let \texttt{car\_door} be represented by the probabilistic symbolic variable, $\delta$.
Instead of branching solely on the universal symbolic variables $\alpha$ and $\beta$, we additionally branch on the probabilistic symbolic variable $\delta$.
This execution tree is presented in~\Cref{fig:montyhall_tree_prob}.
We elide branches where the current path condition shows that only one choice is feasible.

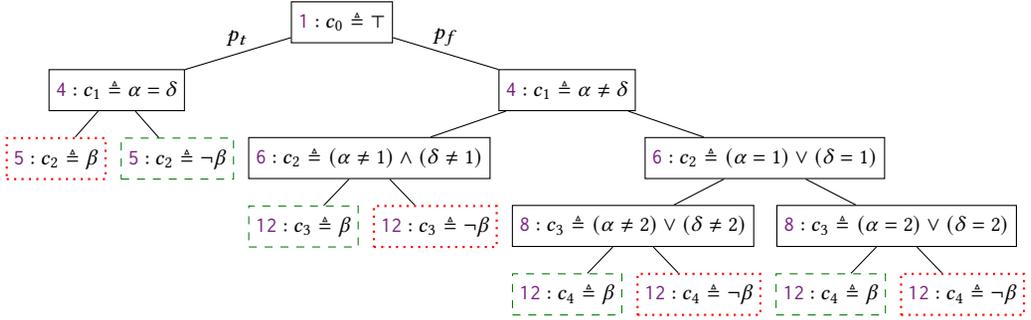
\begin{figure*}
	\centering
	{\footnotesize
		\begin{forest}baseline,for tree=draw,
			[{$\mathtt{\ref{line:monty_beg}}: c_0\triangleq\top$},align=center, base=bottom,
			[{$\mathtt{\ref{line:monty_choice_!=_car}}: c_1 \triangleq \alpha = \delta$},align=center, base=bottom, edge label={node [midway,above] {$p_t$} }
      [{$\mathtt{\ref{line:monty_not_switch}}: c_2 \triangleq \beta$},align=center, base=bottom, node options={dotted,thick}, draw=red ] %loss
      [{$\mathtt{\ref{line:monty_not_switch}}: c_2 \triangleq \neg\beta$},align=center, base=bottom, node options={dashed}, draw=darkgreen ]] %win
			[{$\mathtt{\ref{line:monty_choice_!=_car}}: c_1 \triangleq \alpha \neq \delta$},align=center, base=bottom, edge label={node [midway,above] {$p_f$} }
			[{$\mathtt{\ref{line:monty_choice_!=_1}}: c_2 \triangleq (\alpha \neq 1) \wedge (\delta \neq 1)$},align=center, base=bottom,
			[{$\mathtt{\ref{line:monty_switch}}: c_3 \triangleq \beta$},align=center, base=bottom, node options={dashed}, draw=darkgreen ] %win
      [{$\mathtt{\ref{line:monty_switch}}: c_3 \triangleq \neg\beta$},align=center, base=bottom, node options={dotted,thick}, draw=red ]] %loss
			[{$\mathtt{\ref{line:monty_choice_!=_1}}: c_2 \triangleq (\alpha = 1) \vee (\delta = 1)$},align=center, base=bottom
			[{$\mathtt{\ref{line:monty_choice_!=_2}}: c_3 \triangleq (\alpha \neq 2) \vee (\delta \neq 2)$},align=center, base=bottom
			[{$\mathtt{\ref{line:monty_switch}}: c_4 \triangleq \beta$},align=center, base=bottom, node options={dashed}, draw=darkgreen ] %win
			[{$\mathtt{\ref{line:monty_switch}}: c_4 \triangleq \neg\beta$},align=center, base=bottom, node options={dotted,thick}, draw=red ]] %loss
			[{$\mathtt{\ref{line:monty_choice_!=_2}}: c_3 \triangleq (\alpha = 2) \vee (\delta = 2)$},align=center, base=bottom
			[{$\mathtt{\ref{line:monty_switch}}: c_4 \triangleq \beta$},align=center, base=bottom, node options={dashed}, draw=darkgreen ]%win
			[{$\mathtt{\ref{line:monty_switch}}: c_4 \triangleq \neg\beta$},align=center, base=bottom, node options={dotted,thick}, draw=red ]]]]] %loss
		\end{forest}
	}
	\caption{Symbolic execution tree for the Monty Hall Problem with probabilistic symbolic variables. $p_t$ and $p_f$ denote the probability expressions for taking the true and false branches of the \texttt{if} condition found on line \ref{line:monty_choice_!=_car} of \Cref{fig:montyhall}. While every branch has an associated probability expression, we omit displaying them for presentation purposes.}
	\label{fig:montyhall_tree_prob}
\vspace*{-1em}
\end{figure*}

Since probabilistic symbolic variables are drawn from a distribution, we can compute the \textit{probability} of taking a certain branch by simply counting how many values from the domain of the distribution satisfy the guard condition.
For example, to compute the probability of taking the true branch in line~\ref{line:monty_choice_!=_car} (denoted by $p_t$ in~\Cref{fig:montyhall_tree_prob}), it suffices to count how many settings of $\delta$ satisfy $\alpha = \delta$ and divide by the total number of possible assignments to $\delta$ (i.e., 3) since \texttt{car\_door} is \textit{uniformly} randomly chosen to be 1, 2, or 3.
However, since $\alpha$ is a symbolic variable, we can only obtain a probability \textit{expression} over universal symbolic variables.
Let $[\cdot]$ denote Iverson brackets, where $[Q]=1$ if formula $Q$ is true and 0 otherwise.
Then, $p_t \triangleq ([\alpha = 1] + [\alpha = 2] + [\alpha = 3])/3 = 1/3$ as $\delta \in \{1,2,3\}$ and each setting is equally likely.
Similarly, $p_f \triangleq ([\alpha \neq 1] + [\alpha \neq 2] + [\alpha \neq 3])/3 = 2/3$.
We stress that in general these probabilities can be \emph{symbolic expressions} as they may depend on the universal symbolic variables.
This feature is one of the key advantages of our technique over existing methods for probabilistic symbolic execution: it makes reasoning about the probabilities much more complex, but it also allows us to prove properties for programs with truly unknown inputs.
In ~\Cref{sec:pse}, we explain how our technique can automatically generate these probability expressions and, by using standard probability rules, combine them to construct probability expressions for entire paths.

Returning now to our motivating question: if the contestant switches doors, do their chances of winning the car exceed $1/3$?
Using the tree in~\Cref{fig:montyhall_tree_prob}, we can limit our focus to just those paths that lead to a win.
Then, to find the probability of winning the car given the contestant switches doors, we filter out those paths where $\neg\beta$ is true and sum together the probabilities of the remaining paths.
Regardless of what $\alpha$ is, the resulting expression produces a probability of $2/3$.

\section{Probabilistic Symbolic Execution}
\label{sec:pse}

Now that we have described how our approach works at a high level, we can flesh out the details.
We begin with preliminary details~(\Cref{sec:pse_prelims}) and an overview of standard symbolic execution for non-probabilistic programs~(\Cref{sec:symex}).
Then, we present our symbolic execution approach for probabilistic programs.
Our approach has two steps: first, we augment standard symbolic execution to track probabilistic information as paths are explored~(\Cref{sec:pse_intro}).
Then, we convert the program paths into a logical query that encodes a probabilistic property~(\Cref{sec:query_gen}).
Finally, we formalize our approach mathematically and prove its soundness~(\Cref{sec:formalization}).

\subsection{Preliminaries}
\label{sec:pse_prelims}

Our algorithm works on programs written in \textbf{pWhile}, a core imperative probabilistic programming language.
The statements of \textbf{pWhile} are described by the following grammar:
\[
  S := \mathtt{x} \gets e \mid \mathtt{x} \sim d \mid S_1 ; S_2 \mid \mathbf{if}~c~\mathbf{then~goto}~T \mid \mathbf{halt} 
\]
Intuitively, the assignment statement $\mathtt{x} \leftarrow e$ assigns the
result of evaluating the expression $e$ to the program variable \texttt{x},
while the sampling statement $\mathtt{x} \sim d$ draws a random sample from a
distribution $d$ and assigns the result to the program variable \texttt{x}.
Here, $d$ is a \textit{discrete} distribution expression defining which distribution the sample should be drawn from; these distributions are primitives, corresponding to mathematically standard distributions (e.g., uniform or coin flip distributions).
We interpret distributions as functions from values to the range $[0,1]$ (i.e., the probability of the value occurring in the distribution).
For example, $\mathsf{UniformInt}(\mathtt{1},\mathtt{6})$ is a uniform distribution that selects a random integer between $1$ and $6$ (inclusive).
For presentation purposes, we limit our focus to uniform integer distributions, but our method can support other finite, discrete distributions, such as the Bernoulli or uniform joint distributions.

Control flow is implemented by $S_1 ; S_2$, which sequences two statements $S_1$ and $S_2$, and $\mathrel{\mathbf{if}} c \mathrel{\mathbf{then~goto}} T$, which jumps execution to the statement referenced by the address/label $T$ if the guard $c$ holds, and otherwise falls through to the next instruction. We assume that high-level constructs like loops and regular conditionals are compiled down to conditional branches.
Finally, $\mathbf{halt}$ marks the end of execution.

Our probabilistic symbolic execution algorithm will aim to answer the following question: What is the maximum (or minimum) probability that a program, \textsf{Prog}, terminates in a state where a predicate $\psi$ holds?
Formally, we are interested in computing
\[
  \displaystyle\max_{\vec{x}}~\{ \Pr_\mu[\psi] \mid \mu = \mathsf{Prog}(\vec{x})\}
\]
where $\mu$ is the distribution on outputs obtained by running $\mathsf{Prog}$ on inputs $\vec{x}$.
Since this quantity is difficult to compute in general, we will also be interested in proving bounds:
\[
  \displaystyle\max_{\vec{x}}~\{ \Pr_\mu[\psi] \mid \mu = \mathsf{Prog}(\vec{x})\}
  \bowtie f(\vec{x})
\]
where $\bowtie~\in\{ \leq, \geq, =, \ldots \}$ is a comparison operator and $f$ is a given function of the program inputs.

\subsection{Symbolic Execution}
\label{sec:symex}
\setlength{\textfloatsep}{4pt}
{\footnotesize
\begin{algorithm}[htp]
	\caption{Probabilistic Symbolic Execution Algorithm}
	\label{alg:symb_ex}
	\begin{algorithmic}[1]
		\Procedure{SymbEx} {\textsf{Prog} : \textit{Program}, $\vec{x}$ : \textit{Program Inputs}, \colorbox[RGB]{200, 200, 200}{$\psi$ : \textit{Predicate}}}
		\State{$E_s$} $\leftarrow$ {[]}, $\sigma_{\text{init}} \gets $ \textsc{BindToSymbolic}($\vec{x}$), \colorbox[RGB]{200, 200, 200}{$Enc_\psi \gets 0.0, Enc_\mathtt{v} \gets 0.0$} \label{line:initialization} %\algorithmiccomment{Initialization}
		\Let{$I_{0}$}{\textsc{getStartInstruction}(\textsf{Prog})} \label{line:first_inst}
		\Let{$S_{0}$}{($I_{0}$, $\top$, $\sigma_{\text{init}}$, \colorbox[RGB]{200, 200, 200}{$\emptyset$, 1.0})} \label{line:init_state} %\algorithmiccomment{Empty Initial State}
		\State {$E_s$.\textsc{Push}({$S_{0}$})} %\algorithmiccomment{Start with $S_{0}$ in Execution Stack}
		\While{$E_s \neq \emptyset$} \label{line:execution_loop}
		\Let{($I_{c}, \varphi, \sigma, $\colorbox[RGB]{200, 200, 200}{$P$, $p$})}{$E_s$.\textsc{Pop}( )} \label{line:select_state}
		\If{\textsc{UNSAT}($\varphi$)}
		\State \textbf{continue}
		\EndIf
		\Switch{\textsc{instType}($I_{c}$)} \label{line:inst_type}
		\Case{\fbox{$\mathtt{x} \gets e$}} \label{line:pse_assignment}\label{line:symbex_assign} %\algorithmiccomment{Assignment Instruction} 
		\Let{$I_1$}{\textsc{getNextInstruction}($I_{c}$)}
		\Let{$\sigma[\mathtt{x}]$}{$\sigma[e]$} \label{line:beg_raw_assign}
		\Let{$S_{1}$}{($I_{1}$, $\varphi$, $\sigma$, \colorbox[RGB]{200, 200, 200}{$P$, $p$})} \label{line:end_raw_assign}
		\State {$E_s$.\textsc{Push}({$S_{1}$})}
		\EndCase
		\cCase{\fbox{$\mathtt{x} \sim d$}}{gray}\label{line:pse_sampling}% \algorithmiccomment \colorbox[RGB]{200, 200, 200}{Sampling Instruction} 
		\cState{gray} $I_1$ $\gets$ \textsc{getNextInstruction}($I_{c}$)
		\cState{gray} $\delta \gets \text{Generate a fresh probabilistic symbolic variable}$ \label{line:fresh_psv}
		\cState{gray} {$\sigma_1,~P_1 \gets (\sigma[\mathtt{x} \mapsto \delta], P[\delta \mapsto d])$} \label{line:pse_sym_sample}
		\cState{gray} $S_1 \gets (I_{c}, \varphi, \sigma_1, P_1, p)$ \label{line:end_raw_sample}
		\cState{gray} $E_s$.\textsc{Push}({$S_1$})
    \cEndCase{gray} \label{line:end_pse_sampling}
		\Case{\fbox{\textbf{if} $c$ \textbf{then goto} $T$}} \label{line:symbex_branch} %\algorithmiccomment{Branch Instruction}
		\Let{$c_{sym}$}{$\sigma[c]$} \label{line:guard_convert}
		\State \colorbox[RGB]{200, 200, 200}{$p_t,~p_f \leftarrow~\mathbf{PBranch}~(c_{sym},~\varphi,~P)$} \label{line:pse_sym_branch}
		\Let{$I_{1}$}{\textsc{getInstruction}($T$)} 
		\Let{$I_{2}$} {\textsc{getNextInstruction}($I_c$)}	
		\Let{$S_{t}$}{($I_{1}$, $\varphi \wedge c_{sym}$, $\sigma$,\colorbox[RGB]{200, 200, 200}{$P$, $p \cdot p_t$})} \label{line:symbex_true_state} %\algorithmiccomment \colorbox[RGB]{200, 200, 200}{Multiply with current \textit{path} probability}
		\Let{$S_{f}$}{($I_{2}$, $\varphi \wedge \neg c_{sym}$, $\sigma$,\colorbox[RGB]{200, 200, 200}{$P$, $p \cdot p_f$})} \label{line:symbex_false_state}
		\State {$E_s$.\textsc{Push}({$S_{f}$})}
		\State {$E_s$.\textsc{Push}({$S_{t}$})}% \algorithmiccomment{Start with True State}
		\EndCase \label{line:end_symbex_branch}
		\Case{\fbox{\textbf{halt}}} %\algorithmiccomment{Terminate Instruction}
		\cState{gray} $\psi_{sym}\gets\sigma[\psi]$ \label{line:beg_enc}
		\cIf {\textsc{SAT}($\psi_{sym}$)}{gray}
    \cState{gray} $p_t,p_f \gets \textbf{PBranch}(\psi_{sym}, \varphi, P)$
		\cState{gray} $Enc_\psi \gets Enc_\psi + p$	\label{line:encoding_t}	 
		\cEndIf{gray}
		\cState{gray} $Enc_\mathtt{v} \gets Enc_\mathtt{v} + p \cdot \sigma[\mathtt{v}]$ \label{line:encoding_a}
		\EndCase
		\EndSwitch \label{line:end_execution_loop}
		\EndWhile 
		\cState{gray} $r \gets\textsc{Solve}~(Enc_\psi,~Enc_\mathtt{v})$ \label{line:solve_encoding}
		\cState{gray}\Return{$r$}
		\EndProcedure
	\end{algorithmic}
\end{algorithm}
}

\textit{Symbolic execution}~\citep{king1976} is a program analysis technique that iteratively explores the set of paths through a given program.
Since program paths may depend on potentially unknown input variables, symbolic execution treats each input as a \emph{symbolic} variable.
All program operations are redefined to manipulate these symbolic variables.
We present a high-level description of symbolic execution in \Cref{alg:symb_ex}.
The reader should ignore the portions in \colorbox{gray}{shaded boxes} for now; these are our extensions to handle probabilistic programs.

Symbolic execution tracks program state using the following data structures: the next instruction $I_c$; a path condition $\varphi$, which is a conjunctive formula consisting of the symbolic branch constraints that are true for a particular path; and an expression map $\sigma$, which maps program variables to symbolic expressions over constants and symbolic variables.
To begin, \Cref{alg:symb_ex} initializes the execution stack $E_s$ with an empty list and the expression map $\sigma$ with fresh symbolic variables for each \textit{program input} in the input vector $\vec{x}$ by calling \textsc{BindToSymbolic}, as shown on \cref{line:initialization}.
$I_0$ is then initialized with the first instruction of the program on \cref{line:first_inst}, and the initial state tuple $S_0$ is created on \cref{line:init_state} and appended to $E_s$.

The core of \Cref{alg:symb_ex} iterates through all reachable states until $E_s$ is exhausted (\crefrange{line:execution_loop}{line:end_execution_loop}).
For each iteration, we pop a state from $E_s$ (\cref{line:select_state}), and, if the selected path condition $\varphi$ is feasible, the target instruction is analyzed based on the instruction's form: \cref{line:pse_assignment} for assignment statements, and \cref{line:symbex_branch} for branch statements.

For an assignment statement $\mathtt{x} \leftarrow e$ where $e$ is a program expression, we first convert $e$ into a \textit{symbolic} expression using the expression map $\sigma$.
We use the notation $\sigma[e]$ to denote the symbolic expression $e_{sym}$ constructed by replacing each program variable in $e$ with its corresponding symbolic expression in $\sigma$.
For a branch statement \textbf{if} $c$ \textbf{then goto} $T$, \Cref{alg:symb_ex} \textit{forks} by constructing the corresponding path conditions for both branches, pushing the symbolic branch guard $c_{sym}$ in the positive and negative form (for the true and false branches, respectively) to $E_s$.
Finally, symbolic execution along a path terminates at a \textbf{halt} instruction, and information about the final symbolic state is collected for subsequent analysis.
\subsection{Probabilistic Symbolic Execution}
\label{sec:pse_intro}

To support random sampling instructions, our algorithm distinguishes between two categories of symbolic variables:
\begin{itemize}
\item \textit{Universal symbolic variables}: These are identical to those in traditional symbolic execution and are used to model unknown program inputs.
\item \textit{Probabilistic symbolic variables}: These model a random sample from a known distribution.
  Fresh probabilistic symbolic variables are created for each sampling statement to represent the result of a random draw.
\end{itemize}
Symbolic execution becomes more challenging when both universal and probabilistic variables are present as branches are taken with some quantitative probability dependent on the results of the random samples and the universal symbolic variables.
To track information about probabilistic states, we use the following data structures:
\begin{itemize}
\item \textit{Distribution map} ($P$): Analogous to $\sigma$, this map from probabilistic symbolic variables to distribution expressions tracks the distribution from which a probabilistic symbolic variable was originally sampled from.
\item \textit{Path probability} ($p$): For each path, we maintain a path probability expression $p$ over the universal symbolic variables.
\end{itemize}

We can now define how symbolic execution handles sampling statements and how to compute the probability of taking either side of a branch statement. 

\begin{paragraph}{Sampling (\Crefrange{line:pse_sampling}{line:end_pse_sampling})}
	On encountering a sampling statement $\mathtt{x} \sim d$ (\cref{line:pse_sampling} of \Cref{alg:symb_ex}), we generate a fresh probabilistic symbolic variable $\delta$ to model the random value sampled from the distribution described by the distribution expression $d$.
  We update the expression map $\sigma$ to map the program variable \texttt{x} to $\delta$ and record $\delta$'s distribution in the distribution map $P$.
\end{paragraph}

\begin{algorithm}[t]
	\caption{Probabilistic Branch Algorithm}
	\label{alg:branch}
	\begin{algorithmic}[1]
		\Function{PBranch}{$c_{sym}, \varphi, P$}
		\Let{$(\delta_1,\ldots,\delta_n)$}{$\dom{P}$}
		\Let{$(d_1,\ldots,d_n)$}{$(P[\delta_1],\ldots,P[\delta_n])$} %\Comment{Retrieve all distribution expressions}
    \Let{$\mathcal{D}$}{$\dom{d_1} \times \cdots \times \dom{d_n}$}\label{line:D}
		\Let{$p_c$}{$\frac{\displaystyle\smashoperator[r]{\sum_{(v_1,\ldots,v_n) \in \mathcal{D}}} [(c_{sym} \wedge \varphi)\{\delta_1 \mapsto v_1,\ldots,\delta_n \mapsto v_n\}]}{\displaystyle\smashoperator[r]{\sum_{(v_1,\ldots,v_n) \in \mathcal{D}}}[\varphi\{\delta_1 \mapsto v_1,\ldots,\delta_n \mapsto v_n\}]}$}\label{line:branch_prob_compute}
    \Let{$p_c'$}{$1-p_c$}
		\State\Return{$(p_c, p_c')$}
		\EndFunction
	\end{algorithmic}
\end{algorithm}
\begin{paragraph}{Branches (\Crefrange{line:symbex_branch}{line:end_symbex_branch})}
	Similar to traditional symbolic execution, upon reaching a branch statement our algorithm pushes the symbolic states corresponding to the true and false branches to $E_s$.
  Then, \textsc{PBranch} (\Cref{alg:branch}) computes probability \textit{expressions} for the true and false branches.
  \textsc{PBranch} takes the following as inputs: an expression $c_{sym}$, which is the symbolic equivalent of the program guard expression $c$; the current path condition $\varphi$; and the current distribution map $P$. 
  It returns two probability expressions $(p_c,p_c')$ representing the probabilities of taking the true and false branches respectively.
  For simplicity, we present this subroutine in the simplified setting where all sampling instructions are from discrete uniform distributions; handling general weighted distributions is not much more complicated and is supported in our implementation.
	
	To gain intuition for how we compute path probabilities, consider the case where the guard condition contains \textit{only} universal symbolic variables.
  Let $c_{sym} = \sigma[c]$ be the equivalent symbolic expression for the guard $c$ and assume that $c_{sym}$ does not mention any probabilistic symbolic variables.
  In this case, the branch that is taken is entirely determined by the universal symbolic variables. 
	For a fixed setting of the universal symbolic variables, one branch must have a probability of 1 and the other 0.
  Put another way, either $c_{sym}$ or $\neg c_{sym}$ is satisfiable but never both.
	We use Iverson brackets to represent the symbolic probability of taking the true and false branches, namely $[c_{sym}]$ and $[\neg c_{sym}]$ respectively.

	For probabilistic branches---guards which contain probabilistic symbolic variables---computing the branch probability is more involved.
  We give an intuitive justification here and defer the proof of correctness to \Cref{sec:formalization}.
  When~\Cref{alg:symb_ex} encounters a branch statement, recall that $\varphi$ contains the necessary constraints on the symbolic variables that must hold in order to reach the branch statement in question.
  Therefore, we can view the probability of taking either branch as a \textit{conditional probability}: the probability that $c_{sym}$ holds assuming that $\varphi$ is satisfied.
	Formally, we want to compute $\Pr[c_{sym} \mid \varphi] = {\Pr[c_{sym}\wedge \varphi]}/{\Pr[\varphi]}$.

  \Cref{alg:branch} builds $p_c$, a formula for $\Pr[c_{sym} \mid \varphi]$.
	Note that each probabilistic symbolic variable $\delta$ is mapped to precisely one distribution, $d$.
	Assuming there are $n$ probabilistic symbolic variables $\delta_1,\ldots,\delta_n$ and $n$ distributions $d_1,\ldots,d_n$, the set of all possible values for $\delta_1,\ldots,\delta_n$ is $\mathcal{D}$ (\cref{line:D}).
  Using our simplifying assumption that all probabilistic symbolic variables are drawn from uniform distributions, each assignment $(v_1, \dots, v_n)$ for $(\delta_1, \dots, \delta_n)$ has equal probability.
  Thus, instead of computing the conditional probability as a ratio of probabilities, we calculate the ratio of the number of assignments from $\mathcal{D}$ which satisfy $c_{sym} \wedge \varphi$ and $\varphi$ as shown on \cref{line:branch_prob_compute} of \Cref{alg:branch}.
	We stress that $p_c$ is a \textit{symbolic expression} possibly containing universal symbolic variables.
	Finally, we use the fact that the sum of the conditional probabilities of the branch outcomes is 1 to get the probability of the false branch.

  We can now compute the probability of an entire path $\varphi \wedge c_{sym}$ (or $\varphi \wedge \neg c_{sym}$). 
  Since $p_t$ is the \textit{conditional} probability of $c_{sym}$ being true (\cref{line:pse_sym_branch}) and $p$ is the probability of $\varphi$, we can use the standard rule $\Pr[A \wedge B] = \Pr[A \mid B] \Pr[B]$, where $A$ and $B$ are any two events, to conclude that $\Pr[\varphi \wedge c_{sym}] = p \cdot p_t$.
  The same can be done for $\varphi \wedge \neg c_{sym}$ as shown on \cref{line:symbex_true_state,line:symbex_false_state}.
\end{paragraph}
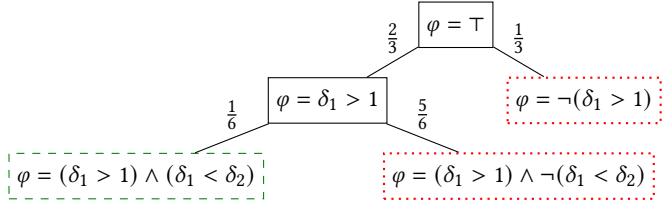
\begin{figure}[t]
	\centering
	\begin{subfigure}{.3\textwidth}
		\centering
		{\small
			\begin{algorithmic}[1]
				\State{$\mathtt{x}\sim\mathsf{UniformInt}(\mathtt{1},\mathtt{3})$}
				\State{$\mathtt{y}\sim\mathsf{UniformInt}(\mathtt{1},\mathtt{3})$}
				\If{$\mathtt{x} > \mathtt{1}$}\label{line:first_cond}
				\If{$\mathtt{x} < \mathtt{y}$}\label{line:second_cond}
				\State\Return{$\mathtt{True}$}
				\EndIf
				\Else
				\State\Return{$\mathtt{False}$}
				\EndIf
			\end{algorithmic}
		}
		\caption{Example program.}
		\label{fig:exam_program}
	\end{subfigure}
	\begin{subfigure}{.69\textwidth}
		\centering
		{\small
			\begin{forest}baseline,for tree=draw, for tree={s sep=5em}, grow={south},
				[{$\varphi=\top$},align=center, base=bottom
				[{$\varphi = \delta_1 > 1$}, align=center, base=bottom, edge label={node [midway,above] {$\frac{2}{3}$} }
				[{$\varphi = (\delta_1 > 1) \wedge (\delta_1 < \delta_2)$}, align=center, base=bottom, edge label={node [midway,above] {$\frac{1}{6}$} }, node options={dashed}, draw=darkgreen]
				[{$\varphi = (\delta_1 > 1) \wedge \neg(\delta_1 < \delta_2)$}, align=center, base=bottom, edge label={node [midway,above] {$\frac{5}{6}$}}, node options={dotted,thick}, draw=red]]
				[{$\varphi = \neg(\delta_1 > 1)$}, align=center, base=bottom, edge label={node [midway,above] {$\frac{1}{3}$} }, node options={dotted,thick}, draw=red]]
			\end{forest}
		}
		\caption{Symbolic execution tree.}
		\label{fig:exam_tree}
		
	\end{subfigure}
	\caption{A sample, randomized program and its associated symbolic execution tree annotated with probabilities.}
	\label{fig:example}
\end{figure}
\begin{paragraph}{Example}
	Consider the program in \Cref{fig:exam_program}, and suppose we wish to calculate the probability of the program returning \texttt{True}.
	Following~\Cref{alg:symb_ex}, we generate fresh probabilistic symbolic variables $\delta_1$ and $\delta_2$ for $\mathtt{x}$ and $\mathtt{y}$ respectively, and record their originating distribution $\mathcal{U}\{1,3\}$.
	Using \Cref{alg:branch} to process \cref{line:second_cond} of \Cref{fig:exam_program}, note that $\mathcal{D} = \{1,2,3\} \times \{1,2,3\}$ and 
	\begin{align*}
		p_c &= \Pr[\delta_1 < \delta_2 \mid \delta_1 > 1] = \frac{\Pr[(\delta_1 < \delta_2) \wedge (\delta_1 > 1)]}{\Pr[\delta_1 > 1]}\\
                                   &= \frac{\displaystyle\sum_{(v_1,v_2) \in \mathcal{D}} [(\delta_1 < \delta_2) \wedge (\delta_1 > 1)\{\delta_1 \mapsto v_1, \delta_2 \mapsto v_2\}]}{\displaystyle\sum_{(v_1,v_2) \in \mathcal{D}} [(\delta_1 > 1)\{\delta_1 \mapsto v_1\}]} = \frac{1}{6}.\\
	\end{align*}
  Therefore, the probability of taking the true branch of the inner \textbf{if} condition is only $1/6$, which makes sense as \texttt{x} is restricted to be either $2$ or $3$, but \texttt{y} can be either 1,2, or 3; however, only one setting of \texttt{x} and \texttt{y} will satisfy $(\mathtt{x} > 1) \wedge (\mathtt{x} < \mathtt{y})$, namely $\mathtt{x} = 2$ and $\mathtt{y} = 3$.
  We can use ~\Cref{alg:branch} on the remaining branches in \Cref{fig:exam_program} to yield the fully annotated symbolic execution tree in \Cref{fig:exam_tree}.
\end{paragraph}

\subsection{Query Generation}
\label{sec:query_gen}

Recall that our original goal was to prove bounds on the probability that a program \textsf{Prog} terminates in a state where an arbitrary predicate $\psi$ holds.
We frame this question as a logical query using the symbolic state found in the execution tree.
In general, our queries are of the following form:
\begin{equation}
  \label{eq:query}
	\forall \alpha_1,\ldots,\alpha_n~.~Enc_\psi \bowtie f(\alpha_1,\ldots,\alpha_n)
\end{equation}
where $\alpha_1,\ldots,\alpha_n$ are \textit{all} the universal symbolic variables found in \textsf{Prog}, $Enc_\psi$ is an expression representing the probability that $\psi$ holds in \textsf{Prog}, $\bowtie$ is a binary relation (e.g., $>,<,\geq,\leq$), and $f$ is some lower/upper bound function of the universal symbolic variables  that we want to prove \textsf{Prog} does not violate.
Another way of interpreting $f$ is a function which computes the \textit{desired} probability of $\psi$ occurring in \textsf{Prog} (e.g., a maximum acceptable error rate) given a setting of the universal symbolic variables.

We construct the expression $Enc_\psi$ on~\crefrange{line:beg_enc}{line:encoding_a} of~\Cref{alg:symb_ex}.
Once a path reaches a \textbf{halt} statement, $\psi$ is converted into its corresponding symbolic expression $\psi_{sym}$ using the expression map $\sigma$.
If $\psi_{sym}$ is satisfiable, we use~\Cref{alg:branch} to calculate the probability of $\psi_{sym}$ being true given $\varphi$, denoted by $p_\psi$.
We then add $p\cdot p_\psi$, the probability that we traversed $\varphi$ and $\psi_{sym}$ is true along path $\varphi$, to the current expression $Enc_\psi$. 
Note that some settings of the universal symbolic variables, $\alpha_1,\ldots,\alpha_n$, may not be reachable along path $\varphi$.
Since \Cref{alg:branch} embeds the path condition $\varphi$ into the probability expressions $(p_t,p_f)$, if $\varphi$ is falsified, then the probability of $\varphi$ is 0.
Once all reachable states in $E_s$ are exhausted, the total probability that $\psi$ holds for all reachable paths in \textsf{Prog} is $Enc_\psi$.
\textsc{Solve} is then called on \cref{line:solve_encoding}, which takes in $Enc_\psi$, constructs (\ref{eq:query}), and calls and automated decision procedure, such as an SMT solver, to answer the query.

The meaning of (\ref{eq:query}) depends on the termination condition of \Cref{alg:symb_ex}.
If all reachable paths are explored, then $Enc_\psi$ represents the exact probability that $\psi$ holds in the output distribution, parameterized by the unknown input variables.
However, we will not always be able to explore all paths.
In this case, $Enc_\psi$ may not be equal to the true probability of $\psi$, but it is always a sound \emph{lower bound} of the true probability.
While this may not be enough to verify typical correctness properties, it
can be useful for \emph{refuting} certain kinds of correctness properties. For
instance, if we want to check that the probability of $\psi$ is at most
$f(\alpha_1,\ldots,\alpha_n)$ for all settings of the input variables, and
\Cref{alg:symb_ex} produces a probability mass $Enc_\psi$ that exceeds
$f(\alpha_1,\ldots,\alpha_n)$ for some setting of the input variables, then the
original upper bound cannot hold.

\begin{paragraph}{Example}
  In~\Cref{sec:overview} we introduced the Monty Hall problem.
  We wanted to prove that if a contestant always switched doors the probability of winning the car is $2/3$.
  We can frame this property as a query of the form defined in (\ref{eq:query}).
  If $\alpha$ is the universal symbolic variable corresponding to \texttt{choice}, and $\beta$ is the universal symbolic variable corresponding to \texttt{door\_switch}, let $\psi \triangleq \beta \wedge \mathsf{monty\_hall}(\alpha,\beta)$, $\bowtie~\triangleq~=$, and $f \triangleq 2/3$.
  Then, our full query is $\forall \alpha,\beta~.~Enc_\psi = 2/3$.
  \Cref{alg:symb_ex} then constructs the tree found in~\Cref{fig:montyhall_tree_prob} and the probability expression for each path.
  If $\varphi_i$ is the path condition for the path ending in the $i$th-left-most leaf of~\Cref{fig:montyhall_tree_prob} and $p_i$ is the probability expression for $\varphi_i$ as created by~\Cref{alg:symb_ex}, then $Enc_\psi = p_3 + p_5 + p_7$.
  Lastly, \textsc{Solve} takes the query $\forall \alpha,\beta~.~p_3 + p_5 + p_7 = 2/3$ and uses an external automated decision procedure to prove for each possible setting of $\alpha$ and $\beta$ that $p_3 + p_5 + p_7 = 2/3$.
\end{paragraph}

\subsubsection{Expected Value Queries}
\label{sec:expected_value}

In addition to probability-bound queries, our technique can also prove bounds on the expected value of any variable with only minor modifications.
Suppose that we want to find the expected value of an arbitrary program variable named \texttt{v}, denoted $\E{\mathtt{v}}$.
Since each path has a probability $p_i$ and an expression map $\sigma_i$, we can construct a symbolic expression that computes $\E{\mathtt{v}}$, namely $\sum_{i=1}^n p_i \cdot \sigma_i[\mathtt{v}]$, as $\sigma_i[\mathtt{v}]$ is the symbolic expression which is the value of $\mathtt{v}$ along the $i^\text{th}$ path (\cref{line:encoding_a} of~\Cref{alg:symb_ex}).
Then, for \textsc{Solve} on~\cref{line:solve_encoding}, we use an automated tool to solve the query
\[
	\forall \alpha_1,\ldots,\alpha_n~.~Enc_\mathtt{v} \bowtie f(\alpha_1,\ldots,\alpha_n)
\]
where $\alpha_1,\ldots,\alpha_n$ are the universal symbolic variables representing the input variables of \textsf{Prog}.

\subsection{Formalization}
\label{sec:formalization}

In this section, we state a series of soundness theorems for~\Cref{alg:symb_ex}.
\iftoggle{SUPPLEMENTAL}{
    Program semantics are presented in~\Cref{sec:semantics} and complete proofs for each of the theorems are presented in~\Cref{sec:supplemental_formalism}.
}{}

\subsubsection{Soundness Theorems}
\label{sec:proofs}

Note that, in \Cref{alg:symb_ex}, we represent the symbolic program state as the five-tuple $S = (I, \varphi, \sigma, P, p)$; however, in our abstraction $R = (\varphi, \sigma,P)$, we omit $I$ and $p$.
We remove $I$ as we prove soundness locally for each type of statement in \textbf{pWhile} rather than prove soundness for the entirety of \Cref{alg:symb_ex}, so there is no need to track what the current instruction is.
We begin by proving that $p$ can be completely reconstructed using $\varphi$ and $P$, so there is no need to track it separately.
\begin{restatable}[Equivalency between $p$ and $(\varphi,P)$]{lemma}{equivpthm}
  For all program statements, $S$ in \textbf{pWhile}, \Cref{alg:symb_ex} maintains
  \[
    p = \frac{\displaystyle\sum_{(v_1,\ldots,v_n) \in \mathcal{D}} [ \varphi\{\delta_1 \mapsto v_1, \ldots, \delta_n \mapsto v_n\} ]}{\abs{\mathcal{D}}}
  \]
  where $\{\delta_1,\ldots,\delta_n\} = \dom{P}$ is the set of all the probabilistic symbolic variables in the program, and $\mathcal{D} = \dom{P[\delta_1]} \times \cdots \times \dom{P[\delta_n]}$ is the set of all assignments to the probabilistic symbolic variables $\delta_1,\ldots,\delta_n$.
\end{restatable}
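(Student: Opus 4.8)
The plan is to prove the displayed identity as a loop invariant of \Cref{alg:symb_ex}: every state tuple $S = (I, \varphi, \sigma, P, p)$ that is pushed onto $E_s$ satisfies it. Following the paper's convention of arguing locally, the induction reduces to a base case plus one preservation lemma per statement form, where preservation means that if the popped state satisfies the invariant then so does every successor state the algorithm pushes. For the base case, the initial state on \cref{line:init_state} has $\varphi = \top$, $P = \emptyset$, and $p = 1.0$; with $n = 0$ the product $\mathcal{D}$ is the one-element set containing the empty tuple, so $\abs{\mathcal{D}} = 1$, the numerator is $[\top] = 1$, and the formula yields $1/1 = 1 = p$.

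For the inductive step I would dispatch on \textsc{instType}$(I_c)$. The assignment case $\mathtt{x}\gets e$ and the \textbf{halt} case are immediate: assignment copies $\varphi$, $P$, and $p$ unchanged into its successor, and \textbf{halt} pushes no successor, so the invariant survives trivially (sequencing $S_1;S_2$ is absorbed into control flow and needs no separate treatment). The sampling case $\mathtt{x}\sim d$ is the first with real content: it leaves $\varphi$ and $p$ untouched while extending $P$ with a \emph{fresh} variable $\delta \mapsto d$, so $\mathcal{D}$ grows to $\mathcal{D}' = \mathcal{D} \times \dom{d}$. Because $\delta$ is fresh it does not occur in $\varphi$, hence the truth value of $\varphi$ under any extended assignment is independent of the value chosen for $\delta$; the numerator over $\mathcal{D}'$ is therefore exactly $\abs{\dom{d}}$ copies of the numerator over $\mathcal{D}$, while $\abs{\mathcal{D}'} = \abs{\dom{d}} \cdot \abs{\mathcal{D}}$, so the ratio is unchanged and still equals $p$ by the induction hypothesis.

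The branch case is the crux. Here $P$ and hence $\mathcal{D}$ are unchanged, but the true successor carries $\varphi \wedge c_{sym}$ with probability $p \cdot p_t$ and the false successor carries $\varphi \wedge \neg c_{sym}$ with probability $p \cdot p_f$, where $(p_t,p_f)$ are returned by \textsc{PBranch}. For the true branch, substituting the induction hypothesis for $p$ and the definition of $p_t = p_c$ from \cref{line:branch_prob_compute} makes the common factor $\sum_{\mathcal{D}}[\varphi\{\cdots\}]$ cancel, leaving precisely $\big(\sum_{\mathcal{D}}[(c_{sym}\wedge\varphi)\{\cdots\}]\big)/\abs{\mathcal{D}}$, which is the invariant for $\varphi \wedge c_{sym}$. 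For the false branch I would use $p_f = 1 - p_c$ together with the pointwise Iverson identity $[\varphi] = [\varphi \wedge c_{sym}] + [\varphi \wedge \neg c_{sym}]$, valid for every assignment since $c_{sym}$ is two-valued; summing over $\mathcal{D}$ and rearranging gives $p \cdot p_f = \big(\sum_{\mathcal{D}}[(\neg c_{sym}\wedge\varphi)\{\cdots\}]\big)/\abs{\mathcal{D}}$, the invariant for $\varphi \wedge \neg c_{sym}$.

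The main obstacle I anticipate is not any single algebraic step but the clean handling of the division in the branch case: $p_c$ is a ratio whose denominator $\sum_{\mathcal{D}}[\varphi\{\cdots\}]$ can vanish for particular settings of the universal variables, so the cancellation in $p \cdot p_t$ must be justified as an identity of symbolic expressions rather than a pointwise numerical one. I would resolve this by restricting attention to the universal settings for which $\varphi$ is feasible (the only states the algorithm explores, since it discards \textsc{UNSAT} path conditions) or equivalently by taking the numerator form $\big(\sum_{\mathcal{D}}[(c_{sym}\wedge\varphi)\{\cdots\}]\big)/\abs{\mathcal{D}}$ as the canonical representative of the product. The only other point needing care is the explicit appeal to freshness of $\delta$ in the sampling case, which is otherwise routine.
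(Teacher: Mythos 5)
Your proposal is correct and follows essentially the same route as the paper's proof: a preservation argument by case analysis on the statement form, with the sampling case handled via freshness of $\delta$ and the branch case via cancellation of the common factor $\sum_{\mathcal{D}}[\varphi\{\cdots\}]$ in $p \cdot p_t$. If anything you are slightly more thorough than the paper, which omits an explicit base case, dismisses the false branch as ``symmetric'' where your Iverson identity $[\varphi] = [\varphi \wedge c_{sym}] + [\varphi \wedge \neg c_{sym}]$ is the honest justification, and never addresses the vanishing-denominator issue you correctly flag and resolve.
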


We prove soundness for \Cref{alg:symb_ex} by showing that our abstraction $R$ respects the semantics for each of the three main types of statements in \textbf{pWhile}, namely \textit{assignment}, \textit{sampling}, and \textit{branching}.

\begin{restatable}[Correctness of Assignment (\Crefrange{line:beg_raw_assign}{line:end_raw_assign})]{theorem}{soundnessAssignment}
  If a distribution $\mu$ satisfies $R=(\varphi,\sigma,P)$ and~\Cref{alg:symb_ex} reaches an assignment statement of the form $\mathtt{x} \gets e$, then $\mu_{\mathtt{x}\gets e}$, the distribution of program memories after executing $\mathtt{x} \gets e$, satisfies $R'=(\varphi,\sigma',P)$.
\end{restatable}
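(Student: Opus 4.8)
The plan is to unfold the definition of the satisfaction relation between a concrete distribution and a symbolic abstraction $R$, push the concrete semantics of assignment through it, and reduce the whole statement to a single compositionality (substitution) lemma for expression evaluation. The guiding observation is that assignment introduces no new randomness and touches neither the path condition nor the distribution map: the algorithm leaves $\varphi$ and $P$ unchanged and only updates $\sigma$ to $\sigma' = \sigma[\mathtt{x} \mapsto \sigma[e]]$. Hence the only content of the theorem is that this updated expression map faithfully records the effect of the deterministic memory update performed by $\mathtt{x} \gets e$.

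First I would recall the meaning of ``$\mu$ satisfies $(\varphi,\sigma,P)$''. Fixing the concrete input $\vec{x}$ (so $\mu = \mathsf{Prog}(\vec{x})$), write $\deno{\sigma}_\rho$ for the memory obtained by instantiating the universal symbolic variables with $\vec{x}$ and the probabilistic symbolic variables with an assignment $\rho \in \mathcal{D}$, then evaluating each entry of $\sigma$. The relation $\mu \models (\varphi,\sigma,P)$ means $\mu(m) = \sum_{\rho \models \varphi,\ \deno{\sigma}_\rho = m} \Pr_P[\rho]$, where $\Pr_P[\rho]$ is the mass $P$ assigns to the draw $\rho$ (uniform over $\mathcal{D}$ under the simplifying assumption). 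I would then invoke the concrete semantics of $\mathtt{x} \gets e$: the post-distribution $\mu_{\mathtt{x}\gets e}$ is the pushforward of $\mu$ along the deterministic update $m \mapsto m[\mathtt{x} \mapsto \deno{e}_m]$, so $\mu_{\mathtt{x}\gets e}(m') = \sum_{m :\ m[\mathtt{x}\mapsto \deno{e}_m] = m'} \mu(m)$.

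The conceptual heart is a substitution lemma asserting that symbolic substitution commutes with evaluation: for every program expression $e$ and assignment $\rho$,
\[
  \deno{\sigma[e]}_\rho = \deno{e}_{\deno{\sigma}_\rho}.
\]
This is a routine structural induction on $e$ (constants and variable lookups are immediate, and operator cases follow because $\deno{\cdot}$ is defined homomorphically on operators). Granting it, I would show the pointwise identity $\deno{\sigma'}_\rho = \deno{\sigma}_\rho[\mathtt{x} \mapsto \deno{e}_{\deno{\sigma}_\rho}]$: for $\mathtt{y} \neq \mathtt{x}$ both sides read off $\sigma[\mathtt{y}]$ at $\rho$, and for $\mathtt{x}$ the left side is $\deno{\sigma[e]}_\rho$, which equals $\deno{e}_{\deno{\sigma}_\rho}$ by the lemma. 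In words, evaluating the \emph{updated symbolic map} at $\rho$ coincides with first evaluating $\sigma$ at $\rho$ and then performing the \emph{concrete} assignment.

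Finally I would chain these facts. Substituting the satisfaction hypothesis for $\mu$ into the pushforward formula and exchanging the order of summation gives $\mu_{\mathtt{x}\gets e}(m') = \sum_{\rho \models \varphi} \Pr_P[\rho]\cdot[\,\deno{\sigma}_\rho[\mathtt{x}\mapsto \deno{e}_{\deno{\sigma}_\rho}] = m'\,]$; by the previous paragraph the bracketed condition is exactly $\deno{\sigma'}_\rho = m'$, yielding $\mu_{\mathtt{x}\gets e}(m') = \sum_{\rho \models \varphi,\ \deno{\sigma'}_\rho = m'} \Pr_P[\rho]$, which is precisely the condition for $\mu_{\mathtt{x}\gets e} \models (\varphi,\sigma',P)$. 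I expect the main subtlety to lie not in the substitution lemma (a standard induction) but in this reindexing step: because several assignments $\rho$ can collapse to the same memory $m$, the sum over memories must be re-expanded into a sum over $\rho$ before the lemma applies, and one must check that the weights $\Pr_P[\rho]$ carry over verbatim since assignment leaves $P$ untouched. Getting the satisfaction relation stated precisely enough to make this bookkeeping go through cleanly is the real load-bearing part of the argument.
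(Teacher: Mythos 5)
Your proposal is correct and follows essentially the same route as the paper's proof: both reduce the theorem to the pointwise identity $\mathsf{toMem}(\sigma',a_{\forall},a_p) = \mathsf{assign}_{\mathtt{x}\gets e}(\mathsf{toMem}(\sigma,a_{\forall},a_p))$ via the substitution/evaluation commutation and a case split on whether the variable is $\mathtt{x}$, and both handle the reindexing over probabilistic assignments (the paper phrases it as a double containment of disjoint $\mathsf{fromMem}$ sets, you as an exchange of weighted sums). The only cosmetic difference is that the paper's satisfaction relation uses the normalized $\nu_\varphi$ rather than raw masses, but since $\varphi$ and $P$ are untouched by assignment the normalizer carries over verbatim, exactly as your closing remark anticipates.
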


\begin{restatable}[Correctness of Sampling (\Crefrange{line:pse_sym_sample}{line:end_raw_sample})]{theorem}{soundnessSampling}
  If a distribution $\mu$ satisfies $R = (\varphi,\sigma,P)$ and~\Cref{alg:symb_ex} reaches a sampling statement of the form $\mathtt{x} \sim d$, then $\mu_{\mathtt{x} \sim d}$, the distribution of program memories after executing $\mathtt{x} \sim d$, satisfies $R'=(\varphi,\sigma',P')$.
\end{restatable}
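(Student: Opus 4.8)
The plan is to unfold the semantic satisfaction relation $\mu \models R$ on both sides and verify that executing $\mathtt{x}\sim d$ carries the denotation of $R=(\varphi,\sigma,P)$ exactly onto the denotation of $R'=(\varphi,\sigma[\mathtt{x}\mapsto\delta],P[\delta\mapsto d])$. Let $\delta_1,\ldots,\delta_n$ be the probabilistic symbolic variables in $\dom{P}$ with distributions $d_1,\ldots,d_n$, let $\mathcal{D}=\dom{P[\delta_1]}\times\cdots\times\dom{P[\delta_n]}$, and fix an arbitrary valuation $\eta$ of the universal symbolic variables. I would first state the denotation explicitly: $\mu$ assigns to each concrete memory $m$ the weight $\sum_{\vec v\in\mathcal{D}}[\varphi]\,[\sigma = m]\prod_i d_i(v_i)$, where $[\varphi]$ and $[\sigma=m]$ are evaluated under $\eta$ together with the assignment $\vec v=(v_1,\ldots,v_n)$ to $(\delta_1,\ldots,\delta_n)$. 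This sub-distribution is the object the argument manipulates.

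Next I would pin down the operational effect of sampling: $\mu_{\mathtt{x}\sim d}$ places weight $\mu(m)\cdot d(v)$ on $m[\mathtt{x}\mapsto v]$ for each $m$ in the support of $\mu$ and each value $v$. Expanding the denotation of $R'$, the probabilistic variable set becomes $\{\delta_1,\ldots,\delta_n,\delta\}$, so $\mathcal{D}$ is replaced by $\mathcal{D}\times\dom{d}$, the product picks up a factor $d(v)$ for $\delta\mapsto v$, and $\sigma'=\sigma[\mathtt{x}\mapsto\delta]$ evaluates to the old memory with its $\mathtt{x}$-slot overwritten by $v$. The core computation is then to factor the sum over $(\vec v,v)$: the overwrite constraint forces $v=m'(\mathtt{x})$, collapsing the sum over $v$ to the single term $d(m'(\mathtt{x}))$ multiplied by the marginal of $\mu$ over memories that agree with $m'$ off $\mathtt{x}$. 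That product is exactly the operational formula for $\mu_{\mathtt{x}\sim d}(m')$, which closes the theorem. I would also invoke the preceding lemma to justify that dropping $p$ from the abstraction loses no information, so that matching $(\varphi,\sigma',P')$ suffices.

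The main obstacle, and the only step carrying real content, is the \emph{freshness} of $\delta$. The factorization works only because $\delta$ appears neither in $\varphi$ nor in the range of $\sigma$: this is what makes $[\varphi]$ independent of the new draw (so $\varphi$ may stay syntactically unchanged) and lets the factor $d(v)$ separate cleanly from $\prod_i d_i(v_i)$, encoding the probabilistic independence of the fresh sample from all prior randomness. I would therefore make the freshness invariant explicit --- it is guaranteed by the fresh-variable generation step on the sampling line of Algorithm~\ref{alg:symb_ex} --- and use it both to reindex $\mathcal{D}$ to $\mathcal{D}\times\dom{d}$ and to push the $d(v)$ factor through the conditioning. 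Everything else, including that $P[\delta\mapsto d]$ contributes precisely the weight $d(v)$, is routine bookkeeping paralleling the already-established assignment case.
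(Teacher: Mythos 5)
Your proposal is correct and follows essentially the same route as the paper's proof: unfolding the satisfaction relation, using the freshness of $\delta$ to factor the new sample's weight $d(v)$ out of the $\varphi$-conditioned distribution (the paper isolates this as a standalone lemma, $\nu'_{\varphi}(a_{\forall}, a_p \cup \{\delta\mapsto v\}) = \nu_{\varphi}(a_{\forall},a_p)\cdot(\deno{d}a_{\forall})(v)$), and then matching pre-images of memories under $\sigma'$ against the union over $\mathsf{desample}(m)$ of pre-images under $\sigma$, which is exactly your collapse of the sum over $v$ to the single term $d(m'(\mathtt{x}))$ times the marginal over memories agreeing with $m'$ off $\mathtt{x}$. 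The only cosmetic difference is that the paper carries out that last step as an explicit double-containment argument on the sets $\mathsf{fromMem}(\sigma',a_{\forall},m)$ and $\bigcup_{m'\in\mathsf{desample}(m)}\mathsf{fromMem}(\sigma,a_{\forall},m')$ rather than as a reindexing of the sum.
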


\begin{restatable}[Correctness of Branching (\Cref{line:pse_sym_branch,line:symbex_true_state,line:symbex_false_state})]{theorem}{soundnessBranching}\label{thm:branch}
  If a distribution $\mu$ satisfies $R=(\varphi,\sigma,P)$, and~\Cref{alg:symb_ex} encounters a branching statement of the form $\mathbf{if}~c~\mathbf{then~goto}~T$, then $\mu_c$ satisfies $R_{true} = (\varphi \wedge c_{sym}, \sigma, P)$ and $\mu_{\neg c}$ satisfies $R_{false} = (\varphi \wedge \neg c_{sym}, \sigma, P)$ if all the distributions in $P$ are uniform. Additionally, for all $a_{\forall} \in \mathit{A}_{\forall}$,
\[
    \sum_{m \in \{m \in \mathit{M} \mid \deno{c}m = 1\}}\mu(a_{\forall},m) = \deno{p_c} a_{\forall}\qquad\text{and}\quad\sum_{m \in \{m \in \mathit{M} \mid \deno{c}m = 0\}}\mu(a_{\forall},m) = \deno{p_c'} a_{\forall}
\]
  where $p_c' = 1 - p_c$.
\end{restatable}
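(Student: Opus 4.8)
The plan is to split the statement into its two halves: the \emph{structural} claim that the conditioned distributions $\mu_c$ and $\mu_{\neg c}$ satisfy the refined abstractions $R_{true}$ and $R_{false}$, and the \emph{quantitative} claim relating the mass placed on each branch to the expressions $p_c, p_c'$ returned by \textsc{PBranch}. I would work throughout from the satisfaction relation set up in the semantics section: for a feasible path condition, $\mu$ is the family of distributions indexed by $a_\forall \in A_\forall$ that, for each $a_\forall$, is the \emph{normalized} conditional distribution on memories $m$ obtained by drawing the probabilistic variables $\vec v = (v_1,\dots,v_n)$ uniformly from $\mathcal{D}$, evaluating the program variables through $\sigma$, and conditioning on $\varphi$ holding under the substitution $\{\vec v\} \triangleq \{\delta_1 \mapsto v_1, \dots, \delta_n \mapsto v_n\}$. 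In particular $\sum_{m} \mu(a_\forall,m) = 1$, which is consistent with the fact that the two branch masses in the statement sum to one; the path probability $p$ is not part of $R$ and is reconstructed separately via \Cref{thm:branch}'s companion lemma on $(\varphi,P)$.

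The single technical fact underlying both halves is a \emph{substitution correspondence}: for any $a_\forall$ and any $\vec v \in \mathcal{D}$, if $m$ is the memory determined by $\sigma$ under $(a_\forall, \vec v)$, then $\deno{c}m = 1$ exactly when the symbolic guard $c_{sym} = \sigma[c]$ holds under $(a_\forall, \vec v)$. This is where soundness of the symbolic evaluation $\sigma[\cdot]$ enters; I would either invoke the substitution lemma for $\sigma$ already used for assignments or establish it by a routine induction on the structure of $c$. With this correspondence, the memory-level event $\{m \in M : \deno{c}m = 1\}$ pulls back to the assignment-level event $\{\vec v \in \mathcal{D} : c_{sym}\{\vec v\}\}$, so conditioning $\mu$ on $c$ is the same operation as conjoining $c_{sym}$ to $\varphi$ at the level of assignments.

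For the structural claim, observe that the branch case of \Cref{alg:symb_ex} leaves $\sigma$ and $P$ untouched and only refines the path condition to $\varphi \wedge c_{sym}$ (resp.\ $\varphi \wedge \neg c_{sym}$), so it suffices to verify the path-condition component of the satisfaction relation. Defining $\mu_c$ as $\mu$ conditioned on $c$, the substitution correspondence shows that its numerator counts exactly the $\vec v$ satisfying $\varphi \wedge c_{sym}$, which is precisely the conditional distribution satisfying $R_{true}$; the $\neg c$ case is symmetric. For the quantitative claim I would compute directly, using the correspondence to rewrite the memory-level sum as assignment-level Iverson-bracket counts:
\[
  \sum_{m \in M \,:\, \deno{c}m = 1} \mu(a_\forall, m) = \frac{\sum_{\vec v \in \mathcal{D}} [(\varphi \wedge c_{sym})\{\vec v\}]}{\sum_{\vec v \in \mathcal{D}} [\varphi\{\vec v\}]} = \deno{p_c}\, a_\forall,
\]
where the last equality is the definition of $p_c$ on \cref{line:branch_prob_compute} of \Cref{alg:branch}. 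The false-branch identity follows because, for each $a_\forall$ and each $\vec v$ with $\varphi\{\vec v\}$, exactly one of $c_{sym}, \neg c_{sym}$ holds, so the two conditional masses partition the total mass $1$; hence $\sum_{m : \deno{c}m = 0}\mu(a_\forall,m) = 1 - \deno{p_c}a_\forall = \deno{p_c'}a_\forall$, using that evaluation commutes with $1 - (\cdot)$.

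The main obstacle is the substitution correspondence together with the uniformity hypothesis: the counting form of $p_c$ is sound \emph{only} because every distribution in $P$ is uniform, so each $\vec v \in \mathcal{D}$ is equally likely and the ratio of probabilities collapses to a ratio of cardinalities. I would flag explicitly where the uniform assumption is consumed, since this is exactly the step that justifies replacing $\Pr[c_{sym}\wedge\varphi]/\Pr[\varphi]$ by the Iverson-bracket ratio. A secondary subtlety is the degenerate case in which $\varphi$ is infeasible for some $a_\forall$, where the normalizer $\sum_{\vec v}[\varphi\{\vec v\}]$ vanishes and $\mu_c$ is undefined; I would restrict attention to feasible $a_\forall$, consistent with the \textsc{UNSAT} guard in \Cref{alg:symb_ex}, or appeal to the convention in the satisfaction relation that fixes this degenerate case.
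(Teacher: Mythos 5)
Your proposal is correct and follows essentially the same route as the paper's proof: both hinge on the correspondence $\deno{c}m = \deno{c_{sym}}a_{\forall}a_p$ for $a_p \in \mathsf{fromMem}(\sigma,a_{\forall},m)$ to pull memory-level conditioning back to assignment-level conditioning on $\varphi \wedge c_{sym}$, and both identify the resulting ratio with the Iverson-bracket counting formula of \textsc{PBranch}, which is exactly where the uniformity hypothesis is consumed (a step you flag explicitly and the paper leaves implicit). The paper packages the iterated-conditioning step as a small separate lemma, but the substance is the same.
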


\section{Case Studies}
\label{sec:case_studies}
In this section, we introduce the case studies that we will use in our evaluation.
For each case study, we give a brief description of the algorithm and the target property we wish to verify.
Additionally, we specify which variables are \textit{concretized} for our evaluation.
Like other symbolic execution methods, our tool requires fixing, or concretizing, some parameters, such as the lengths of input arrays and loop bounds.
We discuss in greater depth why concretization is necessary and discuss the implications it has on performance in \Cref{sec:evaluation}.
\iftoggle{SUPPLEMENTAL}{Pseudocode for each case study can be found in \Cref{sec:algorithms}.}{}

\paragraph*{Freivalds' Algorithm.}
Freivalds' algorithm~\citep{freivalds1977} is a randomized algorithm used to verify matrix multiplication.
Given three $n \times n$ matrices $A, B$, and $C$, if $A \times B = C$, then the algorithm always returns \texttt{true}.
However, if $A \times B \neq C$, the probability that Freivalds' algorithm returns \texttt{true} (i.e., a false positive) is at most $1/2$.

We want to verify that the false positive rate is at most $1/2$.
Letting
\[
  \psi \triangleq A \times B \neq C \wedge \mathsf{Freivalds}(A,B,C,n) = \texttt{true},
\]
our query is then:
\begin{equation*}
	\forall A,B,C~.~Enc_\psi \leq 1/2
\end{equation*}
where $n$ is concretized.
To reduce the probability of false positives, Freivalds' algorithm can be run $k$ times, returning \texttt{true} only if all calls return \texttt{true}, and \texttt{false} otherwise.
Since each trial is independent, the probability of a false positive given that $A \times B \neq C$ is at most $\left(1/2\right)^k$.
If we let $\psi \triangleq A \times B \neq C \wedge \mathsf{MultFreivalds}(A,B,C,n,k) = \texttt{true}$, our query then becomes: $\forall A,B,C~.~Enc_\psi \leq (1/2)^k$,
where $n$ and $k$ are concretized.

The assertion $A \times B \neq C$ can be encoded as the following logical formula, \textsc{SomeOff}:
\[
   \textsc{SomeOff} \triangleq \bigvee_{i=1}^n\bigvee_{j=1}^n (A \times B)_{i,j} \neq C_{i,j}
\]
which asserts that \textit{at least} one element differ between $A \times B$ and $C$.

We also consider two other encodings:
\[
   \textsc{AllOff} \triangleq \bigwedge_{i=1}^n\bigwedge_{j=1}^n (A \times B)_{i,j} \neq C_{i,j}\qquad\qquad
   \textsc{FirstOff} \triangleq (A \times B)_{0,0} \neq C_{0,0}
\]
\textsc{AllOff} states that \textit{every} element of $A \times B$ must be different from the corresponding element in $C$, while \textsc{FirstOff} states that the first element of each matrix is different.
These conditions imply \textsc{SomeOff}, so they give weaker guarantees: a program that satisfies the specification under \textsc{AllOff} or \textsc{FirstOff} might fail to satisfy the specification under \textsc{SomeOff}.
However, if a program \emph{fails} to satisfy the specification under \textsc{AllOff} or \textsc{FirstOff}, then they must also fail to satisfy the specification under \textsc{SomeOff}.
These different encodings impact the performance of our tool; we return to these details in~\Cref{sec:evaluation}.

\paragraph*{Reservoir Sampling.}
Reservoir Sampling \citep{vitter_1985} is a classical, online algorithm which produces a simple random sample $S$ of $k$ elements drawn from a population $A$ of $n$ elements.
We want to verify that the random samples returned by reservoir sampling are indeed \textit{simple}, or that each element of $A$ has equal probability of being in $S$.
To make this property easier to state, we consider an equivalent property: if we assume all elements of $A$ are distinct, then the probability that any element of $A$ is in $S$ is exactly $k/n$.
We assume that the elements of $A$ are distinct because if $A$ contained duplicate elements, then it is only true that the probability that each element of $A$ is in $S$ is \textit{at least} $k/n$, which does not necessarily imply that $S$ is a simple random sample of $A$.

We frame this property by asking what is the probability that the first element of $A$ appears in $S$.
Let $\psi \triangleq A[1] \neq \cdots \neq A[n] \wedge A[1] \in \mathsf{ReservoirSample}(A,k)$. We then take the query:
\begin{equation*}
  \forall A~.~Enc_\psi = k/n,
\end{equation*}
where $n$ and $k$ are concretized.

\paragraph*{Randomized Monotonicity Testing.}
Randomized monotonicity testing~\citep{goldreich_2017} is a binary-search-like algorithm to determine how far a function $f: [n] \rightarrow R$ is from being monotone increasing, where $[n] = \{1,\ldots,n\}$ and $R$ is some ordered set.
If $f$ is monotone, then the algorithm will return \texttt{true}; if $f$ is \textit{$\delta$-far} from monotone, it will reject with probability greater than $\delta$~\citep{goldreich_2017}, where $\delta$-far means that the value of $f$ needs to be changed on at most $\delta\cdot n$ points in order to arrive at a monotone function.
Let $\mathsf{DistToMono}(f)$ be the number of elements whose order needs to change to make $f$ monotone, $R = \mathbb{Z}$, and $\psi \triangleq \mathsf{MonotoneTest}(f,n) = \texttt{false}$.
Our query is:
\begin{equation*}
  \forall f: [n] \rightarrow \mathbb{Z}~.~Enc_\psi > \frac{\mathsf{DistToMono}(f)}{n}.
\end{equation*}
where $n$ is concretized, and we universally quantify over all functions with domain $[n]$ and codomain $\mathbb{Z}$.

\paragraph*{Randomized Quicksort.}
Quicksort~\citep{quicksort} is a popular sorting algorithm which uses partitioning in order to achieve efficient sorting.
While there are many ways of choosing a pivot element, one effective way is randomly.
Using this pivot method, the \textit{expected} number of comparisons required is approximately $1.386 n \log_2(n)$ where $n$ is the length of the array.
We use the method described in~\Cref{sec:expected_value} to compute $\E{\mathtt{num\_comps}}$, where \texttt{num\_comps} is a program variable which is initially set to 0 and is incremented whenever a comparison occurs for a concretized array length, $n$.

\paragraph*{Bloom Filter.}
A Bloom filter~\citep{bloom_1970} is a space-efficient, probabilistic data structure used to rapidly determine whether an element is in a set by allowing for false positives.
Bloom filters are generally parameterized by the expected number of \textit{unique} elements to be inserted, $m$, and a \textit{target} maximum false positive error rate after $m$ insertions, $\varepsilon$.
We want to verify that after inserting $m$ elements, the actual probability of a false positive is at most $\varepsilon$.

To capture this property, let $x_1,\ldots, x_m$ be $m$ unique elements to be inserted into a Bloom filter $B$.
Note that a false positive occurs when $\mathsf{BloomCheck}(B,y)$ returns \texttt{true} when $y \neq x_1 \wedge \cdots \wedge y \neq x_m$ given that and $x_1, \ldots, x_m$ were all previously inserted into $B$.
Therefore, let $y$ be such that $y \neq x_1 \wedge \cdots \wedge y \neq x_m$, and
\[
  \psi \triangleq B = \mathsf{BloomCreate}(m,\varepsilon) \wedge \left(\bigwedge_{i=1}^m\mathsf{BloomInsert}(B,x_i)\right) \wedge \mathsf{BloomCheck}(B,y) = \texttt{true}
  \]
and take the query to be $\forall x_1,\ldots,x_m,y~.~Enc_\psi \leq \varepsilon$, where $m$ and $\varepsilon$ are concretized.

\paragraph*{Count-min Sketch.}
A count-min sketch~\citep{graham_2005} is another space-efficient, probabilistic data structure which encodes a frequency table for a stream of data.
To reduce the space usage, the counts can be approximate: the \textit{reported} estimate for the frequency of an element, $x$, namely $\hat{a}_x$, has the property that $\hat{a}_x \leq \varepsilon \cdot N$ with probability $1-\gamma$, where $N = \sum_x a_x$, or the total number of elements seen in the sketch, and $a_x$ is the actual count for $x$.
To appropriately initialize the count-min sketch, most implementations are parameterized by the additive error factor $\varepsilon$ and the error probability $\gamma$.

We want to prove that after inserting $n$ distinct elements, the probability that the error is greater than the additive bound $\varepsilon$ is at most $1-\gamma$.
Assuming distinct elements simplifies the query, since the actual counts for each element in the count-min sketch is exactly $1$.
It is possible to remove this assumption, adjusting the following query to refer to the actual counts for each inserted element.

Similar to how we framed the Bloom filter query, let $x_1,\ldots,x_n$ be $n$ unique elements to be inserted into a count-min sketch $C$, parameterized by $\varepsilon$ and $\gamma$.
If we let
\[
  \psi \triangleq C = \mathsf{SketchCreate}(\varepsilon, \gamma) \wedge \left(\bigwedge_{i=1}^n\mathsf{SketchUpdate}(C,x_i)\right) \wedge \mathsf{SketchEstimate}(C,x_1) > 1 + \varepsilon \cdot n,
\]
then the query becomes
\begin{equation*}
  \forall x_1,\ldots,x_n~.~Enc_\psi \leq \gamma
\end{equation*}
where $n, \varepsilon$, and $\gamma$ are concretized parameters.

\section{Implementation}
\label{sec:implementation}

We implemented our method in a prototype tool called \SYSTEM, building on top of the \textsc{KLEE} symbolic execution engine~\citep{cadar2008}.\footnote{%
  Source code for \textsc{Plinko} and all case studies are freely available on Zenodo~\citep{plinkoArtifact2022}}
\SYSTEM constructs the queries described in \Cref{sec:query_gen}, and then
dispatches them to the Z3 SMT solver~\citep{demoura2008}; we use the array, bit-vector, and real number theories from SMT-LIB~\citep{smtlib}.
If \SYSTEM is unable to prove the query, it returns a counterexample consisting
of an input along with the computed probability of the target event.
By using \textsc{KLEE}, \SYSTEM is able to directly analyze LLVM bytecode with primitive calls for random sampling.
Thus, \SYSTEM can analyze any program written in any programming language that compiles down to LLVM intermediate representation (IR).
We require that the end-user annotate which variables should be universally quantified and which are probabilistic, along with the associated distribution, using \textsc{KLEE} intrinsic functions.

We have also implemented several optimizations for \SYSTEM, which aim to decrease the size and complexity of $Enc_\psi$, or the expression denoting the probability that the predicate $\psi$ is true.
Specifically, we have designed three optimizations: precomputing the probability for simple guards, applying algebraic simplifications, and compressing formulas by factoring out shared terms.
We discuss these optimizations here and in the next section~(\Cref{sec:evaluation}) evaluate their effectiveness on our case studies.

\paragraph*{Simple Guards}
We define a \textit{simple guard} as a predicate of the form $\delta \bowtie k$, where $\delta$ is a probabilistic symbolic variable, $P[\delta]$ is a uniform integer distribution, $\bowtie~\in \{<, >, \leq, \geq, =, \neq\}$, and $k$ is a constant.
For such guards, we can compute the probability of $c = (\delta \bowtie k)$ being true in constant time as there exist closed-form solutions for the number of satisfying assignments to $\delta$ which make $c$ true.
This effectively allows us to avoid using \Cref{alg:branch} to compute $p_c$ when $c_{sym}$ is a simple guard.

For example, reconsider the outermost conditional statement (\cref{line:first_cond}) presented in~\Cref{fig:exam_program}: $\delta_1 > 1$ where $\delta_1 \sim \mathcal{U}\{1,3\}$.
By default, \SYSTEM would use \Cref{alg:branch} to construct $p_{\delta_1 > 1} = [1 > 1] + [2 > 1] + [3 > 1]/3$. 
This symbolic expression then gets simplified by Z3 to the constant $2/3$.
However, for $\delta_1 > 1$, the numerator is just $\max\{\dom{\mathcal{U}\{1,3\}}\} - 1 = 2$, and so the probability of $\delta_1 > 1$ being true is simply $2/3$.
We have encoded these rewriting rules into~\SYSTEM for each primitive binary relation.

\paragraph*{Algebraic Simplifications}
In order to reduce the work that Z3 must do, we can apply elementary algebraic simplifications on $Enc_\psi$.
For example, we leverage the associative, commutative, distributive properties of addition and multiplication, along with basic boolean algebra laws such as the associativity and commutativity of conjunction and disjunction and De Morgan's laws.
We apply a simple fixed-point algorithm to apply each of these laws on $Enc_\psi$ as a preprocessing step before solving the query.

\paragraph*{Formula Sharing}
The encoded probability $Enc_\psi$ generated by our method is a sum of
expressions, one for each explored path. Since paths come from the same
symbolic execution tree, terms representing branch probabilities can be repeated
multiple times in $Enc_\psi$ across different paths with a shared prefix. These
repeated terms can be quite large and complex. Our
optimization factors out common terms in $Enc_\psi$ by creating additional
logical variables.

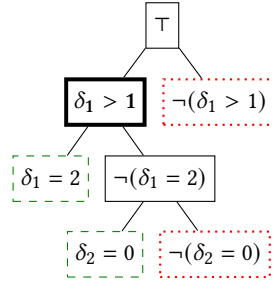
\begin{figure}[t]
  \centering
  \begin{subfigure}{.49\textwidth}
    \centering
    {\small
      \begin{algorithmic}[1]
        \State{$\mathtt{x}\sim\mathsf{UniformInt}(\mathtt{1},\mathtt{3})$}
        \If{$\mathtt{x} > \mathtt{1}$}
        \If{$\mathtt{x} = 2$}
        \State\Return{$\mathtt{True}$}
        \Else 
        \State{$\mathtt{y}\sim\mathsf{UniformInt}(\mathtt{0},\mathtt{1})$}
        \State\Return{$\mathtt{y} = 0$}
        \EndIf
        \Else
        \State\Return{$\mathtt{False}$}        
        \EndIf
      \end{algorithmic}
    }
    \caption{Program.}
    \label{fig:dyn_prog_example_code}
  \end{subfigure}
  \begin{subfigure}{.3\textwidth}
    \centering
		{\small
			\begin{forest}baseline,for tree=draw,
				[{$\top$},align=center, base=bottom
				[{$\mathbf{\delta_1 > 1}$}, align=center, base=bottom,line width=1.5pt
				[{$\delta_1 = 2$}, align=center, base=bottom, node options={dashed}, draw=darkgreen]
				[{$\neg(\delta_1 = 2)$}, align=center, base=bottom
				[{$\delta_2 = 0$}, align=center, base=bottom, node options={dashed}, draw=darkgreen]
				[{$\neg(\delta_2 = 0)$}, align=center, base=bottom, node options={dotted,thick}, draw=red]]]
				[{$\neg(\delta_1 > 1)$}, align=center, base=bottom, node options={dotted,thick}, draw=red]]
			\end{forest}
		}
    \caption{Symbolic execution tree.}
		\label{fig:dyn_prog_example_tree}
  \end{subfigure}
  \caption{An example program and symbolic execution tree which exhibits path overlap.}
  \label{fig:dyn_prog_example}
\end{figure}

To gain intuition, consider the example program in \Cref{fig:dyn_prog_example}.
There are two paths where the program in \Cref{fig:dyn_prog_example_code} returns \texttt{true}, namely $\varphi_1 = (\delta_1 > 1) \wedge (\delta_1 = 2)$ and $\varphi_2 = (\delta_1 > 1) \wedge \neg(\delta_1 = 2) \wedge (\delta_2 = 0)$.
Note that $\varphi_1$ and $\varphi_2$ share a common conjunct, namely $\delta_1 > 1$ which is the bolded node in \Cref{fig:dyn_prog_example_tree}.
By default, \Cref{alg:symb_ex} would construct
\[
  Enc_\psi = \Pr[\delta_1 > 1] \cdot \Pr[\delta_1 = 2 \mid \delta_1 > 1] + \Pr[\delta_1 > 1] \cdot \Pr[\delta_1 \neq 2 \mid \delta_1 > 1] \cdot \Pr[\delta_2 = 0 \mid \delta_1 \neq 2 \wedge \delta_1 > 1].
\]
However, since $\delta_1 > 1$ is a shared conjunct, we are able to factor out $\Pr[\delta_1 > 1]$ from both probability expressions by creating a fresh variable, $p_1 = \Pr[\delta_1 > 1]$ and replacing every occurrence of $\Pr[\delta_1 > 1]$ in $Enc_\psi$ with $p_1$.
We take this idea one step further by observing that $\delta_1 = 2$ appears in $\varphi_1$, while its negation appears in $\varphi_2$.
Since for any event $A$, $\Pr[A] = 1 - \Pr[\neg A]$, we can set $p_2 = \Pr[\delta_1 = 2 \mid \delta_1 > 1]$ and again apply a substitution on $Enc_{\psi}$, resulting in
\[
  Enc_{\psi} = p_1 \cdot p_2 + p_1 \cdot (1 - p_2) \cdot \Pr[\delta_2 = 0 \mid \delta_1 \neq 2 \wedge \delta_1 > 1].
\]
We then create assertions which assert the equality of $p_1$ and $p_2$ with their corresponding expressions, and pass these along with $Enc_\psi$ to the \textsc{Solve} procedure.

\section{Evaluation}
\label{sec:evaluation}
To evaluate \SYSTEM, we consider the following questions:
\begin{itemize}
  \item \Q{1} How efficiently can \SYSTEM prove probabilistic properties on complex programs?
  \item \Q{2} How does \SYSTEM compare with other general purpose approaches?
  \item \Q{3} What factors affect the performance of \SYSTEM?
  \item \Q{4} How do our optimizations impact the performance of \SYSTEM?
\end{itemize}
We conducted our experiments on a machine with a 3.3GHz Intel Core i7-5820K and 32 GB of RAM, running Arch Linux with kernel 5.17.11.

\paragraph*{\textbf{\Q{1} Discussion}}
We implemented each of our case studies presented in~\Cref{sec:case_studies} in C++, and verified them using \SYSTEM.
We used publicly available implementations for the Bloom filter\footnote{\url{https://github.com/jvirkki/libbloom}} and count-min sketch\footnote{\url{https://github.com/alabid/countminsketch}} case studies.

Like other symbolic execution methods, our tool requires concretizing some parameters, such as the sizes of input arrays and loop bounds.
For instance, \textsc{KLEE} has limited support for unknown (symbolic) array lengths and so it instead considers concrete array sizes when exploring paths.
Note that when an array length is concretized, the array contents remain symbolic.
For example, if the input array length is concretized to 5 in quicksort, {\SYSTEM} searches over all possible arrays of length 5.
The choice of concretization also has a large effect on the performance of \SYSTEM as larger concrete parameters generally produce more program paths, thus requiring more time to verify correctness.
In a realistic verification setting where concrete parameters are not known ahead of time, \SYSTEM could be used to consider longer and longer array inputs until it times out.

\Cref{tab:q1} summarizes our experimental results after verifying the target properties for each of the case studies presented in~\Cref{sec:case_studies}.
For each experiment, we report the following metrics: the amount of time taken by \SYSTEM to explore the paths of the program and to generate the query, the amount of time Z3 took to solve the query, the total amount of time elapsed, the number of lines in the C++ code, the number of paths \SYSTEM  explored, the number of random samples, and the maximum values of the concretized variables such that the query can be proven or disproven in the time allotted.
(Recall that \Cref{sec:case_studies} describes the concretized parameters for each of our case studies.)
All properties were verified within 10 minutes, except for Bloom filter due to a bug in the reference implementation.
All results in \Cref{tab:q1} use the default version of \SYSTEM (i.e., with only the simple guards optimization turned on).
\begin{table*}
  \centering
  \caption{Performance metrics for each of the case studies presented in~\Cref{sec:case_studies}.  ``{Freivalds'}'' uses the \textsc{SomeOff} specification over \texttt{int}s and ``Freivalds' (Multiple)'' uses the \textsc{FirstOff} specification over \texttt{char}s.}
  \label{tab:q1}
  \begin{tabular}{@{}lrrrrrrl@{}}
    \toprule
    & \multicolumn{3}{c}{Timing (sec.)} &&&\\ \cmidrule{2-4}
    \textbf{Case Study} & \textbf{KLEE} & \textbf{Z3} & \textbf{Total} & \textbf{Lines} & \textbf{Paths} & \textbf{Samples} & \textbf{Concretizations}\\ \midrule
    Freivalds' & 3 & 26 & \textbf{29} & 97 & 2 & 2 & $n=2$ \\
    Freivalds' (Multiple) & 6 & 259 & \textbf{265} & 96 & 8 & 21 & $(n, k) = (3, 7)$ \\
    Reservoir Sampling & 14 & 98 & \textbf{112} & 52 & 127 & 6 & $(n, k) = (13, 7)$ \\
    Reservoir Sampling & 460 & 1 & \textbf{461} & 52 & 4096 & 12 & $(n, k) = (13, 1)$ \\
    Monotone Testing & 6 & 384 & \textbf{390} & 69 & 36 & 1 & $n=27$ \\
    Quicksort & 14 & 114 & \textbf{128} & 65 & 120 & 10 & $n=5$ \\
    Bloom Filter & 18 & 395 & \textbf{413} & 386 & 83 & 8 & $(m, \varepsilon) = (3, 0.39)$ \\
    Count-min Sketch & 4 & 145 & \textbf{149} & 245 & 3 & 8 & $(n, \varepsilon, \gamma) = (4, 0.5, 0.25)$ \\
    \bottomrule
  \end{tabular}
\end{table*}

We make a few general observations about the results.
First, the case studies showcase a range of path counts and number of random samples.
For example, even when we restricted the input array to be of length 5, \SYSTEM still explored 120 paths containing 10 random samples when analyzing quicksort.
Second, the time~\SYSTEM takes to execute~\Cref{alg:symb_ex} is usually much less than the time Z3 takes to solve the query.
This suggests that the main bottleneck at the current scale of our case studies is constraint solving, not path exploration.

\paragraph*{Extended Length Experiments.}

To better understand the scaling of \SYSTEM, we ran additional experiments with a timeout of 2 hours.
In~\Cref{tab:q1-long} we report the performance results of these experiments.
The concretization settings reported are the highest such that the target property can be successfully verified within 2 hours.

We first note that the concretization settings of Freivalds' algorithm, quicksort, Bloom filter, and count-min sketch could not be increased beyond those shown in~\Cref{tab:q1}.
For Bloom filter we only tried to increase $m$ and for count-min sketch we only tried to increase $n$.
Additionally, only Freivalds' algorithm with $n = 3$ reached the 2-hour timeout, whereas quicksort, Bloom filter, and count-min sketch all ran out of memory before reaching the 2-hour timeout, at which point they were terminated by the operating system.
For reservoir sampling, we show two concretized parameter settings with the highest resource usage (both time and memory).
We note that the difficulty of the problem increases as $n$ increases and as $k$ decreases.
When $(n,k) = (41,38)$, we observed \SYSTEM using a maximum of $17.8$~GB of RAM, and when $(n,k)=(100,98)$, \SYSTEM peaked at $8.1$~GB of RAM.
On the other hand, Freivalds' (Multiple) and monotone testing were both more time constrained rather than memory constrained.

\begin{table*}
  \centering
  \caption{Performance metrics for each of the case studies presented in~\Cref{sec:case_studies} under a 2-hour timeout. ``Freivalds' (Multiple)'' uses the \textsc{FirstOff} specification over \texttt{char}s.}
  \label{tab:q1-long}
  \begin{tabular}{@{}lrrrrrrl@{}}
    \toprule
    & \multicolumn{3}{c}{Timing (sec.)} &&&\\ \cmidrule{2-4}
    \textbf{Case Study} & \textbf{KLEE} & \textbf{Z3} & \textbf{Total} & \textbf{Lines} & \textbf{Paths} & \textbf{Samples} & \textbf{Concretizations}\\ \midrule
    Freivalds' (Multiple) & 7 & 2027 & \textbf{2034} & 96 & 6 & 21 & $(n, k) = (4, 5)$ \\
    Reservoir Sampling & 5 & 85 & \textbf{90} & 52 & 15 & 3 & $(n, k) = (41, 38)$ \\
    Reservoir Sampling & 5 & 36 & \textbf{41} & 52 & 7 & 2 & $(n, k) = (100, 98)$ \\
    Monotone Testing & 6 & 5883 & \textbf{5889} & 69 & 42 & 1 & $n=32$ \\
    \bottomrule
  \end{tabular}
\end{table*}

These extended experiments seem to suggest that memory usage is a large performance bottleneck for \SYSTEM.
Generally, \SYSTEM consumes more memory as the number of paths and random samples increase.
Memory consumption also increases as the cardinality of the originating distribution's domain increases since we compute path probabilities by summing over every possible assignment to a random sample.

\paragraph*{\textbf{\Q{2} Discussion}}
We compare~\SYSTEM against two state-of-the-art systems: \textsc{Psi}~\citep{psi_2016,lambda_psi_2020}, an exact symbolic inference engine for probabilistic programs, and \textsc{Storm}~\citep{hensel_2022}, a leading probabilistic model checker used to check properties on finite-state probabilistic transition systems.
We also discuss the differences between the input language of \SYSTEM and those of \textsc{Psi} and \textsc{Storm}.

\paragraph*{\textsc{Psi} Comparison.}
To compare against \textsc{Psi} we first implemented each of our case studies as probabilistic programs in the \textsc{Psi} language.
Then, for each case study, we used \textsc{Psi} to produce a symbolic expression $e_\textsc{Psi}$ which encodes the joint posterior distribution of the program parameterized by the input variables.
We use $e_\textsc{Psi}$ in place of $Enc_\psi$ in our general formula for probabilistic queries: $\forall \alpha_1,\ldots,\alpha_n. e_\textsc{Psi} \bowtie f(\alpha_1,\ldots,\alpha_n)$,
where $\alpha_1,\ldots,\alpha_n$ are universally quantified program inputs, and then use Z3 to solve the query.

However, we were only able to evaluate \textsc{Psi} on four of our case studies: Freivalds' (single and multiple), reservoir sampling, and monotonicity testing.
For the Bloom filter and count-min sketch case studies, \textsc{Psi} returned an incorrect symbolic expression.
Additionally, since \textsc{Psi} does not support recursive functions, we were unable to encode a comparable version of quicksort into \textsc{Psi}.

In \Cref{fig:psi_comparison}, we present performance results directly comparing \textsc{Psi} and~\SYSTEM on our first four case studies.
We ran \textsc{Psi} and \SYSTEM on our case studies with a range of concretized program parameters---like our system, \textsc{Psi} requires that some input parameters (e.g., sizes of input arrays) be concretized---and we report the end-to-end time for verification.
For both tools, we used a 10-minute timeout denoted by a horizontal dashed line.
An ``X'' means that for that experiment either a timeout was reached or the process ran out of memory.
Recall that the default version of \SYSTEM only employs the simple guards optimization.

\begin{figure}
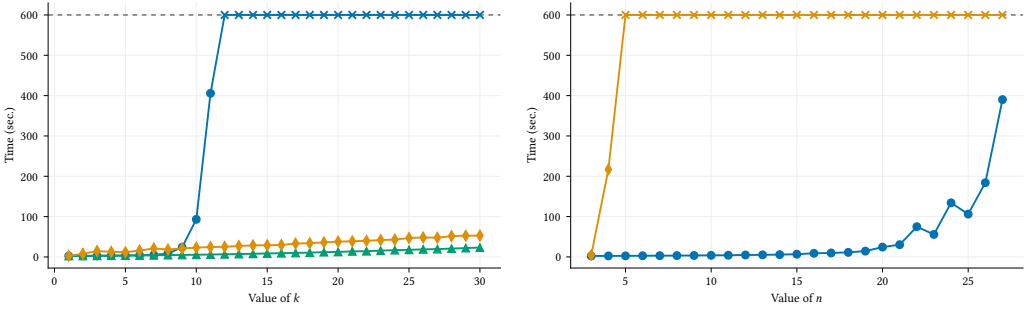
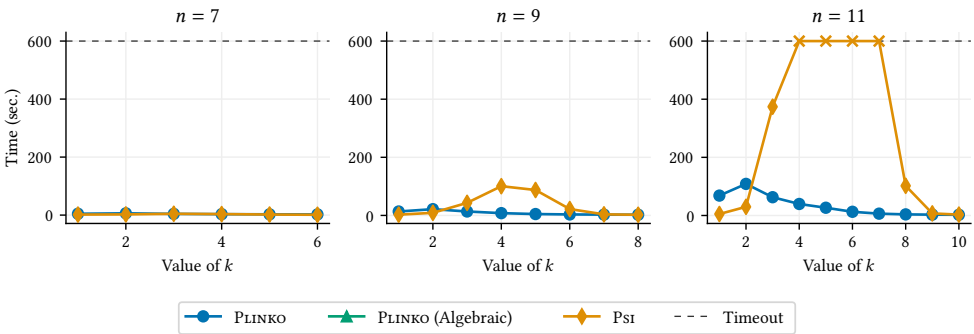

\centering
\begin{subfigure}{.5\textwidth}
    \centering
    \resizebox{\textwidth}{!}{\input{plots/psi_freivalds_mult_plot.pgf}}
    \caption{Freivalds' (\textsc{FirstOff}) over \texttt{char}s with $n=2$.}
    \label{fig:psi_freivalds}
\end{subfigure}%
\begin{subfigure}{.5\textwidth}
    \centering
    \resizebox{\textwidth}{!}{\input{plots/psi_monotone_plot.pgf}}
    \caption{Monotone Testing over \texttt{int}s}
    \label{fig:psi_monotone}
\end{subfigure}
\begin{subfigure}{\textwidth}
    \centering
    \resizebox{0.95\textwidth}{!}{\input{plots/psi_reservoir_sample_plot.pgf}}
    \caption{Reservoir Sampling where $n = 7,9$, and $11$.}
    \label{fig:psi_reservoir}
  \end{subfigure}%
\caption{Performance results comparing \SYSTEM against the \textsc{Psi}
inference engine on the Freivalds' algorithm, Monotone Testing, and Reservoir
Sample case studies. All experiments were run under a 10-minute timeout. An
``X'' marker means that an experiment either exceeded the timeout or ran out of memory.}
\label{fig:psi_comparison}
\end{figure}
Overall, we found that the performance of the default version of \SYSTEM scales significantly better than \textsc{Psi} as the values of the concretized parameters increase.
For monotone testing (\Cref{fig:psi_monotone}) in particular, we found that \textsc{Psi} could only solve the query when $n = 3$ or $4$, and timed out for all larger concretizations.
For reservoir sampling, there are two parameters, $n$ and $k$.  \Cref{fig:psi_reservoir} shows how the tools perform as $k$ varies, fixing $n=7,9,$ and $11$; we found similar results for other concrete values of $n$.
For smaller values of $n$ (e.g., $n=7$), \textsc{Psi} and \SYSTEM have about equal performance; however, as $n$ increases we found that \SYSTEM outperforms \textsc{Psi}.
For instance, \SYSTEM can handle all settings of $k$ when $n = 11$ while \textsc{Psi} timed out for all $4 \leq k \leq 7$.

The only case study where the stock version of \SYSTEM performs worse than
\textsc{Psi} is Freivalds' algorithm (\Cref{fig:psi_freivalds}).
While \textsc{Psi} significantly outperforms the default version of \SYSTEM on this benchmark, if we enable algebraic simplifications optimizations, denoted by the green line in \Cref{fig:psi_freivalds}, \SYSTEM was approximately $55\%$ faster than \textsc{Psi}.
We note that \textsc{Psi} performs similar algebraic simplifications.

Additionally, the vast majority of the time \SYSTEM (Algebraic) spent was on executing \textsc{KLEE} instead of Z3.
For this reason we hypothesize that the performance discrepancy between \textsc{Psi} and \SYSTEM on Freivalds' algorithm is largely due to the relative ease \textsc{KLEE} has in exploring all paths, whereas \textsc{Psi} appears to be constrained by the path explosion caused by higher settings of $n$.

In summary, we found that \SYSTEM is generally able to verify the evaluated benchmarks faster and for larger input concretizations than \textsc{Psi}.
At the same time, we stress that \textsc{Psi} and \SYSTEM are designed to solve different, yet complementary problems; the former performing exact probabilistic inference on probabilistic programs, and the latter verifying probabilistic properties over universally quantified input variables.
One avenue for future work would be to explore how these two systems could be merged such that \SYSTEM leverages \textsc{Psi}'s simplification engine to reduce the complexity and size of \SYSTEM's symbolic expressions before calling Z3.

\paragraph*{\textsc{Storm} Comparison.}
We compare the performance of \SYSTEM and \textsc{Storm} on two of our case studies: Freivalds' algorithm and reservoir sampling.
These two case studies involve reasoning about two different types of computations.
Freivalds' algorithm mostly involves arithmetic operations (i.e., matrix-vector products), whereas reservoir sampling mainly copies and moves data.

While \SYSTEM can operate on programs written in mainstream languages (e.g., C++), \textsc{Storm} operates on finite-state transition systems.
Encoding our programs as transition systems is non-trivial.
While our examples do have a finite state space, the state space is extremely large.
Furthermore, a core aim of our work is to prove probabilistic properties that universally quantify over all possible program inputs, but \textsc{Storm} does not directly support programs with unknown inputs.

To work around these difficulties, we encoded our examples as Markov decision processes (MDPs) written in the PRISM format~\citep{kwiatkowska2011prism}.
To model a universal quantification over the program inputs, our encoded MDPs use a non-deterministic choice over all possible inputs.
While the PRISM format made it easier to encode our imperative programs, the translation is not exact.
For instance, the generated transition system does not accurately model finite precision arithmetic since the PRISM language does not support finite-width datatypes.
Nevertheless, we believe that our encoding is a fair transition-system proxy for our target examples.

\begin{figure}
	\centering
	\resizebox{0.68\textwidth}{!}{\input{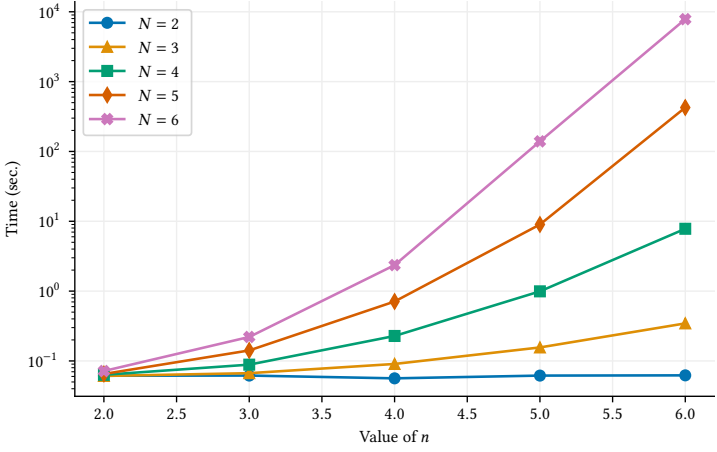}}
  \caption{\textsc{Storm} performance on Reservoir Sampling where $k=1$,
  $n=2..6$, and the array elements range from 1 to $N$. Note that the $y$-axis
is on a \textit{logarithmic} scale.}
	\label{fig:reservoir_storm_plot}
\end{figure}

We discuss performance results, as well as the difference in guarantees that \textsc{Storm} and \SYSTEM provide for both case studies below.
For Freivalds' algorithm we solely consider $2\times 2$ matrices.
In our original C++ program, these matrices contain 32-bit integers.
Due to the state space explosion, we needed to drastically reduce the domain size in order for \textsc{Storm} to successfully terminate.
For example, when we allow the three matrices to only contain elements from the set $\{1,2\}$, \textsc{Storm} can verify the upper bound for false positives in approximately 3 seconds where the constructed model contains roughly 42{,}000 states and 75{,}000 transitions.
When we increased the domain to be $\{1,2,3\}$, however, we had to manually terminate the experiment after 14 hours had passed with no result.
In contrast, \SYSTEM is able to verify the property in 75 seconds where the input domain consists of all 32-bit integers.

To rule out the chance that our encoding of Freivalds' algorithm as a non-deterministic choice over all possible inputs was not the bottleneck, we also ran \textsc{Storm} on each possible concrete setting of the three $2\times 2$ input matrices over the domains $\{1,2\}$ and $\{1,2,3\}$. 
For the domain $\{1,2\}$, this method took substantially more time, finishing in just over 20 seconds, while the domain $\{1,2,3\}$ also took more than 12 hours before we manually terminated the experiment.

We saw similar behavior with reservoir sampling.
To understand the dependence on the data size, we considered five input domains $\{1,\ldots,N\}$, where $2 \leq N \leq 6$.
For each input domain, we ran \textsc{Storm} on five different settings of $n$ and $k$, where $k=1$ and $2 \leq n \leq 6$.
\Cref{fig:reservoir_storm_plot} presents the performance results from these experiments.
Note that as the domain size increases linearly, the time required to check the property increases \textit{exponentially}. 
As we will show in \Q{3}, \SYSTEM also sees decreased performance as we increase the width of program variables, and we expect to see similar exponential scaling if we continued increasing the size of variables.
However, \SYSTEM is able to check the property for all settings of $n$ in approximately $3$ seconds when the input domain consists of all 32-bit integers.

\paragraph*{Input Language Differences.}

We conclude our evaluation against \textsc{Psi} and \textsc{Storm} by discussing the source language differences between these three tools.
As stated in~\Cref{sec:implementation}, \textsc{Plinko} consumes LLVM bytecode which allows us to verify programs written in any language supported by the LLVM front-end (e.g., C, C++, Objective C, Fortran, etc.).
This broadens the applicability of our analysis as most mainstream programming languages are able to compile down to LLVM; however, this also significantly increases the verification complexity as LLVM IR uses more complex datatypes, such as machine integers, arrays, etc.
In particular, \SYSTEM natively reasons over fixed-width bitvectors, as opposed to infinite precision, mathematical integers, allowing \SYSTEM to detect bugs related to integer overflow/underflow.

\textsc{Psi} and \textsc{Storm}, on the other hand, operate on more idealized core languages.
\textsc{Psi} provides a custom domain-specific language which does not support as many programming features as mainstream languages, such as recursion and fixed-width integers, and can only be used in conjunction with \textsc{Psi} (the programs themselves cannot be run, only analyzed by \textsc{Psi}).
\textsc{Storm}, as a probabilistic model checker, requires the programmer to encode their programs as either a discrete- or continuous-time Markov model instead of executable LLVM code.
For our evaluation, we use Markov decision processes (MDPs), which exactly encodes a probabilistic program with (a) a bounded, finite space of inputs and (b) exact arithmetic with \textit{infinite} precision.
While this is not an exact match for the semantics of our programming language, it is unclear how to generate MDPs that match the semantics of LLVM bytecode.
In any case, we do not believe that increasing the realism of the MDP encoding would significantly improve the performance of \textsc{Storm}.

\paragraph*{\textbf{\Q{3} Discussion}}
\begin{table}
  \centering
  \caption{Performance metrics of three different ways of specifying $A \times B \neq C$ in the query for Freivalds' algorithm. Here, $n=2$ and all elements of the matrices are C++ \texttt{int}s.}
  \label{tab:q2}
  \begin{tabular}{@{}lrrrr@{}}
    \toprule
    & \multicolumn{3}{c}{Timing (sec.)} &\\ \cmidrule{2-4}
    \textbf{Spec. for $A \times B \neq C$} & \textbf{KLEE} & \textbf{Z3} & \textbf{Total} & \textbf{Paths}\\ \midrule
    \textsc{AllOff} & 3 & 1 & \textbf{4} & 2\\
    \textsc{SomeOff}& 3 & 26 & \textbf{29} & 2\\
    \textsc{FirstOff}& 2 & 5 & \textbf{7} & 2\\
    \bottomrule
  \end{tabular}
\end{table}
\begin{table}
  \centering
  \caption{Performance metrics for four different domains from which the elements of $A,B$, and $C$ are drawn from in the implementation for Freivalds' Algorithm. We only consider $2 \times 2$ matrices for each data type and use \textsc{SomeOff} to specify that $A \times B \neq C$. In each variant,~\SYSTEM explored 2 paths.}
  \label{tab:q3}
  \begin{tabular}{@{}lrrr@{}}
    \toprule
    & \multicolumn{3}{c}{Timing (sec.)}\\ \cmidrule{2-4}
    \textbf{Data Type} & \textbf{KLEE} & \textbf{Z3} & \textbf{Total}\\ \midrule
    \texttt{long int} & 4 & 684 & \textbf{688}\\
    \texttt{int} & 3 & 26 & \textbf{29}\\
    \texttt{short int} & 2 & 4 & \textbf{6}\\
    \texttt{char} & 2 & 1 & \textbf{3}\\
    \bottomrule
  \end{tabular}
\end{table}
Recall that Freivalds' algorithm efficiently determines whether $A \times B = C$, where $A,B$, and $C$ are $n \times n$ matrices; however, the algorithm has a false positive error rate of at most $1/2$ if $A \times B \neq C$.
In order to verify this error rate, we must first assume that $A \times B \neq C$, and we propose three different encodings: \textsc{AllOff}, \textsc{SomeOff}, and \textsc{FirstOff}.
To determine the performance impact for each specification we ran Freivalds' algorithm where $k=1$ with $2 \times 2$ matrices which each contain 32-bit integers.
We present these performance results in~\Cref{tab:q2}.

In general, the results in~\Cref{tab:q2} suggest that simpler and more specific encodings of $A \times B \neq C$ increase performance at the cost of missing potential bugs.
The strongest and best performing specification was \textsc{AllOff}, whereas the most general and worst performing specification was \textsc{SomeOff}.
Intuitively, this makes sense as \textsc{AllOff} restricts the search space considerably more than the other specifications, whereas \textsc{SomeOff} requires Z3 to reason about \textit{all} matrices $A,B$, and $C$ such that $A \times B \neq C$.
Therefore, while performing the worst, \textsc{SomeOff} provides the strongest guarantee out of all the other variants, followed by \textsc{FirstOff}, and then finally, \textsc{AllOff}.

We also consider how the size of the domain (i.e., the C++ datatype) of the matrix elements impacts performance.
If each element of the matrix is only a single byte, there are only $2^{32}$ possible $2\times 2$ matrices, as opposed to elements of eight bytes, of which there are $2^{256}$ possible matrices.
For each data type, we again restrict ourselves to $2 \times 2$ matrices and specify that $A \times B \neq C$ using the \textsc{SomeOff} encoding.
The performance results are presented in~\Cref{tab:q3}.
On the evaluating machine, \texttt{long int}s are eight bytes, \texttt{int}s are four bytes, \texttt{short int}s are two bytes, and \texttt{char}s are a single byte.

Unsurprisingly, we found a direct correspondence between the size of the integer and the time it took to verify the property.
The time it took to verify Freivalds' algorithm with each data type seems to increase exponentially with the corresponding increase in integer size.
We must note, however, that using \texttt{char}s provides the weakest guarantee whereas \texttt{long int}s provides the strongest.
Although, for many applications, analyzing variants of the program with smaller data types might already provide sufficient confidence of correctness, or may surface bugs.

\begin{figure}[t]
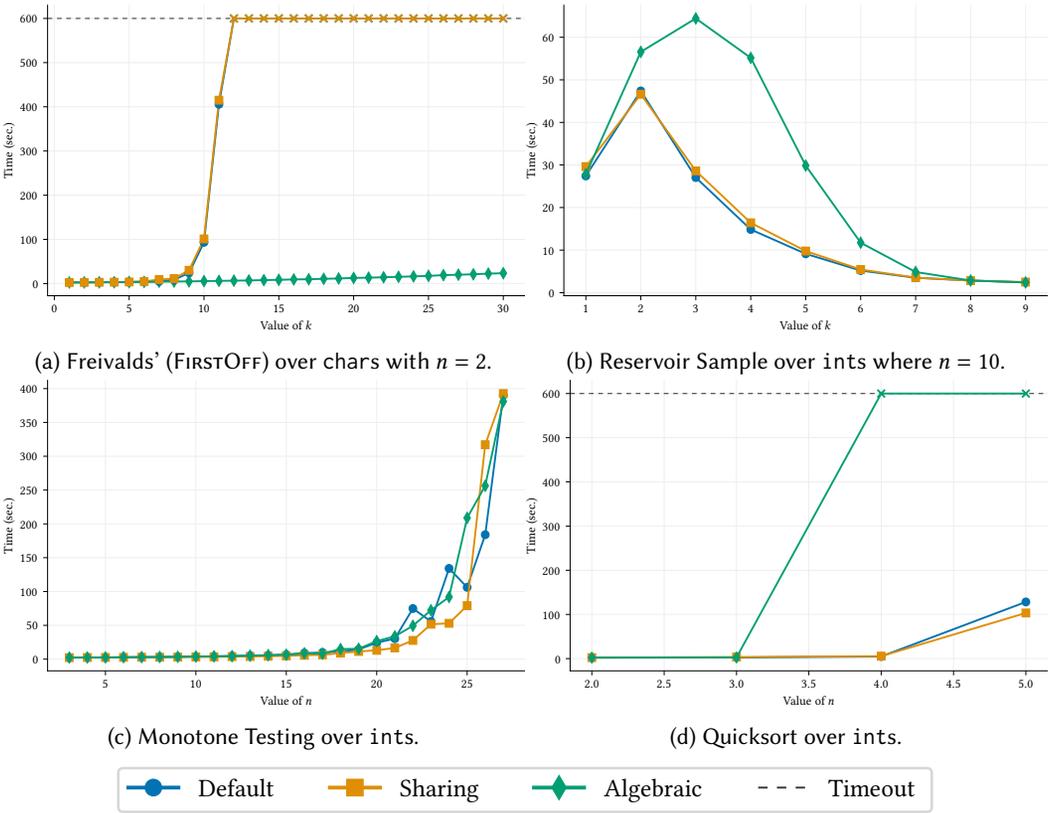

\centering
\begin{subfigure}{.5\textwidth}
    \centering
    \resizebox{\textwidth}{!}{\input{plots/freivalds_mult_plot.pgf}}
    \caption{Freivalds' (\textsc{FirstOff}) over \texttt{char}s with $n=2$.}
    \label{fig:freivalds_mult_opt}
\end{subfigure}%
\begin{subfigure}{.5\textwidth}
    \centering
    \resizebox{\textwidth}{!}{\input{plots/reservoir_sample_plot.pgf}}
    \caption{Reservoir Sample over \texttt{int}s where $n=10$.}
    \label{fig:reservoir_opt}
\end{subfigure}
\begin{subfigure}{.5\textwidth}
    \centering
    \resizebox{\textwidth}{!}{\input{plots/monotone_plot.pgf}}
    \caption{Monotone Testing over \texttt{int}s.}
    \label{fig:monotone_opt}
\end{subfigure}%
\begin{subfigure}{.5\textwidth}
    \centering
    \resizebox{\textwidth}{!}{\input{plots/quicksort_plot.pgf}}
    \caption{Quicksort over \texttt{int}s.}
    \label{fig:quicksort_opt}
\end{subfigure}
\begin{subfigure}{\textwidth}
    \centering
    \resizebox{\textwidth}{!}{%% Creator: Matplotlib, PGF backend
%%
%% To include the figure in your LaTeX document, write
%%   \input{<filename>.pgf}
%%
%% Make sure the required packages are loaded in your preamble
%%   \usepackage{pgf}
%%
%% Also ensure that all the required font packages are loaded; for instance,
%% the lmodern package is sometimes necessary when using math font.
%%   \usepackage{lmodern}
%%
%% Figures using additional raster images can only be included by \input if
%% they are in the same directory as the main LaTeX file. For loading figures
%% from other directories you can use the `import` package
%%   \usepackage{import}
%%
%% and then include the figures with
%%   \import{<path to file>}{<filename>.pgf}
%%
%% Matplotlib used the following preamble
%%   \usepackage[utf8x]{inputenc}\usepackage[T1]{fontenc}\newcommand{\vect}[1]{#1}
%%   \usepackage{fontspec}
%%
\begingroup%
\makeatletter%
\begin{pgfpicture}%
\pgfpathrectangle{\pgfpointorigin}{\pgfqpoint{5.788530in}{0.393620in}}%
\pgfusepath{use as bounding box, clip}%
\begin{pgfscope}%
\pgfsetbuttcap%
\pgfsetmiterjoin%
\definecolor{currentfill}{rgb}{1.000000,1.000000,1.000000}%
\pgfsetfillcolor{currentfill}%
\pgfsetlinewidth{0.000000pt}%
\definecolor{currentstroke}{rgb}{1.000000,1.000000,1.000000}%
\pgfsetstrokecolor{currentstroke}%
\pgfsetdash{}{0pt}%
\pgfpathmoveto{\pgfqpoint{0.000000in}{0.000000in}}%
\pgfpathlineto{\pgfqpoint{5.788530in}{0.000000in}}%
\pgfpathlineto{\pgfqpoint{5.788530in}{0.393620in}}%
\pgfpathlineto{\pgfqpoint{0.000000in}{0.393620in}}%
\pgfpathlineto{\pgfqpoint{0.000000in}{0.000000in}}%
\pgfpathclose%
\pgfusepath{fill}%
\end{pgfscope}%
\begin{pgfscope}%
\pgfsetbuttcap%
\pgfsetmiterjoin%
\definecolor{currentfill}{rgb}{1.000000,1.000000,1.000000}%
\pgfsetfillcolor{currentfill}%
\pgfsetfillopacity{0.800000}%
\pgfsetlinewidth{1.003750pt}%
\definecolor{currentstroke}{rgb}{0.800000,0.800000,0.800000}%
\pgfsetstrokecolor{currentstroke}%
\pgfsetstrokeopacity{0.800000}%
\pgfsetdash{}{0pt}%
\pgfpathmoveto{\pgfqpoint{0.669334in}{0.085009in}}%
\pgfpathlineto{\pgfqpoint{5.119196in}{0.085009in}}%
\pgfpathquadraticcurveto{\pgfqpoint{5.146973in}{0.085009in}}{\pgfqpoint{5.146973in}{0.112787in}}%
\pgfpathlineto{\pgfqpoint{5.146973in}{0.296398in}}%
\pgfpathquadraticcurveto{\pgfqpoint{5.146973in}{0.324176in}}{\pgfqpoint{5.119196in}{0.324176in}}%
\pgfpathlineto{\pgfqpoint{0.669334in}{0.324176in}}%
\pgfpathquadraticcurveto{\pgfqpoint{0.641557in}{0.324176in}}{\pgfqpoint{0.641557in}{0.296398in}}%
\pgfpathlineto{\pgfqpoint{0.641557in}{0.112787in}}%
\pgfpathquadraticcurveto{\pgfqpoint{0.641557in}{0.085009in}}{\pgfqpoint{0.669334in}{0.085009in}}%
\pgfpathlineto{\pgfqpoint{0.669334in}{0.085009in}}%
\pgfpathclose%
\pgfusepath{stroke,fill}%
\end{pgfscope}%
\begin{pgfscope}%
\pgfsetrectcap%
\pgfsetroundjoin%
\pgfsetlinewidth{1.505625pt}%
\definecolor{currentstroke}{rgb}{0.003922,0.450980,0.698039}%
\pgfsetstrokecolor{currentstroke}%
\pgfsetdash{}{0pt}%
\pgfpathmoveto{\pgfqpoint{0.697112in}{0.217787in}}%
\pgfpathlineto{\pgfqpoint{0.836001in}{0.217787in}}%
\pgfpathlineto{\pgfqpoint{0.974890in}{0.217787in}}%
\pgfusepath{stroke}%
\end{pgfscope}%
\begin{pgfscope}%
\pgfsetbuttcap%
\pgfsetroundjoin%
\definecolor{currentfill}{rgb}{0.003922,0.450980,0.698039}%
\pgfsetfillcolor{currentfill}%
\pgfsetlinewidth{1.003750pt}%
\definecolor{currentstroke}{rgb}{0.003922,0.450980,0.698039}%
\pgfsetstrokecolor{currentstroke}%
\pgfsetdash{}{0pt}%
\pgfsys@defobject{currentmarker}{\pgfqpoint{-0.041667in}{-0.041667in}}{\pgfqpoint{0.041667in}{0.041667in}}{%
\pgfpathmoveto{\pgfqpoint{0.000000in}{-0.041667in}}%
\pgfpathcurveto{\pgfqpoint{0.011050in}{-0.041667in}}{\pgfqpoint{0.021649in}{-0.037276in}}{\pgfqpoint{0.029463in}{-0.029463in}}%
\pgfpathcurveto{\pgfqpoint{0.037276in}{-0.021649in}}{\pgfqpoint{0.041667in}{-0.011050in}}{\pgfqpoint{0.041667in}{0.000000in}}%
\pgfpathcurveto{\pgfqpoint{0.041667in}{0.011050in}}{\pgfqpoint{0.037276in}{0.021649in}}{\pgfqpoint{0.029463in}{0.029463in}}%
\pgfpathcurveto{\pgfqpoint{0.021649in}{0.037276in}}{\pgfqpoint{0.011050in}{0.041667in}}{\pgfqpoint{0.000000in}{0.041667in}}%
\pgfpathcurveto{\pgfqpoint{-0.011050in}{0.041667in}}{\pgfqpoint{-0.021649in}{0.037276in}}{\pgfqpoint{-0.029463in}{0.029463in}}%
\pgfpathcurveto{\pgfqpoint{-0.037276in}{0.021649in}}{\pgfqpoint{-0.041667in}{0.011050in}}{\pgfqpoint{-0.041667in}{0.000000in}}%
\pgfpathcurveto{\pgfqpoint{-0.041667in}{-0.011050in}}{\pgfqpoint{-0.037276in}{-0.021649in}}{\pgfqpoint{-0.029463in}{-0.029463in}}%
\pgfpathcurveto{\pgfqpoint{-0.021649in}{-0.037276in}}{\pgfqpoint{-0.011050in}{-0.041667in}}{\pgfqpoint{0.000000in}{-0.041667in}}%
\pgfpathlineto{\pgfqpoint{0.000000in}{-0.041667in}}%
\pgfpathclose%
\pgfusepath{stroke,fill}%
}%
\begin{pgfscope}%
\pgfsys@transformshift{0.836001in}{0.217787in}%
\pgfsys@useobject{currentmarker}{}%
\end{pgfscope}%
\end{pgfscope}%
\begin{pgfscope}%
\definecolor{textcolor}{rgb}{0.000000,0.000000,0.000000}%
\pgfsetstrokecolor{textcolor}%
\pgfsetfillcolor{textcolor}%
\pgftext[x=1.086001in,y=0.169176in,left,base]{\color{textcolor}\rmfamily\fontsize{10.000000}{12.000000}\selectfont Default}%
\end{pgfscope}%
\begin{pgfscope}%
\pgfsetrectcap%
\pgfsetroundjoin%
\pgfsetlinewidth{1.505625pt}%
\definecolor{currentstroke}{rgb}{0.870588,0.560784,0.019608}%
\pgfsetstrokecolor{currentstroke}%
\pgfsetdash{}{0pt}%
\pgfpathmoveto{\pgfqpoint{1.813362in}{0.217787in}}%
\pgfpathlineto{\pgfqpoint{1.952251in}{0.217787in}}%
\pgfpathlineto{\pgfqpoint{2.091140in}{0.217787in}}%
\pgfusepath{stroke}%
\end{pgfscope}%
\begin{pgfscope}%
\pgfsetbuttcap%
\pgfsetmiterjoin%
\definecolor{currentfill}{rgb}{0.870588,0.560784,0.019608}%
\pgfsetfillcolor{currentfill}%
\pgfsetlinewidth{1.003750pt}%
\definecolor{currentstroke}{rgb}{0.870588,0.560784,0.019608}%
\pgfsetstrokecolor{currentstroke}%
\pgfsetdash{}{0pt}%
\pgfsys@defobject{currentmarker}{\pgfqpoint{-0.041667in}{-0.041667in}}{\pgfqpoint{0.041667in}{0.041667in}}{%
\pgfpathmoveto{\pgfqpoint{-0.041667in}{-0.041667in}}%
\pgfpathlineto{\pgfqpoint{0.041667in}{-0.041667in}}%
\pgfpathlineto{\pgfqpoint{0.041667in}{0.041667in}}%
\pgfpathlineto{\pgfqpoint{-0.041667in}{0.041667in}}%
\pgfpathlineto{\pgfqpoint{-0.041667in}{-0.041667in}}%
\pgfpathclose%
\pgfusepath{stroke,fill}%
}%
\begin{pgfscope}%
\pgfsys@transformshift{1.952251in}{0.217787in}%
\pgfsys@useobject{currentmarker}{}%
\end{pgfscope}%
\end{pgfscope}%
\begin{pgfscope}%
\definecolor{textcolor}{rgb}{0.000000,0.000000,0.000000}%
\pgfsetstrokecolor{textcolor}%
\pgfsetfillcolor{textcolor}%
\pgftext[x=2.202251in,y=0.169176in,left,base]{\color{textcolor}\rmfamily\fontsize{10.000000}{12.000000}\selectfont Sharing}%
\end{pgfscope}%
\begin{pgfscope}%
\pgfsetrectcap%
\pgfsetroundjoin%
\pgfsetlinewidth{1.505625pt}%
\definecolor{currentstroke}{rgb}{0.007843,0.619608,0.450980}%
\pgfsetstrokecolor{currentstroke}%
\pgfsetdash{}{0pt}%
\pgfpathmoveto{\pgfqpoint{2.943640in}{0.217787in}}%
\pgfpathlineto{\pgfqpoint{3.082529in}{0.217787in}}%
\pgfpathlineto{\pgfqpoint{3.221418in}{0.217787in}}%
\pgfusepath{stroke}%
\end{pgfscope}%
\begin{pgfscope}%
\pgfsetbuttcap%
\pgfsetmiterjoin%
\definecolor{currentfill}{rgb}{0.007843,0.619608,0.450980}%
\pgfsetfillcolor{currentfill}%
\pgfsetlinewidth{1.003750pt}%
\definecolor{currentstroke}{rgb}{0.007843,0.619608,0.450980}%
\pgfsetstrokecolor{currentstroke}%
\pgfsetdash{}{0pt}%
\pgfsys@defobject{currentmarker}{\pgfqpoint{-0.035355in}{-0.058926in}}{\pgfqpoint{0.035355in}{0.058926in}}{%
\pgfpathmoveto{\pgfqpoint{-0.000000in}{-0.058926in}}%
\pgfpathlineto{\pgfqpoint{0.035355in}{0.000000in}}%
\pgfpathlineto{\pgfqpoint{0.000000in}{0.058926in}}%
\pgfpathlineto{\pgfqpoint{-0.035355in}{0.000000in}}%
\pgfpathlineto{\pgfqpoint{-0.000000in}{-0.058926in}}%
\pgfpathclose%
\pgfusepath{stroke,fill}%
}%
\begin{pgfscope}%
\pgfsys@transformshift{3.082529in}{0.217787in}%
\pgfsys@useobject{currentmarker}{}%
\end{pgfscope}%
\end{pgfscope}%
\begin{pgfscope}%
\definecolor{textcolor}{rgb}{0.000000,0.000000,0.000000}%
\pgfsetstrokecolor{textcolor}%
\pgfsetfillcolor{textcolor}%
\pgftext[x=3.332529in,y=0.169176in,left,base]{\color{textcolor}\rmfamily\fontsize{10.000000}{12.000000}\selectfont Algebraic}%
\end{pgfscope}%
\begin{pgfscope}%
\pgfsetbuttcap%
\pgfsetroundjoin%
\pgfsetlinewidth{0.752812pt}%
\definecolor{currentstroke}{rgb}{0.200000,0.200000,0.200000}%
\pgfsetstrokecolor{currentstroke}%
\pgfsetdash{{3.750000pt}{3.750000pt}}{0.000000pt}%
\pgfpathmoveto{\pgfqpoint{4.185584in}{0.217787in}}%
\pgfpathlineto{\pgfqpoint{4.324473in}{0.217787in}}%
\pgfpathlineto{\pgfqpoint{4.463362in}{0.217787in}}%
\pgfusepath{stroke}%
\end{pgfscope}%
\begin{pgfscope}%
\definecolor{textcolor}{rgb}{0.000000,0.000000,0.000000}%
\pgfsetstrokecolor{textcolor}%
\pgfsetfillcolor{textcolor}%
\pgftext[x=4.574473in,y=0.169176in,left,base]{\color{textcolor}\rmfamily\fontsize{10.000000}{12.000000}\selectfont Timeout}%
\end{pgfscope}%
\end{pgfpicture}%
\makeatother%
\endgroup%}
\end{subfigure}%
\caption{Performance metrics comparing the default, unoptimized version of
\SYSTEM, \SYSTEM with the formula sharing (``Sharing''), and \SYSTEM with
algebraic simplifications (``Algebraic''). An ``X'' means that the experiment
either exceeded the 10-minute timeout, or ran out of memory.}
\label{fig:optimizations}
\end{figure}
\paragraph*{\textbf{\Q{4} Discussion}}
We consider the performance impacts of our three main optimizations: precomputing the probability of taking branches containing simple guards, applying algebraic simplifications, and sharing formulas to exploit any overlap between paths.

The simple guard optimization had a large effect on our two case studies with simple guards: reservoir sampling and monotone testing.
Without this optimization, we could only verify the target properties for reservoir sampling when $(n,k) = (10,5)$ and monotone testing when $n=23$ in 128 seconds and 356 seconds, respectively.
Increasing the concretized values any higher caused solving to either exceed the 10-minute timeout, or to run out of memory.
Comparatively, we are able to explore larger concretizations under the same time limit ($(n,k) = (13,7)$, and $n = 27$) with this optimization.
These results show that it is possible to compute path probabilities more compactly, and suggest that further optimizations in this style may be possible.

To evaluate the effectiveness of the remaining two optimizations, we ran \SYSTEM on the first four case studies (we excluded the Bloom filter and count-min sketch case studies due to the number of concretized variables) with the default version of \SYSTEM, \SYSTEM with the algebraic simplifications optimization (``Algebraic''), and \SYSTEM with the formula sharing optimization (``Sharing'').
Performance results for each of the case studies are summarized in~\Cref{fig:optimizations}.

The results show that these two optimizations are not always effective on all programs, however, they can have a significant impact on performance, as showcased by the algebraic simplifications optimization on Freivalds' algorithm (\Cref{fig:freivalds_mult_opt}).
At the same time, we see that this optimization's impact is not always positive, as is the case with reservoir sampling and quicksort.
This is not so surprising as performance gains from this optimization can only occur in programs with large amounts of arithmetic, e.g., Freivalds' algorithm in the form of matrix-vector products.
The other case studies involve more comparisons and memory manipulations so the overhead of enabling this optimization is present without any of the benefits.
The sharing optimization increases performance slightly on the monotone testing case study, but otherwise saw around equal performance when compared to the stock version of \SYSTEM.

In general, ~\Cref{fig:optimizations} suggests that optimizations can significantly improve performance, but they are ultimately heuristics.
Accordingly, we envision that in order to get the most out of the optimizations, \SYSTEM should be run in parallel with different combinations enabled.
We hope to both refine and create more optimizations in future work.

\section{Further Possible Optimizations: Towards Path Filtering}
\label{sec:filtering}

In our evaluation, \Q{4} shows how simplifying the query generated by \SYSTEM can lead to significant performance gains.
In this section, we consider a different route for optimizations: path filtering.
We ask: of the paths explored by \Cref{alg:symb_ex}, is there a \emph{small} subset of paths that contain most of the probability mass? 
Clearly, we cannot hope for this to be so for all probabilistic programs; however, our experiments do suggest that such a subset does exist for some realistic probabilistic programs.
This does not immediately yield an optimization for \Cref{alg:symb_ex}---it is not clear how to prune paths automatically.
Still, it does indicate that a path pruning heuristic could further improve the performance of \Cref{alg:symb_ex}.

\paragraph*{Path partitioning}
Here we develop a path filtering scheme which aims to remove low-probability paths by selecting a subset of paths which account for the majority of the target probability mass.
This leads to smaller formulas that are hopefully faster to solve.
While the goal is simple, there are a few technical challenges.
First, the probability of a path may depend on the setting of the input variables---the same path may have high probability for some inputs, and low probability for other inputs.
Second, the space of program inputs can be very large, potentially even infinite.

To address these issues, we first partition all paths into a bounded set of \textit{abstract paths} using a user-provided partitioning function, $\Gamma: A_\forall \times A_p \rightarrow \mathbb{N}$, where $A_\forall$ is the set of all assignments of universal symbolic variables (i.e., program inputs) to values, and $A_p$ is the set of all assignments of probabilistic symbolic variables to values.
Note that a pair of assignment functions $(a_\forall,a_p) \in A_\forall\times A_p$ fully determine which path a program will follow.
So, we define $\Gamma(a_\forall,a_p) = n$ to mean that the program path described by $a_\forall$ and $a_p$ belongs to the abstract path $n$.

Referencing program paths using $a_\forall$ and $a_p$ is not a one-to-one mapping, however, as it is quite possible that there exist alternative assignment functions, $a_\forall'$ and $a_p'$ that produce the same path as $a_\forall$ and $a_p$.
For this reason we require $\Gamma$ to be \textit{path disjoint}: if $(a_\forall,a_p)$ and $(a_\forall',a_p')$ produce the same path, then $\Gamma(a_\forall,a_p) = \Gamma(a_\forall',a_p')$.
Additionally, $\Gamma$ should ideally induce an \emph{asymmetric distribution}, meaning that $\Gamma$ should partition the abstract paths such that most of the target probability mass is concentrated in a few abstract paths, allowing us to prune away a greater number of low-probability paths.

Now that we have partitioned all paths into a number of abstract paths, we need to determine which abstract paths to filter out.
We do this by estimating which abstract paths have high-probability mass using statistical sampling and counting.
For each abstract path $n$ we randomly sample $k_\forall$ assignment functions of the input variables $a_\forall^1,\ldots,a_\forall^{k_\forall}$ such that there exists an $a_p$ where $\Gamma(a_\forall^i,a_p) = n$.
In other words, we randomly sample settings of the program inputs such that it is possible to classify the resulting path as a member of the abstract path $n$.
Then, for each $a_\forall^i$, we apply the substitution $a_\forall^i$ on the input variables, execute the program $k_p$ times, and record the resulting assignments of the probabilistic variables, namely $a_p^1,\ldots,a_p^{k_p}$.
We then count how many of the produced probabilistic assignments result in the abstract path $n$.
In other words, we compute $C(a_\forall^i) \triangleq \sum_{j=1}^{k_p} [\Gamma(a_\forall^i,a_p^j)]$, where $[ \cdot ]$ are Iverson brackets.
We estimate how often the abstract path $n$ occurs by calculating $\max_{1 \leq i \leq k_\forall} C(a_\forall^i) / k_p$. 
Lastly, we prune all abstract paths that have a probability of occurring less than some user-defined threshold.
Additional details, including the formalized algorithm, are presented in~\Cref{app:filtering}.

\paragraph*{Example}
\begin{figure}[t]
  \centering
  \begin{subfigure}{0.4\textwidth}
    \centering
    \small{
      \begin{algorithmic}[1]
        \Function{CoinFlips}{$b_1, b_2$}
        \State{$heads \gets 0$}
        \For{$i\gets1,3$}
        \State{$t_1 \sim \mathsf{UniformInt}(0, 5)$}
        \State{$t_2 \sim \mathsf{UniformInt}(0, 5)$}
        \If{$t_1 \geq b_1$}
        \State{$heads \gets heads + 1$}
        \EndIf
        \If{$t_2 \geq b_2$}
        \State{$heads \gets heads + 1$}
        \EndIf
        \EndFor
        \State\Return{$heads$}
        \EndFunction
      \end{algorithmic}
    }
    \caption{The \textsc{CoinFlips} program.}
    \label{alg:coin_flips}
  \end{subfigure}
  \begin{subfigure}{0.49\textwidth}
    \centering
    \resizebox{\textwidth}{!}{\input{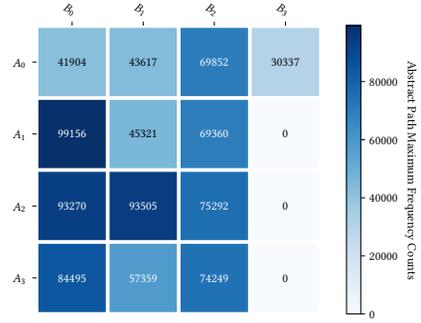}}
    \caption{Frequency count heatmap.}
    \label{fig:freq_counts_path_filtering}
  \end{subfigure}
  \caption{The \textsc{CoinFlips} program along with a heatmap showing the frequency counts for each of the 16 abstract paths defined by $\Gamma$.}
  \label{fig:path_filtering_example}
\end{figure}

To illustrate, consider the program in \Cref{alg:coin_flips}. This program generates samples from two biased coin-flip distributions three times each and returns the total number of heads.
The program inputs $b_1, b_2 \in [0, 5]$ determine the bias of the two biased coins.
We would like to verify that the \textit{maximum} expected value of $heads$ is 6 (i.e., when $b_0 = b_1 = 0$, or both coins always return heads).

First, we have to define our partitioning function $\Gamma$.
Let $A \in \{A_0,\ldots,A_3\}$ be a set of input variable classes, $B = \{B_0,\ldots,B_3\}$ be a set of probabilistic variable classes, and $t_1^i$ and $t_2^i$ be the corresponding random samples of iteration $i$ of~\Cref{alg:coin_flips}, where $1 \leq i \leq 3$.
Each input variable class corresponds to a set of assignments to the input variables, and each probabilistic variable class similarly corresponds to a set of assignment to the probabilistic variables.
In particular,
\begin{align*}
  A_0 &\triangleq b_1 > t_1^{0}  \land b_2 > t_2^{0} & B_0 &\triangleq S \geq 4 \land S \leq 6 \\
  A_1 &\triangleq b_1 \leq t_1^{0} \land b_2 \leq t_2^{0} & B_1 &\triangleq S = 3 \\
  A_2 &\triangleq b_1 \leq t_1^{0} \land b_2 > t_2^{0} & B_2 &\triangleq S = 2 \lor S = 1 \\
  A_3 &\triangleq b_1 > t_1^{0}  \land b_2 \leq t_2^{0} & B_3 &\triangleq S =0 
\end{align*}
where $S \triangleq \sum_{i=1, j=1}^{i=2, j=3} g_{ij}$ and $g_{ij} = 0$ if $t_1^{j} \geq b_i$ and $1$ otherwise.
We then define an abstract path to be a pair of input variable and probabilistic variable classes $(A_i,B_j) \in A \times B$.
Therefore, $\Gamma(a_\forall, a_p) = (A_i, B_j)$ if $a_\forall \in A_i$ and $a_p \in B_j$.

We show the maximum frequency counts for each of the 16 abstract paths as a heatmap in~\Cref{fig:freq_counts_path_filtering}.
Note that many of the abstract paths hold little probability mass, particularly those defined using $B_3$.
Using path filtering, we were able to verify that the maximum expected number of heads is indeed 6 using just 6 of the 16 abstract paths defined by $\Gamma$.
We stress that we are able to eliminate the remaining 10 abstract paths because $\Gamma$ induces an asymmetric distribution on the abstract paths.
An alternative partitioning function $\Gamma'$ which did not induce an asymmetric distribution required 12 abstract paths to verify the same property.

\paragraph*{Experimental results}
Full experimental results are presented in~\Cref{app:filtering}.
We invoked path filtering with hand-crafted partitioning functions on our case studies, and measured the percentage speedup, path reduction, and error compared to baseline \Cref{alg:symb_ex}.
To measure error, we found the minimum additive bound $\varepsilon$ such that for all inputs the probability expression $Enc_\psi$ is off by at most $\varepsilon$ from the target bound.
Some experiments saw large speedups (30.2\% for Bloom filter, 96.0\% speedup for monotone testing) and significant decreases in the number of paths (32.5\% for Bloom filter to 50.4\% for reservoir sampling), while incurring a small additive error bound $\varepsilon$ (ranging from $0.0$ to $0.18$).
These results suggest that given a sufficiently good partitioning scheme, path
filtering can produce significant performance gains for \Cref{alg:symb_ex}. The main challenge
is finding the partitioning automatically, which we leave for future work.

\section{Related Work}
\label{sec:related}

\paragraph*{Probabilistic Symbolic Execution.}
\citet{geldenhuys_2012} first proposed a method for probabilistic
symbolic execution. 
Given a standard, non-probabilistic program (i.e., programs without random sampling statements), their technique
assumes all inputs are drawn from a discrete uniform distribution, and then computes the probabilities of program paths using model counting.
Their tool then produces the posterior distribution parameterized by the return values of the programs which they use for bug finding and testing purposes.

While our technique and theirs both use a form of probabilistic symbolic variables and symbolic execution, there are a few crucial differences.
First, \SYSTEM operates on randomized programs with random sampling statements throughout the program instead of standard, deterministic programs with randomized inputs.
Second, \SYSTEM treats input variables as universally quantified, rather than assuming that inputs are uniformly distributed.
Just like typical properties of deterministic programs, target properties of randomized programs usually quantify over all input variables.
Supporting universally-quantified inputs makes symbolic execution more challenging.
While branch probabilities in \citet{geldenhuys_2012}'s setting are numeric constants, which allow \citet{geldenhuys_2012} to leverage methods like model counting and volume estimation, branch probabilities in our setting are symbolic expressions that can mention program inputs.
Accordingly, reasoning about path probabilities is more difficult in our setting.
Since all of our benchmarks aim to verify properties in the presence of unknown inputs, they cannot be handled using existing probabilistic symbolic execution methods.
Finally, \SYSTEM's main application is to verify probabilistic properties (e.g., false positive rates, expected value bounds) whereas \citet{geldenhuys_2012} uses probabilistic symbolic execution as a bug-finding tool by automatically calculating, but manually analyzing, path probabilities.

Later works use this idea for different applications: analyzing software
reliability~\citep{filieri_2013, borges_2014}, quantifying software
changes~\citep{filieri_2015}, generating performance
distribution~\citep{bihuan_2016}, and evaluating worst-case input
distributions~\citep{p4wn_2021}. Recent schemes apply volume computation instead
of model counting, which is a performance
bottleneck~\citep{sankaranarayanan_2013,ADDN17}.

Existing methods work with probabilistic programs where program inputs are either
known constants, or sampled from known distributions (often, the uniform
distribution). In contrast, our technique quantifies over all unknown
inputs, rather than assuming they are drawn from fixed distributions.

\paragraph*{Symbolic inference.}
\emph{Probabilistic programming languages} (PPLs) are languages enriched with
both \emph{sampling} and \emph{conditioning} operations. These two features allow
probabilistic programs to encode complex distributions.
In fact, many models of interest in machine learning can be expressed in
this way. A basic task is \emph{inference}: given an assertion $P$, what is the
probability that $P$ holds in the distribution described by the program?
Researchers have considered a variety of approaches, from weighted model
counting~\citep{holtzen_2020}, to analyzing Bayesian
networks~\citep{sampson_2014}, to applying computer algebra
systems~\citep{claret_2013, psi_2016, lambda_psi_2020}.

Most existing probabilistic programming languages assume that inputs are drawn
from known distributions---this is a natural simplification since PPLs are
typically concerned with analyzing a single complex distribution, rather than a
family of distributions---so they cannot be applied to prove our properties of
interest. Some recent PPLs do support reasoning about programs with unknown
parameters. Probably the most relevant such system is
\textsc{Psi}~\citep{psi_2016,lambda_psi_2020}. Like \SYSTEM, \textsc{Psi} is
designed for answering exact, symbolic queries about distributions generated by
probabilistic programs. Furthermore, \textsc{Psi} supports programs with unknown
parameters, much like our target programs. This is not a typical use-case of
\textsc{Psi}---these features aren't documented in the main paper, and we
encountered cases where \textsc{Psi} failed to compute the correct
probabilities---but \textsc{Psi} is capable of analyzing some of our benchmarks.
As our evaluation in \Cref{sec:evaluation} shows, \SYSTEM enjoys significantly
better performance and scaling when verifying probabilistic programs.

Overall, the comparison with PPLs is imperfect because PPLs are
optimized for reasoning about programs which use \emph{conditioning}, an
operation that is not supported by \SYSTEM. While conditioning is rarely used as
an operation in randomized algorithms---precisely because of its computational
intractability---it would be interesting to extend our work to handle
conditioning.

\paragraph*{Other automated methods for probabilistic programs.}
Automated verification of probabilistic programs is an active and diverse area
of research; we briefly survey several main lines of work.

\textsc{AxProf}~\citep{axprof_2019} uses \emph{statistical testing} to analyze
probabilistic programs, by running the target program multiple times on concrete
inputs in order to estimate probabilities and expected values. \textsc{AxProf}
is highly efficient and supports programs with unknown inputs. Unlike our work,
however, it can only explore a small subset of the input space and cannot
provide logical guarantees.

\emph{Probabilistic model checking} is a well-developed method for checking
logical formulas on probabilistic transition
systems~\citep{DBLP:conf/icalp/BaierCHKR97, DBLP:reference/mc/BaierAFK18,
kwiatkowska2011prism, hensel_2022}. These techniques can check properties that
are not easily handled by probabilistic symbolic execution, since assertions can be written in a variety of
temporal logics. However, our evaluation (cf.~\Q{2}) shows that probabilistic
model checkers perform much worse than our approach on our target programs and
properties.

\emph{Abstract interpretation} and \emph{algebraic program analysis} methods
have been developed for probabilistic programs~\citep{CousotM12,
DBLP:conf/pldi/WangHR18}. These methods abstract the probabilistic state, trading
precision in exchange for tractable analysis. Our method computes path
probabilities exactly in a symbolic form. It would be interesting to see if we
can leverage abstract interpretation ideas to avoid concretizing loop bounds and
unrolling loops.

There are also many \emph{domain-specific} automated analyses for specific
probabilistic properties, such as termination and resource
analysis~\citep{Chatterjee:2016:AAQ:2837614.2837639, wang_2021,
DBLP:conf/esop/MoosbruggerBKK21}, accuracy~\citep{Chakarov-martingale, SHA18},
reliability~\citep{DBLP:conf/pldi/CarbinKMR12}, differential
privacy~\citep{DBLP:journals/pacmpl/BartheCKS021,AH17} and other relational
properties~\citep{AH18, DBLP:conf/esop/FarinaCG21}, and long-run properties of
probabilistic loops~\citep{DBLP:conf/atva/BartocciKS19,
DBLP:conf/tacas/BartocciKS20}. Our approach aims to create a general-purpose analysis.

\paragraph*{Deductive verification for probabilistic programs.}
Finally, there is a wide variety of manual and semi-automated methods for
verifying probabilistic programs, which we cannot hope to fully survey here. Perhaps
the most well-developed method is Morgan and McIver's weakest pre-expectation
calculus~\citep{DBLP:journals/toplas/MorganMS96,DBLP:journals/pe/GretzKM14},
which manipulates quantitative assertions for probabilistic
programs~\citep{DBLP:journals/jcss/Kozen85}. In contrast to our method, this is
not easy to automate (though there have been some
efforts~\citep{DBLP:conf/qest/GretzKM13,BPHR21}), and targets the core
probabilistic language \textsc{pGCL} rather than a real implementation language.
Interested readers can consult the recent monograph~\citep{fopps} for an
overview of other methods.

\section{Conclusion and Future Directions}
\label{sec:conclusion}

We have presented a symbolic execution method for randomized programs to in order to automatically verify probabilistic properties which quantify over all unknown inputs.  Going forward, we see at least two promising directions for further
investigation.

\paragraph*{Optimizing probabilistic symbolic execution.}
In this work, we have made only preliminary efforts to optimize our symbolic
execution method. One natural direction is to develop heuristics for exploring
paths. Our experiments in \Cref{sec:filtering} suggest that path filtering can
reduce the number of paths that must be explored. Another possibility is to
develop better methods for simplifying path probability expressions, along the
lines of the three optimizations presented in~\Cref{sec:implementation}.

\paragraph*{Analyzing more complex probabilistic programs.}
So far, we have evaluated our implementation on standard randomized programs.
Both \textsc{KLEE} and Z3 support richer programs and hardware features. For instance, Z3
has support for reasoning about floating-point arithmetic. Recent work develops
an extension of \textsc{KLEE} that works on unbounded integers~\citep{kapus_2019}; it
could be interesting to see if this technique has better performance when
verifying randomized algorithms, which often work with mathematical integers.

\begin{acks}                            %% acks environment is optional
	%% contents suppressed with 'anonymous'
	%% Commands \grantsponsor{<sponsorID>}{<name>}{<url>} and
	%% \grantnum[<url>]{<sponsorID>}{<number>} should be used to
	%% acknowledge financial support and will be used by metadata
	%% extraction tools.
  We thank the anonymous reviewers for their detailed feedback. This work
  benefited from discussions with Dexter Kozen, Adrian Sampson, and Cornell
  PLDG. This work was partially supported by the
  \grantsponsor{GS100000001}{National Science
  Foundation}{http://dx.doi.org/10.13039/100000001} (Grant
  No.~\grantnum{GS100000001}{1943130}), the University of Wisconsin--Madison, and Cornell
  University.
  The second author is fully supported and the fourth author is partially supported by TCS Research via the TCS Research Scholar Fellowship program. 
\end{acks}

%% Bibliography
\bibliography{header,main}
\pagebreak

\iftoggle{SUPPLEMENTAL}{
%% Appendix
\appendix
\allowdisplaybreaks

\section{Algorithms for Case Studies}
\label{sec:algorithms}
In this section we provide pseudocode for each of the case studies that we describe in~\Cref{sec:case_studies}.

\label{sec:freivalds_alg}
\begin{algorithm}[H]
  \caption{Freivalds' Algorithm}
  \label{alg:freivalds}
  \begin{algorithmic}[1]
    \Require $A,B$, and $C$ are $n \times n$ matrices
    \Function{Freivalds}{$A,B,C,n$}
    \Let{$\vec{r}$}{An empty $n \times 1$ vector}
    \For{$i\gets1,n$}
    \State{$\vec{r}[i] \sim \mathsf{UniformInt}(0,1)$} 
    \EndFor
    \Let{$\vec{D}$}{$A \times (B\vec{r}) - C\vec{r}$}
    \State\Return{$\vec{D} = \begin{pmatrix} 0 & \cdots & 0 \end{pmatrix}^{\text{T}}$}
    \EndFunction
  \end{algorithmic}
\end{algorithm}
\begin{algorithm}[H]
  \caption{Freivalds' Algorithm (Multiple)}
  \label{alg:mult_freivalds}
  \begin{algorithmic}[1]
    \Function{MultFreivalds}{$A,B,C,n,k$}
    \For{$i\gets1,k$}
    \If{\Call{Freivalds}{$A,B,C,n$} $= \mathtt{false}$}
    \State\Return{\texttt{false}}
    \EndIf
    \EndFor
    \State\Return{\texttt{true}}
    \EndFunction
  \end{algorithmic}
\end{algorithm}
\label{sec:reservoir_sampling_alg}
\begin{algorithm}[H]
	\caption{Reservoir Sampling}
	\label{alg:reservoir_sampling}
	\begin{algorithmic}[1]
    \Require $1 \leq k \leq n$
		\Function{ReservoirSample}{$A[1..n], k$}
    \Let{$S$}{An empty list of size $k$}
		\For{$i\gets1,k$}
		\Let{$S[i]$}{$A[i]$}
		\EndFor
		\For{$i\gets k+1,n$}
		\Let{$j$}{$\mathsf{UniformInt}(1,i)$}
		\If{$j \leq k$}
		\Let{$S[j]$}{$A[i]$}
		\EndIf
		\EndFor
		\State\Return{$S$}
		\EndFunction
	\end{algorithmic}
\end{algorithm}
\begin{algorithm}[H]
  \caption{Randomized Monotonicity Testing~\citep{goldreich_2017}}
  \label{alg:monotone_testing}
  \begin{algorithmic}[1]
    \Function{MonotoneTest}{$f,n$}
    \State $l \gets \lceil \log_2 n \rceil, a \gets 1, b \gets n$ \label{line:monotone_l}
    \Let{$i$}{$\mathsf{UniformInt}(1,n)$}
    \For{$t\gets1,l$}
    \Let{$p$}{$\lceil a+b/2 \rceil$}
    \If{$i \leq p$}
    \If{$f[i] > f[p]$}
    \State\Return \texttt{false}
    \EndIf
    \Let{$b$}{$p$}
    \Else
    \If{$f[i] < f[p]$}

    \algstore{monotone}
  \end{algorithmic}
\end{algorithm}
\begin{algorithm}[H]
  \begin{algorithmic}[1]
    \algrestore{monotone}
    \State\Return \texttt{false}
    \EndIf
    \Let{$a$}{$p+1$}
    \EndIf
    \EndFor
    \State\Return \texttt{true}
    \EndFunction
  \end{algorithmic}
\end{algorithm}
\begin{algorithm}[h]
  \caption{Quicksort}
  \label{alg:quicksort}  
  \begin{algorithmic}[1]
    \Function{Quicksort}{$A,p,r$}
    \If{$p < r$} \label{line:quicksort_comp1}
    \Let{$q$}{\Call{Partition}{$A,p,r$}}
    \State\Call{Quicksort}{$A,p,q-1$}
    \State\Call{Quicksort}{$A,q+1,r$}
    \EndIf
    \EndFunction
    \Function{Partition}{$A,p,r$}
    \Let{$i$}{$\mathsf{UniformInt}(p,r)$}
    \State exchange $A[r]$ with $A[i]$
    \Let{$x$}{$A[r]$}
    \Let{$i$}{$p-1$}
    \For{$j\gets p, r-1$}
    \If{$A[j] \leq x$} \label{line:quicksort_comp2}
    \Let{$i$}{$i + 1$}
    \State exchange $A[i]$ with $A[j]$
    \EndIf
    \EndFor
    \State exchange $A[i+1]$ with $A[r]$
    \State\Return $i+1$
    \EndFunction
  \end{algorithmic}
\end{algorithm}

To properly analyze programs which employ hashing we consider ideal, \textit{uniform} hash functions.
Therefore, we model hash functions through uniform random samples.
\iftoggle{SUPPLEMENTAL}{
  Pseudocode for how we model hash functions is provided in the function \textsc{HashCreate} in~\Cref{alg:hash_function}, Line~\ref{line:hashfns}.
}{}
Essentially, to hash an unseen element, $x$, we randomly sample an integer within a given interval $[min,max]$.
We then record that sample in the map in order to maintain determinism.
To hash $x$ again in the future, we simply retrieve the sample from the map.
This ensures that each element is hashed uniformly and independently the first time, while subsequent hashes of an element always return the same result.

\begin{algorithm}[H]
  \caption{Uniform Hash Function}
  \label{alg:hash_function}
  \begin{algorithmic}[1]
    \Function{HashCreate}{$\mathit{min},\mathit{max}$} \label{line:hashfns}
    \Let{$H$}{An empty, uninitialized hash function}
    \Let{$H.Map$}{An empty map}
    \Let{$H.min$}{$\mathit{min}$}
    \Let{$H.max$}{$\mathit{max}$}
    \State\Return{$H$}
    \EndFunction
    \Function{Hash}{$H,x$}
    \If{$x \in H.Map$}
    \State\Return{$H.Map[x]$}
    \Else
    \State{$i \sim \mathsf{UniformInt}(H.min,H.max)$}
    \Let{$H.Map[x]$}{$i$}
    \State\Return{$i$}
    \EndIf
    \EndFunction
  \end{algorithmic}
\end{algorithm}

\begin{algorithm}[H]
  \caption{Bloom Filter}
  \label{alg:bloom_filter}
  \begin{algorithmic}[1]
    \Function{BloomCreate}{$m,\varepsilon$}
    \Let{$B$}{An empty, uninitialized Bloom Filter}
    \Let{$B.Arr$}{A bit-array of size $-\frac{m\ln \varepsilon}{(\ln 2)^2}$}
    \Let{$B.H$}{A list of $-\frac{\ln \varepsilon}{\ln 2}$ independent hash functions}
    \State\Return{$B$}
    \EndFunction
    \Statex
    \Function{BloomInsert}{$B,x$}
    \ForAll{$h \in B.H$}
    \Let{$B.Arr[h(x)]$}{$1$}
    \EndFor
    \State\Return{$B$}
    \EndFunction
    \Statex
    \Function{BloomCheck}{$B,x$}
    \ForAll{$h \in B.H$}
    \If{$\neg B.Arr[h(x)]$}
    \State\Return{\texttt{false}}
    \EndIf
    \EndFor
    \State\Return{\texttt{true}}
    \EndFunction
  \end{algorithmic}
\end{algorithm}

\begin{algorithm}[H]
  \caption{Count-min Sketch}
  \label{alg:countminsketch}
  \begin{algorithmic}[1]
    \Function{SketchCreate}{$\varepsilon,\gamma$}
    \Let{$w$}{$\lceil \frac{e}{\varepsilon} \rceil$}
    \Let{$d$}{$\lceil \ln \frac{1}{\gamma} \rceil$}
    \Let{$C$}{An empty, uninitialized count-min sketch}
    \Let{$C.\mathit{sketch}$}{An empty $w \times d$ array}
    \Let{$C.H$}{A list of $d$ independent hash functions}
    \State\Return $C$
    \EndFunction
    \Statex
    \Function{SketchUpdate}{$C,x$}
    \For{$j\gets1,d$}
    \Let{$C.\mathit{sketch}[j,C.H[j](x)]$}{$C.\mathit{sketch}[j,C.H[j](x)] + 1$}
    \EndFor
    \State\Return $C$
    \EndFunction
    \Statex
    \Function{SketchEstimate}{$C,x$}
    \Let{$\hat{a}_x$}{$\infty$}
    \For{$j\gets1,d$}
    \Let{$\hat{a}_x$}{$\min(\hat{a}_x,C.\mathit{sketch}[j,C.H[j](x)])$}
    \EndFor
    \State\Return $\hat{a}_x$
    \EndFunction
  \end{algorithmic}
\end{algorithm}

\section{Program Semantics}
\label{sec:semantics}
Here we define the semantics for the three main types of statements which concerns probabilistic symbolic execution: assignments, probabilistic sampling, and branching.

\begin{definition}[Assignment Semantics]
  \label{def:assignment}
  Let $\mu \in \mathit{Dists_M}$ be a distribution of program memories parameterized by assignments to universal symbolic variables. Let $\mathtt{x} \gets e$ be an arbitrary assignment statement where $\mathtt{x} \in \mathcal{X}$ is assigned to the program expression $e$. Let $\mathsf{assign}_{\mathtt{x}\gets e} : \mathit{M} \rightarrow \mathit{M}$ is defined as
  \[
    \mathsf{assign}_{\mathtt{x}\gets e}(m) = \lambda(\mathtt{y} : \mathcal{X})
    \begin{cases}
      \deno{e}m     & \text{if $\mathtt{x} = \mathtt{y}$} \\
      m(\mathtt{y}) & \text{otherwise}
    \end{cases}
  \]
  and let $\mathsf{unassign}_{\mathtt{x}\gets e} = \mathsf{assign}_{\mathtt{x}\gets e}^{-1}$. Define $\mu_{\mathtt{x}\gets e}$ to be the distribution over program memories parameterized by assignments to universal symbolic variables after executing the assignment statement $\mathtt{x}\gets e$ to be
  \[
    \mu_{\mathtt{x}\gets e}(a_{\forall},m) = \sum_{m' \in \mathsf{unassign}_{\mathtt{x}\gets e}(m)} \mu(a_{\forall},m').
  \]
\end{definition}

\begin{definition}[Sampling Semantics]
  \label{def:sampling}
  Let $\mu \in \mathit{Dists_M}$ be a distribution of program memories parameterized by assignments to universal symbolic variables. Let $\mathtt{x} \sim d$ be an arbitrary sampling instruction which assigns the program variable $\mathtt{x} \in \mathcal{X}$ to a random element from the distribution of values parameterized by a memory, represented as a distribution expression $d \in \mathit{Dists}$. Let $\mathsf{desample} : \mathcal{X} \times \mathit{M} \rightarrow \mathcal{P}(\mathit{M})$ be defined as
  \[
    \mathsf{desample}(\mathtt{x},m) = \{ m'\in \mathit{M} \mid \forall (\mathtt{y} \in \mathcal{X})~.~(\mathtt{y} \neq \mathtt{x} \wedge m'(\mathtt{y}) = m(\mathtt{y}))\}.
  \]
  Define $\mu_{\mathtt{x} \sim d}$ to be the distribution over program memories parameterized by assignments to universal symbolic variables after executing the sampling statement $\mathtt{x} \sim d$ to be
  \[
    \mu_{\mathtt{x} \sim d}(a_{\forall},m) = \sum_{m' \in \mathsf{desample}(m)} (\deno{d}a_{\forall})(m(\mathtt{x})) \cdot \mu(a_{\forall},m').
  \]
\end{definition}

\begin{definition}[Branching Semantics]
  \label{def:branching}
  Let $a_{\forall} \in \mathit{A}_{\forall}$ be an arbitrary assignment of universal symbolic variables, $c$ be a guard of an \textbf{if-goto} statement, $\mu \in \mathit{Dists_M}$ be a distribution of program memories parameterized by assignments to universal symbolic variables, and $\mathtt{x}_1,\ldots,\mathtt{x}_n \in \mathcal{X}$ be all the program variables in $c$. Then for all program memories $m \in \mathit{M}$, $\mu$ conditioned on a guard $c$ being true, represented as $\mu_c$ is defined as
  \[
    \mu_c(a_{\forall},m) = \frac{\displaystyle\Pr_{m' \sim \mu(a_{\forall})}\left[ m' = m \land \deno{c}m' = \mathtt{true} \right]}{\displaystyle\Pr_{m' \sim \mu(a_{\forall})}\left[ \deno{c}m' = \mathtt{true} \right]}
  \]
  Similarly, for all program memories $m \in \mathit{M}$, $\mu$ conditioned on a guard $c$ being false, represented as $\mu_{\neg c}$ is defined as
  \[
    \mu_{\neg c}(a_{\forall},m) = \frac{\displaystyle\Pr_{m' \sim \mu(a_{\forall})}\left[ m' = m \land \deno{c}m' = \mathtt{false} \right]}{\displaystyle\Pr_{m' \sim \mu(a_{\forall})}\left[ \deno{c}m' = \mathtt{false} \right]}
  \]
\end{definition}

\section{Formalism}
\label{sec:supplemental_formalism}

\subsection{Notation \& Definitions}
\label{sec:notation}

To begin, we will define some notation: 
\begin{itemize}
\item Let $\mathcal{V}$ be the set of all values, $\mathcal{X}$ be the set of all program variables, $\mathcal{Z}_U$ be the set of all universal symbolic variables, $\mathcal{Z}_P$ be the set of all probabilistic symbolic variables, and $\mathcal{Z} = \mathcal{Z}_U \cup \mathcal{Z}_P$ be the combined set of all symbolic variables.
\item Let $a_{\forall}: \mathcal{Z}_U \rightarrow \mathcal{V}$ be an assignment of universal symbolic variables to values and let $\mathit{A}_{\forall}$ be the set of all such assignments.
\item Similarly, let $a_p : \mathcal{Z}_P \rightarrow \mathcal{V}$ be an assignment of probabilistic symbolic variables to values and let $\mathit{A}_p$ be the set of all such assignments.
\item Let $m : \mathcal{X} \rightarrow \mathcal{V}$ be a program memory which translates program variables into values, and let $\mathit{M}$ be the set of all program memories.
\item Let $d : \mathit{M} \rightarrow (\mathcal{V} \rightarrow [0,1])$ be a distribution expression parameterized by program memories, and let $\mathit{Dists}$ be the set of all distribution expressions.
\item Let $\mu : \mathit{A}_{\forall} \times \mathit{M} \rightarrow [0,1]$ be a distribution of program memories parameterized by assignments to universal symbolic variables and let $\mathit{Dists_M}$ be the set of all parameterized distributions of program memories.
\end{itemize}

Additionally, we will use emphatic brackets ($\deno{\cdot}$) for two purposes:
\begin{itemize}
\item If $e$ is a \textit{program} expression containing the program variables $\mathtt{x}_1,\ldots,\mathtt{x}_n \in \mathcal{X}$, and $m \in \mathit{M}$, then
  \[
    \deno{e}m = \mathsf{eval}(e[\mathtt{x}_1 \mapsto m(\mathtt{x}_1),\ldots,\mathtt{x}_n \mapsto m(\mathtt{x}_n)])
  \]
\item If $e$ is a \textit{symbolic} expression containing the symbolic variables $\alpha_1,\ldots,\alpha_n \in \mathcal{Z}_U$ and $\delta_1,\ldots,\delta_m \in \mathcal{Z}_P$, and $a_{\forall} \in \mathit{A}_{\forall}$ and $a_p \in \mathit{A}_p$, then $\deno{e}a_{\forall}a_p = \mathsf{eval}(e[\alpha_1\mapsto a_{\forall}(\alpha_1),\ldots,\alpha_n \mapsto a_{\forall}(\alpha_n), \delta_1 \mapsto a_p(\delta_1), \ldots, \delta_m \mapsto a_p(\delta_m)])$.
\end{itemize}

With this notation in hand, we can now define what it means for $R$ to be an abstraction of a distribution of programs memories.
\begin{definition}
  \label{def:main}
  Let $R = (\sigma,\varphi,P)$ be the abstraction generated by~\Cref{alg:symb_ex} where $\varphi: \mathit{A}_{\forall} \times \mathit{A}_p \rightarrow \{0,1\}$ denotes whether a path condition is true or false under the given assignments, $\sigma : \mathcal{X} \rightarrow SymExprs$ is a mapping from program variables to symbolic expressions generated through symbolic execution, and $P : \mathit{A}_{\forall} \rightarrow \mathcal{Z}_P \rightharpoonup (\mathcal{V} \rightarrow [0,1])$ is a mapping from probabilistic symbolic variables to the distribution it is sampled from, parameterized by assignments to universal symbolic variables.
  Additionally, for every assignment of universal symbolic variables, $a_{\forall} \in \mathit{A}_{\forall}$, $\dom{P(a_{\forall})} = \{\delta_1,\ldots,\delta_k\}$.
  Let $\alpha_1,\ldots,\alpha_l \in \mathcal{Z}_U$ be the universal symbolic variables which correspond to the $l$ parameters to the program.
  For every assignment of probabilistic and universal symbolic variables, $a_{\forall} \in \mathit{A}_{\forall}, a_p \in \mathit{A}_p$, let $\nu : \mathit{A}_{\forall} \rightarrow (\mathit{A}_p \rightarrow [0,1])$ be a distribution of assignments to probabilistic symbolic variables parameterized by assignments to universal symbolic variables, defined as
  \[
    \nu(a_{\forall},a_p) \triangleq \prod_{i=1}^k \Pr_{v \sim P(a_{\forall},\delta_i)}[v = a_p(\delta_i)].
  \]
  We say that a distribution $\mu$ satisfies our abstraction $R$ if, for all assignments to universal symbolic variables, $a_{\forall} \in \mathit{A}_{\forall}$, $\Pr_{a_p' \sim \nu(a_{\forall})}[\varphi(a_{\forall},a_p') = 1] > 0$, and if $\nu_{\varphi} : \mathit{A}_{\forall} \rightarrow (\mathit{A}_p \rightarrow [0,1])$ is defined as
  \[
    \nu_\varphi(a_{\forall},a_p) \triangleq \cfrac{\displaystyle\Pr_{a_p' \sim \nu(a_{\forall})}[a_p' = a_p \land \varphi(a_{\forall},a_p') = 1]}{\displaystyle\Pr_{a_p' \sim \nu(a_{\forall})}[\varphi(a_{\forall},a_p')=1]}.
  \]
  Additionally, define $\mathsf{toMem} : (\mathcal{X} \rightarrow SymExprs) \rightarrow \mathit{A}_{\forall} \rightarrow \mathit{A}_p \rightarrow M$ as
  \begin{equation*}
    \mathsf{toMem}(\sigma,a_{\forall},a_p) \triangleq \lambda (\mathtt{x}: \mathcal{X})~.~\deno{\sigma(\mathtt{x})}a_{\forall}a_p,
  \end{equation*}
  and let $\mathsf{fromMem}(\sigma,a_{\forall},m) = (\mathsf{toMem}(\sigma,a_{\forall}))^{-1}(m)$.
  Then,
  \[
    \mu(a_{\forall},m) \triangleq \sum_{a_p \in \mathsf{fromMem}(\sigma,a_{\forall},m)} \nu_{\varphi}(a_{\forall},a_p).
  \]
\end{definition}

\subsection{Omitted Proofs}
\label{sec:omitted_proofs}
\equivpthm*
\begin{proof}
  Let $S$ be an arbitrary \textbf{pWhile} program statement, $\varphi$ be the current path condition, $P$ be the current distribution map, and $p$ be the current path probability before executing $S$.
  Assume that
  \[
    p = \frac{\displaystyle\sum_{(v_1,\ldots,v_n) \in \mathcal{D}} [ \varphi\{\delta_1 \mapsto v_1, \ldots, \delta_n \mapsto v_n\} ]}{\abs{\mathcal{D}}}.
  \]
  We will proceed with a case analysis on $S$:
  \begin{itemize}
  \item \textit{Assignment (\Crefrange{line:beg_raw_assign}{line:end_raw_assign})}. Suppose $S$ is an assignment statement of the form $\mathtt{x} \gets e$, where \texttt{x} is an arbitrary program variable and $e$ is an arbitrary program expression.
    Note that $\varphi$, $P$, and $p$ do not change after executing $\mathtt{x} \gets e$ and so the statement trivially holds.
  \item \textit{Sampling (\Crefrange{line:pse_sym_sample}{line:end_raw_sample})}. Suppose $S$ is a sampling statement of the form $\mathtt{x} \sim d$, where \texttt{x} is an arbitrary program variable and $d$ is an arbitrary distribution expression.
    Additionally assume that there are $n$ probabilistic symbolic variables in $P$, namely $\delta_1,\ldots,\delta_n$, and $\delta_{n+1}$ is the fresh probabilistic symbolic variable created on \cref{line:fresh_psv} in \Cref{alg:symb_ex}.
    Then, $P_1 = P \cup \{\delta_{n+1} \mapsto d\}$, according to \cref{line:pse_sym_sample} where $P_1$ is the updated distribution map.
    If $\mathcal{D}_1 = \dom{P_1[\delta_1]} \times \cdots \times \dom{P_1[\delta_{n+1}]}$, then
    \begin{align*}
      &\frac{\displaystyle\sum_{(v_1,\ldots,v_{n+1}) \in \mathcal{D}_1} [ \varphi\{\delta_1 \mapsto v_1, \ldots, \delta_{n+1} \mapsto v_{n+1}\}]}{\abs{\mathcal{D}_1}} \\ &\qquad\qquad\qquad\qquad\qquad= \frac{\abs{\dom{P_1[\delta_{n+1}]}} \cdot \displaystyle\sum_{(v_1,\ldots,v_{n}) \in \mathcal{D}} [ \varphi\{\delta_1 \mapsto v_1, \ldots, \delta_{n} \mapsto v_{n}\}]}{\abs{\mathcal{D}_1}} \\
                                                                                                                         &\qquad\qquad\qquad\qquad\qquad= \frac{\abs{\dom{P_1[\delta_{n+1}]}} \cdot \displaystyle\sum_{(v_1,\ldots,v_{n}) \in \mathcal{D}} [ \varphi\{\delta_1 \mapsto v_1, \ldots, \delta_{n} \mapsto v_{n}\}]}{\abs{\dom{P_1[\delta_{n+1}]}} \cdot \abs{\mathcal{D}}} \\
                                                                                                                         &\qquad\qquad\qquad\qquad\qquad= \frac{\displaystyle\sum_{(v_1,\ldots,v_{n}) \in \mathcal{D}} [ \varphi\{\delta_1 \mapsto v_1, \ldots, \delta_{n} \mapsto v_{n}\}]}{\abs{\mathcal{D}}} \\
                                                                                                                         &\qquad\qquad\qquad\qquad\qquad= p
    \end{align*}
    where the first line follows from the fact that $\delta_{n+1}$ cannot appear in $\varphi$ as $\delta_{n+1}$ is a \textit{fresh} probabilistic symbolic variable.
  \item \textit{Branching (\Cref{line:pse_sym_branch,line:symbex_true_state,line:symbex_false_state})}. Suppose $S$ is a branching statement of the form $\mathbf{if}~c~\mathbf{then~goto}~T$ where $c$ is an arbitrary program guard.
    Additionally assume that there are $n$ probabilistic symbolic variables in $P$, namely $\delta_1,\ldots,\delta_n$.
    Then, according to \cref{line:branch_prob_compute} of \Cref{alg:branch},
    \[
      p_c = \frac{\displaystyle\sum_{(v_1,\ldots,v_n) \in \mathcal{D}} [(c_{sym} \wedge \varphi)\{\delta_1 \mapsto v_1,\ldots,\delta_n \mapsto v_n\}]}{\displaystyle\sum_{(v_1,\ldots,v_n) \in \mathcal{D}}[\varphi\{\delta_1 \mapsto v_1,\ldots,\delta_n \mapsto v_n\}]}
    \]
    and so $p_t = p_c$ and $p_f = 1-p_c$.
    If $p'_t = p \cdot p_t$ and $p'_f = p \cdot p_f$, as on lines \cref{line:symbex_true_state,line:symbex_false_state}, and $\varphi'_t = \varphi \wedge c_{sym}$ and $\varphi'_f = \varphi \wedge \neg c_{sym}$, then we claim that
    \[
      p'_t = \frac{\displaystyle\sum_{(v_1,\ldots,v_n) \in \mathcal{D}} [ \varphi'_t\{\delta_1 \mapsto v_1, \ldots, \delta_n \mapsto v_n\} ]}{\abs{\mathcal{D}}}
    \]
    and similarly
    \[
      p'_f = \frac{\displaystyle\sum_{(v_1,\ldots,v_n) \in \mathcal{D}} [ \varphi'_f\{\delta_1 \mapsto v_1, \ldots, \delta_n \mapsto v_n\} ]}{\abs{\mathcal{D}}}.
    \]

    We will prove the claim for $p'_t$ as the proof for $p'_f$ is symmetric.
    Note that,
    \begin{align*}
      p'_t &= p \cdot p_t \\
           &= \frac{\displaystyle\sum_{(v_1,\ldots,v_n) \in \mathcal{D}} [ \varphi\{\delta_1 \mapsto v_1, \ldots, \delta_n \mapsto v_n\} ]}{\abs{\mathcal{D}}} \cdot \frac{\displaystyle\sum_{(v_1,\ldots,v_n) \in \mathcal{D}} [(c_{sym} \wedge \varphi)\{\delta_1 \mapsto v_1,\ldots,\delta_n \mapsto v_n\}]}{\displaystyle\sum_{(v_1,\ldots,v_n) \in \mathcal{D}}[\varphi\{\delta_1 \mapsto v_1,\ldots,\delta_n \mapsto v_n\}]} \\
           &= \frac{\displaystyle\sum_{(v_1,\ldots,v_n) \in \mathcal{D}} [(c_{sym} \wedge \varphi)\{\delta_1 \mapsto v_1,\ldots,\delta_n \mapsto v_n\}]}{\abs{\mathcal{D}}}\\
           &= \frac{\displaystyle\sum_{(v_1,\ldots,v_n) \in \mathcal{D}} [\varphi'_t\{\delta_1 \mapsto v_1,\ldots,\delta_n \mapsto v_n\}]}{\abs{\mathcal{D}}}
    \end{align*}
  \end{itemize}
  Since no state changes if $S$ is a \textbf{halt} statement, the theorem is trivially true for this case.
\end{proof}

\soundnessAssignment*
\begin{proof}
  First assume that $\mu$ satisfies the input abstraction $R=(\varphi,\sigma,P)$. Note that by~\Cref{def:assignment},
  \[
    \mu_{\mathtt{x}\gets e}(a_{\forall},m) = \displaystyle\sum_{m' \in \mathsf{unassign}_{\mathtt{x}\gets e}(m)} \mu(a_{\forall},m')
  \]
  for all $a_{\forall} \in \mathit{A}_{\forall}$ and $m \in \mathit{M}$. Also note that by~\Cref{def:main},
  \[
    \mu(a_{\forall},m) = \displaystyle\sum_{a_p \in \mathsf{fromMem}(\sigma,a_{\forall},m)} \nu_{\varphi}(a_{\forall},a_p).
  \]
  So,
  \begin{align*}
    \mu_{\mathtt{x}\gets e}(a_{\forall},m) &= \sum_{m' \in \mathsf{unassign}_{\mathtt{x}\gets e}(m)} \mu(a_{\forall},m')\\
                                           &= \sum_{m' \in \mathsf{unassign}_{\mathtt{x}\gets e}(m)} \sum_{a_p \in \mathsf{fromMem}(\sigma,a_{\forall},m')} \nu_{\varphi}(a_{\forall},a_p).
  \end{align*}

  Thus, in order to satisfy~\Cref{def:main} it suffices to show that
  \[
    \sum_{a_p \in \mathsf{fromMem}(\sigma',a_{\forall},m)} \nu_{\varphi}(a_{\forall},a_p) = \sum_{m' \in \mathsf{unassign}_{\mathtt{x}\gets e}(m)} \sum_{a_p \in \mathsf{fromMem}(\sigma,a_{\forall},m')} \nu_{\varphi}(a_{\forall},a_p).
  \]

  Note that if $m_1,m_2 \in \mathit{M}$ where $m_1 \neq m_2$, then for all $a_{\forall} \in \mathit{A}_{\forall}$ and $\sigma$, $\mathsf{fromMem}(\sigma,a_{\forall},m_1)$ and $\mathsf{fromMem}(\sigma,a_{\forall},m_2)$ are disjoint. Hence, it suffices to show that
  \[
    \mathsf{fromMem}(\sigma',a_{\forall},m) = \bigcup_{m' \in \mathsf{unassign}_{\mathtt{x}\gets e}(m)} \mathsf{fromMem}(\sigma,a_{\forall},m').
  \]
  to prove the theorem. We will show this through double containment.

  First, let $a_p \in \bigcup_{m' \in \mathsf{unassign}_{\mathtt{x}\gets e}(m)} \mathsf{fromMem}(\sigma,a_{\forall},m')$ be arbitrary. This means that for some $m' \in \mathsf{unassign}_{\mathtt{x}\gets e}(m)$, $m' = \lambda (\mathtt{y} : \mathcal{X}) . \deno{\sigma(\mathtt{y})}a_{\forall}a_p$. Furthermore, since $m' \in \mathsf{unassign}_{\mathtt{x}\gets e}(m)$,
  \[
    m = \lambda (\mathtt{y} : \mathcal{X})
    \begin{cases}
      \deno{e}m' & \text{if $\mathtt{x}=\mathtt{y}$}\\
      m'(\mathtt{y}) & \text{otherwise}
    \end{cases}
  \]
  Note that we need to show that $a_p \in \mathsf{fromMem}(\sigma',a_{\forall},m)$. This is equivalent to showing that $m =\mathsf{toMem}(\sigma',a_{\forall},a_p)$. Thus, it is sufficient to prove that
  \[
    \lambda (\mathtt{y} : \mathcal{X}) . \deno{\sigma'(\mathtt{y})}a_{\forall}a_p = \lambda (\mathtt{y} : \mathcal{X})
    \begin{cases}
      \deno{e}m' & \text{if $\mathtt{x}=\mathtt{y}$}\\
      m'(\mathtt{y}) & \text{otherwise}
    \end{cases}.
  \]
  Note that the domains of each of these functions are equal, namely the set of all program variables present within the program. Now we need to show that for every $\mathtt{z} \in \mathcal{X}$, the two functions compute the same result. We have two cases:
  \begin{enumerate}
  \item \textbf{Case 1:} Assume that $\mathtt{z} = \mathtt{x}$. Then we have to show that $\deno{e}m' = \deno{\sigma'(\mathtt{x})}a_{\forall}a_p$. However, note that by~\Crefrange{line:beg_raw_assign}{line:end_raw_assign}, $\sigma'(\mathtt{x}) = e_{sym}$. Recall that $m' = \lambda (\mathtt{y} : \mathcal{X}) . \deno{\sigma(x)}a_{\forall}a_p$. If $\mathtt{x}_1,\ldots,\mathtt{x}_k \in \mathcal{X}$ are all the program variables present, then
    \begin{align*}
      \deno{e}m' &= \mathsf{eval}(e[\mathtt{x}_1 \mapsto \deno{\sigma(\mathtt{x}_1)}a_{\forall}a_p, \ldots, \mathtt{x}_k \mapsto \deno{\sigma(\mathtt{x}_k)}a_{\forall}a_p])
    \end{align*}
    However, note that $e_{sym} = e[\mathtt{x}_1 \mapsto \sigma(\mathtt{x}_1),\ldots,\mathtt{x}_k \mapsto \sigma(\mathtt{x}_k)]$ and so,
    \begin{align*}
      \deno{e_{sym}}a_{\forall}a_p &= \deno{e[\mathtt{x}_1 \mapsto \sigma(\mathtt{x}_1),\ldots,\mathtt{x}_k \mapsto \sigma(\mathtt{x}_k)]}a_{\forall}a_p\\
                             &= \mathsf{eval}(e[\mathtt{x}_1 \mapsto \deno{\sigma(\mathtt{x}_1)}a_{\forall}a_p, \ldots, \mathtt{x}_k \mapsto \deno{\sigma(\mathtt{x}_k)}a_{\forall}a_p]).
    \end{align*}
    Hence, $\deno{e}m' = \deno{\sigma'(\mathtt{x})}a_{\forall}a_p$.
  \item \textbf{Case 2:} Now assume that $\mathtt{z} \neq \mathtt{x}$. We claim that $m'(\mathtt{z}) = \deno{\sigma'(\mathtt{z})}a_{\forall}a_p$. Note that $m'(\mathtt{z}) = \deno{\sigma(\mathtt{z})}a_{\forall}a_p$. Since $\mathtt{z}\neq\mathtt{x}$, then $\sigma(\mathtt{z}) = \sigma'(\mathtt{z})$ and we are done.
  \end{enumerate}
  Therefore,
  \[
    \lambda (\mathtt{y} : \mathcal{X}) . \deno{\sigma'(\mathtt{y})}a_{\forall}a_p = \lambda (\mathtt{y} : \mathcal{X})
    \begin{cases}
      \deno{e}m' & \text{if $\mathtt{x}=\mathtt{y}$}\\
      m'(\mathtt{y}) & \text{otherwise}
    \end{cases}
  \] and so $a_p \in \mathsf{fromMem}(\sigma',a_{\forall},m)$.
  
  For the forward direction of the double containment, assume that $a_p \in \mathsf{fromMem}(\sigma',a_{\forall},m)$ be arbitrary. This implies that $m = \mathsf{toMem}(\sigma',a_{\forall},a_p) = \lambda (\mathtt{y} : \mathcal{X}) . \deno{\sigma'(\mathtt{y})}a_{\forall}a_p$. We need to show that $a_p \in  \bigcup_{m' \in \mathsf{unassign}_{\mathtt{x}\gets e}(m)} \mathsf{fromMem}(\sigma,a_{\forall},m')$ which is equivalent to showing that for some $m'$, $a_p = \mathsf{fromMem}(\sigma,a_{\forall},m')$. We claim that $m' = \lambda (\mathtt{y} : \mathcal{X}) . \deno{\sigma(\mathtt{y})}a_{\forall}a_p \in \mathsf{unassign}_{\mathtt{x}\gets e}(m)$. We can prove this by showing that $\mathsf{assign}_{\mathtt{x}\gets e}(m') = m$. Note that
  \[
    \mathsf{assign}_{\mathtt{x}\gets e}(m') = \lambda (\mathtt{y} : \mathcal{X}) .
    \begin{cases}
      \deno{e}m' & \text{if $\mathtt{x} = \mathtt{y}$}\\
      m'(\mathtt{y}) & \text{otherwise}
    \end{cases}
  \]
  Both functions have the same domain, namely the program variables present in the program. We will show that both functions give the same output on the arbitrary input $\mathtt{z} \in \mathcal{X}$. There are two cases:
  \begin{enumerate}
  \item \textbf{Case 1:} Assume that $\mathtt{z} = \mathtt{x}$. Then, we need to show $\deno{e}m' = \deno{\sigma'(\mathtt{y})}a_{\forall}a_p$.
    However, note that by~\Crefrange{line:beg_raw_assign}{line:end_raw_assign}, $\sigma'(\mathtt{x}) = e_{sym}$. Recall that $m' = \lambda (\mathtt{y} : \mathcal{X}) . \deno{\sigma(x)}a_{\forall}a_p$. If $\mathtt{x}_1,\ldots,\mathtt{x}_k \in \mathcal{X}$ are all the program variables present, then $\deno{e}m' = \mathsf{eval}(e[\mathtt{x}_1 \mapsto \deno{\sigma(\mathtt{x}_1)}a_{\forall}a_p, \ldots, \mathtt{x}_k \mapsto \deno{\sigma(\mathtt{x}_k)}a_{\forall}a_p])$.
    However, note that $e_{sym} = e[\mathtt{x}_1 \mapsto \sigma(\mathtt{x}_1),\ldots,\mathtt{x}_k \mapsto \sigma(\mathtt{x}_k)]$.
    and so,
    \begin{align*}
      \deno{e_{sym}}a_{\forall}a_p &= \deno{e[\mathtt{x}_1 \mapsto \sigma(\mathtt{x}_1),\ldots,\mathtt{x}_k \mapsto \sigma(\mathtt{x}_k)]}a_{\forall}a_p\\
                             &= \mathsf{eval}(e[\mathtt{x}_1 \mapsto \deno{\sigma(\mathtt{x}_1)}a_{\forall}a_p, \ldots, \mathtt{x}_k \mapsto \deno{\sigma(\mathtt{x}_k)}a_{\forall}a_p]).
    \end{align*}
    Hence, $\deno{e}m' = \deno{\sigma'(\mathtt{x})}a_{\forall}a_p$.
  \item \textbf{Case 2:} Now assume that $\mathtt{z} \neq \mathtt{x}$. Then we have to show that $m'(\mathtt{z}) = \deno{\sigma'(\mathtt{z})}a_{\forall}a_p$. Since $m'(\mathtt{z}) = \deno{\sigma(\mathtt{z})}a_{\forall}a_p$ and $\mathtt{z}\neq\mathtt{x}$, then $\sigma(\mathtt{z}) = \sigma'(\mathtt{z})$ and we are done.
  \end{enumerate}
  Therefore, $m' \in \mathsf{unassign}_{\mathtt{x}\gets e}(m)$. Since $m' = \mathsf{toMem}(\sigma,a_{\forall},a_p)$, we have $a_p \in \mathsf{fromMem}(\sigma,a_{\forall},m')$ which implies that $a_p \in  \bigcup_{m' \in \mathsf{unassign}_{\mathtt{x}\gets e}(m)} \mathsf{fromMem}(\sigma,a_{\forall},m')$. By double containment,
  \[
    \mathsf{fromMem}(\sigma',a_{\forall},m) = \bigcup_{m' \in \mathsf{unassign}_{\mathtt{x}\gets e}(m)} \mathsf{fromMem}(\sigma,a_{\forall},m')
  \]
  and so
  \[
    \mu_{\mathtt{x}\gets e}(a_{\forall},m) = \sum_{a_p \in \mathsf{fromMem}(\sigma',a_{\forall},m)} \nu_{\varphi}(a_{\forall},a_p).
  \]
  Since $\nu_{\varphi}$ is unchanged, $\mu_{\mathtt{x}\gets e}$ satisfies $(\varphi,\sigma',P)$.
\end{proof}

\soundnessSampling*
\begin{proof}
  First assume that $\mu$ satisfies the input abstraction $R = (\varphi,\sigma,P)$. We want to show that $\mu_{\mathtt{x} \sim d}$ satisfies the output abstraction $(\varphi,\sigma',P')$. To do this, we need to define $\nu'$ and $\nu_{\varphi}'$ in terms of $\sigma'$ and $P'$ and show that $\nu_{\varphi}$ satisfies~\Cref{def:main}. Let $\delta_{k+1}$ be the fresh probabilistic symbolic variable allocated on~\cref{line:fresh_psv} of~\Cref{alg:symb_ex}.

  Note according to~\Cref{def:main},
  \[
    \nu'(a_{\forall},a_p) = \prod_{i=1}^{k+1} \Pr_{v \sim P'(a_{\forall},\delta_i)}[v = a_p(\delta_i)].
  \]
  
  However, we want
  \[
    \nu(a_{\forall},a_p) = \prod_{i=1}^{k} \Pr_{v \sim P(a_{\forall},\delta_i)}[v = a_p(\delta_i)]
  \]
  by definition. Since for all $1 \leq i \leq k$, $P(\delta_i) = P'(\delta_i)$,
  \begin{align*}
    \nu'(a_{\forall},a_p) &= \prod_{i=1}^{k+1} \Pr_{v \sim P'(a_{\forall},\delta_i)}[v = a_p(\delta_i)]\\
                  &= \left( \prod_{i=1}^{k} \Pr_{v \sim P'(a_{\forall},\delta_i)}[v = a_p(\delta_i)] \right) \cdot \Pr_{v \sim P'(a_{\forall},\delta_{k+1})}[v = a_p(\delta_{k+1})] \\
                  &= \left( \prod_{i=1}^{k} \Pr_{v \sim P(a_{\forall},\delta_i)}[v = a_p(\delta_i)] \right) \cdot \Pr_{v \sim P'(a_{\forall},\delta_{k+1})}[v = a_p(\delta_{k+1})] 
  \end{align*}
  Note that the domain of $\nu$ is $\{\delta_1,\ldots,\delta_k\}$. So, if we let $a_p = a_p' \cup \{\delta_{k+1} \mapsto v\}$ where $v \in \mathcal{V}$ is arbitrary,
  \[
    \left( \prod_{i=1}^{k} \Pr_{v \sim P(a_{\forall},\delta_i)}[v = a_p(\delta_i)] \right) \cdot \Pr_{v \sim P'(a_{\forall},\delta_{k+1})}[v = a_p(\delta_{k+1})] = \nu(a_{\forall},a_p') \cdot \Pr_{v' \sim P'(a_{\forall},\delta_{k+1})}[v' = v]
  \]
  By definition
  \[
    \Pr_{v' \sim P'(a_{\forall},\delta_{k+1})}[v' = v] = \sum_{v' \in \{v' \in \mathcal{V} \mid v' = v\}} P'(a_{\forall},\delta_{k+1})(v') = P'(a_{\forall},\delta_{k+1})(v).
  \]
  According to \cref{line:pse_sym_sample}, $P'(a_{\forall},\delta_{k+1}) = \deno{d} a_{\forall}$, which is a distribution of values. Thus,
  \[
    \nu'(a_{\forall},a_p) = (a_p' \cup \{\delta_{k+1} \mapsto v\}) = \nu(a_{\forall},a_p') \cdot (\deno{d} a_{\forall})(v).
  \]

  Now, we can define $\nu_{\varphi}'$ as
  \[
    \nu_{\varphi}' = \frac{\Pr_{a_p' \sim \nu'(a_{\forall})}[a_p' = a_p \wedge \varphi(a_{\forall},a_p') = 1]}{\Pr_{a_p' \sim \nu'(a_{\forall})}[\varphi(a_{\forall},a_p') = 1]}.
  \]

  Before moving onto the main result, we will prove a lemma.
  \begin{lemma}
    \label{lem:split}
    For all $a_{\forall} \in \mathit{A}_{\forall}, a_p \in \mathit{A}_p$ where the domain of $a_p$ is $\{\delta_1,\ldots,\delta_k\}$ and $\delta_{k+1} \in \mathcal{Z}_P$ is a newly allocated probabilistic symbolic variable created on~\Cref{line:fresh_psv}. Also let $v \in \mathcal{V}$ be arbitrary. Then,
    \[
      \nu_{\varphi}'(a_{\forall},a_p \cup \{\delta_{k+1} \mapsto v\}) = \nu_{\varphi}(a_{\forall},a_p) \cdot (\deno{d} a_{\forall})(v).
    \]
  \end{lemma}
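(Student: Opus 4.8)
The plan is to reduce everything to the factorization of the unconditioned measure $\nu'$ that has just been established in the surrounding proof, namely
\[
  \nu'(a_{\forall},\, a_p \cup \{\delta_{k+1}\mapsto v\}) = \nu(a_{\forall},a_p)\cdot(\deno{d}a_{\forall})(v),
\]
and then push this factorization through the conditioning on $\varphi$ that defines $\nu_{\varphi}'$ and $\nu_{\varphi}$. The one structural fact I need is that $\varphi$ does not mention the freshly allocated probabilistic variable $\delta_{k+1}$: since $\delta_{k+1}$ is generated on \cref{line:fresh_psv} and the sampling rule leaves the path condition untouched (cf.\ the equivalency argument in the proof of Lemma~\ref{lem:split}'s parent, where the same freshness observation is used), the value $\varphi(a_{\forall}, a_p'')$ for any $a_p''$ on the extended domain $\{\delta_1,\ldots,\delta_{k+1}\}$ equals $\varphi(a_{\forall}, a_p'|_{\{\delta_1,\ldots,\delta_k\}})$, i.e.\ it depends only on the restriction of $a_p''$ to the variables $\varphi$ actually mentions.

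First I would handle the numerator of $\nu_{\varphi}'$. Because the event ``$a_p'' = w$'' pins down a single assignment $w = a_p\cup\{\delta_{k+1}\mapsto v\}$, the joint probability collapses to a point mass times an indicator:
\[
  \Pr_{a_p''\sim\nu'(a_{\forall})}\!\left[a_p'' = w \wedge \varphi(a_{\forall},a_p'')=1\right]
    = \nu'(a_{\forall},w)\cdot[\varphi(a_{\forall},w)=1].
\]
Substituting the factorization for $\nu'(a_{\forall},w)$ and using the freshness fact $\varphi(a_{\forall},w)=\varphi(a_{\forall},a_p)$, the numerator becomes $\nu(a_{\forall},a_p)\cdot(\deno{d}a_{\forall})(v)\cdot[\varphi(a_{\forall},a_p)=1]$.

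Next I would compute the denominator, $\Pr_{a_p''\sim\nu'(a_{\forall})}[\varphi(a_{\forall},a_p'')=1]$, by marginalizing over the new coordinate. Writing $a_p'' = a_p'\cup\{\delta_{k+1}\mapsto v'\}$ and again using the factorization and the freshness of $\delta_{k+1}$,
\begin{align*}
  \Pr_{a_p''\sim\nu'(a_{\forall})}[\varphi(a_{\forall},a_p'')=1]
    &= \sum_{a_p'}\sum_{v'} \nu(a_{\forall},a_p')\cdot(\deno{d}a_{\forall})(v')\cdot[\varphi(a_{\forall},a_p')=1]\\
    &= \Big(\sum_{a_p'}\nu(a_{\forall},a_p')\cdot[\varphi(a_{\forall},a_p')=1]\Big)\Big(\sum_{v'}(\deno{d}a_{\forall})(v')\Big)\\
    &= \Pr_{a_p'\sim\nu(a_{\forall})}[\varphi(a_{\forall},a_p')=1],
\end{align*}
where the inner sum over $v'$ equals $1$ because $\deno{d}a_{\forall}$ is a probability distribution over values. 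Dividing the numerator by this denominator and recognizing the quotient as $\nu_{\varphi}(a_{\forall},a_p)$ times the leftover factor $(\deno{d}a_{\forall})(v)$ yields the claim.

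I expect the main obstacle to be purely bookkeeping rather than conceptual: carefully justifying the freshness claim (that $\varphi$ is genuinely independent of $\delta_{k+1}$ and so extends unambiguously to assignments on the larger domain) and handling the domain mismatch between $a_p$ (on $\{\delta_1,\ldots,\delta_k\}$) and $a_p''$ (on $\{\delta_1,\ldots,\delta_{k+1}\}$) cleanly, so that the marginalization step and the collapse of $\sum_{v'}(\deno{d}a_{\forall})(v')=1$ are fully rigorous. I would also note that the denominators being positive (required for $\nu_{\varphi}$ and $\nu_{\varphi}'$ to be well defined) transfers between the two via exactly the same marginalization identity, so no additional nondegeneracy hypothesis is needed.
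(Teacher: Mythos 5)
Your proposal is correct and follows essentially the same route as the paper's own proof: collapse the numerator to a single term using the freshness of $\delta_{k+1}$ (so that $\varphi$ is insensitive to its value), marginalize the new coordinate out of the denominator using $\sum_{v'}(\deno{d}a_{\forall})(v')=1$, and divide to recover $\nu_{\varphi}(a_{\forall},a_p)\cdot(\deno{d}a_{\forall})(v)$. The only difference is presentational (explicit probability notation and marginalization versus the paper's sums over sets of assignments), and your remark about transferring positivity of the denominators is a small but welcome addition.
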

  \begin{proof}
    Let $a = a_p \cup \{\delta_{k+1} \mapsto v\}$. By definition,
    \begin{align*}
      \nu_{\varphi}'(a_{\forall}, a) = \frac{\Pr_{a_p' \sim \nu'(a_{\forall})}[a_p' = a \wedge \varphi(a_{\forall},a_p') = 1]}{\Pr_{a_p' \sim \nu'(a_{\forall})}[\varphi(a_{\forall},a_p') = 1]}
      &= \frac{\displaystyle \smashoperator[r]{\sum_{a_p' \in \{a_p' \in \mathit{A}_p \mid a_p' = a \wedge \varphi(a_{\forall},a_p') = 1\}}} \nu'_{\mathtt{x}\sim d}(a_{\forall},a_p')}{\displaystyle \smashoperator[r]{\sum_{a_p' \in \{a_p' \in \mathit{A}_p \mid \varphi(a_{\forall},a_p') = 1\}}} \nu'_{\mathtt{x}\sim d}(a_{\forall},a_p')}\\
      &= \frac{\displaystyle \smashoperator[r]{\sum_{a_p' \in \{a_p' \in \mathit{A}_p \mid a_p' = a \wedge \varphi(a_{\forall},a_p') = 1\}}} \nu(a_{\forall},a_p)\cdot(\deno{d}a_{\forall})(v)}{\displaystyle \smashoperator[r]{\sum_{a_p' \cup \{\delta_{k+1} \mapsto v'\} \in \{a_p' \in \mathit{A}_p \mid \varphi(a_{\forall},a_p') = 1\}}} \nu(a_{\forall},a_p')\cdot(\deno{d}a_{\forall})(v')}
    \end{align*}
    Note that the set being summed over in the numerator is $\{a_p' \in \mathit{A}_p \mid a_p' = a \wedge \varphi(a_{\forall},a_p') = 1\} = \{a\}$ as $\varphi$ does not reference $\delta_{k+1}$ so, as long as $\varphi(a_{\forall},a_p)=1$, $\varphi(a_{\forall},a)=1$. Therefore, we can remove the summation to get:
    \[
      \frac{\displaystyle \smashoperator[r]{\sum_{a_p' \in \{a_p' \in \mathit{A}_p \mid a_p' = a \wedge \varphi(a_{\forall},a_p') = 1\}}} \nu(a_{\forall},a_p)\cdot(\deno{d}a_{\forall})(v)}{\displaystyle \smashoperator[r]{\sum_{a_p' \cup \{\delta_{k+1} \mapsto v'\} \in \{a_p' \in \mathit{A}_p \mid \varphi(a_{\forall},a_p') = 1\}}}\nu(a_{\forall},a_p')\cdot(\deno{d}a_{\forall})(v')} = \frac{\nu(a_{\forall},a_p)\cdot(\deno{d}a_{\forall})(v)}{\displaystyle \smashoperator[r]{\sum_{a_p' \cup \{\delta_{k+1} \mapsto v'\} \in \{a_p'' \in \mathit{A}_p \mid \varphi(a_{\forall},a_p'') = 1\}}}\nu(a_{\forall},a_p')\cdot(\deno{d}a_{\forall})(v')}.
    \]
    Now, $\{a_p' \in \mathit{A}_p \mid \varphi(a_{\forall},a_p') = 1\}$ is the set of all probabilistic assignments which satisfy the path condition $\varphi$ with the assignment to universal symbolic variables, $a_{\forall}$. Since $\delta_{k+1}$ does not appear in $\varphi$, whatever value $a_p'$ assigns to $\delta_{k+1}$ will not affect whether $\varphi$ is satisfied. Therefore, for each value $v'$ in the domain of $\deno{d}a_{\forall}$, $a_p' \cup \{\delta_{k+1} = v'\} \in \{a_p'' \in \mathit{A}_p \mid \varphi(a_{\forall},a_p'') = 1\}$, where $a_p' \in \mathit{A}_p$ is some probabilistic assignment such that $\varphi(a_{\forall},a_p')=1$. Hence,
    \[
      \smashoperator[r]{\sum_{a_p' \cup \{\delta_{k+1} \mapsto v'\} \in \{a_p'' \in \mathit{A}_p \mid \varphi(a_{\forall},a_p'') = 1\}}} \nu(a_{\forall},a_p')\cdot(\deno{d}a_{\forall})(v')= \smashoperator[r]{\sum_{a_p' \cup \{\delta_{k+1} \mapsto v'\} \in \{a_p'' \in \mathit{A}_p \mid \varphi(a_{\forall},a_p'') = 1\}}} \nu(a_{\forall},a_p').
    \]
    Combining this all together, we have
    \begin{align*}
      \nu_{\varphi}'(a_{\forall}, a) &= \frac{\nu(a_{\forall},a_p)\cdot(\deno{d}a_{\forall})(v)}{\displaystyle \smashoperator[r]{\sum_{a_p' \cup \{\delta_{k+1} \mapsto v'\} \in \{a_p'' \in \mathit{A}_p \mid \varphi(a_{\forall},a_p'') = 1\}}}\nu(a_{\forall},a_p')\cdot(\deno{d}a_{\forall})(v')} \\
                       &= \frac{\nu(a_{\forall},a_p)\cdot(\deno{d}a_{\forall})(v)}{\displaystyle \smashoperator[r]{\sum_{a_p' \in \{a_p' \in \mathit{A}_p \mid \varphi(a_{\forall},a_p') = 1\}}} \nu(a_{\forall},a_p')}\\
                       &= (\deno{d}a_{\forall})(v) \cdot \frac{\displaystyle \smashoperator[r]{\sum_{a_p' \in \{a_p' \in \mathit{A}_p \mid a_p' = a_p \wedge \varphi(a_{\forall},a_p') = 1\}}}\nu(a_{\forall},a_p' )}{\displaystyle \smashoperator[r]{\sum_{a_p' \in \{a_p' \in \mathit{A}_p \mid \varphi(a_{\forall},a_p') = 1\}}} \nu(a_{\forall},a_p')}\\
                         &= (\deno{d}a_{\forall})(v) \cdot \frac{\displaystyle\Pr_{a_p' \sim \nu(a_{\forall})}[a_p' = a_p \wedge \varphi(a_{\forall},a_p') = 1]}{\displaystyle\Pr_{a_p' \sim \nu(a_{\forall})}[\varphi(a_{\forall},a_p') = 1]}\\
                       &= (\deno{d}a_{\forall})(v) \cdot \nu_{\varphi}(a_{\forall},a_p)
    \end{align*}
  \end{proof}
  We can now prove that $R'$ satisfies~\Cref{def:main}. Note that by~\Cref{def:sampling}, 
  \[
    \mu_{\mathtt{x} \sim d}(a_{\forall},m) = \sum_{m' \in \mathsf{desample}(m)} (\deno{d}a_{\forall})(m(\mathtt{x})) \cdot \mu(a_{\forall},m').
  \]
  for all $a_{\forall} \in \mathit{A}_{\forall}$ and $m \in \mathit{M}$. Also, by~\Cref{def:main},
  \[
    \mu(a_{\forall},m) = \sum_{a_p \in \mathsf{fromMem}(\sigma,a_{\forall},m)} \nu_{\varphi}(a_{\forall},a_p).
  \]
  and so,
  \begin{align*}
    \mu_{\mathtt{x} \sim d} &= \sum_{m' \in \mathsf{desample}(m)} (\deno{d}a_{\forall})(m(\mathtt{x})) \cdot \mu(a_{\forall},m')\\
                            &= \sum_{m' \in \mathsf{desample}(m)} (\deno{d}a_{\forall})(m(\mathtt{x})) \cdot \left( \sum_{a_p \in \mathsf{fromMem}(\sigma,a_{\forall},m')} \nu_{\varphi}(a_{\forall},a_p) \right)\\
                            &= \sum_{m' \in \mathsf{desample}(m)} \sum_{a_p \in \mathsf{fromMem}(\sigma,a_{\forall},m')} (\deno{d}a_{\forall})(m(\mathtt{x})) \cdot \nu_{\varphi}(a_{\forall},a_p).
  \end{align*}
  Now it suffices to show the following:
  \begin{enumerate}
  \item For all $a_{\forall}\in \mathit{A}_{\forall}$ and $a_p \in \mathit{A}_p$,
    \[
      \nu_{\varphi}''(a_{\forall},a_p) = (\deno{d}a_{\forall})(v) \cdot \nu_{\varphi}(a_{\forall},a_p')
    \]
    where $\nu_{\varphi}''$ is defined as in~\Cref{def:main} but with $R'$ and $a_p = a_p' \cup \{\delta_{k+1} \mapsto v\}$. This immediately follows from~\Cref{lem:split}.
  \item Note that if $m_1,m_2 \in \mathit{M}$ where $m_1\neq m_2$ then for all $a_{\forall} \in \mathit{A}_{\forall}$ and $\sigma$, $\mathsf{fromMem}(\sigma,a_{\forall},m_1)$ and $\mathsf{fromMem}(\sigma,a_{\forall},m_2)$ are disjoint. Thus, it suffices to show that
    \[
      \mathsf{fromMem}(\sigma',a_{\forall},m) = \bigcup_{m' \in \mathsf{desample}(m)} \mathsf{fromMem}(\sigma,a_{\forall},m').
    \]
  \end{enumerate}
  For (2), we will show this through double containment. First, let $a_p \in \mathsf{fromMem}(\sigma',a_{\forall},m)$ be arbitrary. By definition,
  $m = \mathsf{toMem}(\sigma',a_{\forall},a_p) = \lambda (\mathtt{y} : \mathcal{X}) . \deno{\sigma'(\mathtt{y})}a_{\forall}a_p$. Since $\sigma'(\mathtt{x}) = \delta_{k+1}$, where $\delta_{k+1}$ is a fresh probabilistic symbolic variable. Therefore, $a_p(\delta_{k+1}) = m(\mathtt{x})$. We need to show that $a_p \in \mathsf{fromMem}(\sigma,a_{\forall},m')$ for some $m' \in \mathsf{desample}(m)$. Note that $m \in \mathsf{desample}(m)$. Furthermore, for every program variable $\mathtt{y} \in \mathcal{X}$, $\delta_{k+1}$ does not appear in $\sigma(\mathtt{y})$. Therefore, $a_p \in \mathsf{fromMem}(\sigma,a_{\forall},m)$ as
  \[
    \mathsf{fromMem}(\sigma,a_{\forall},m) = \{ a_p' \in \mathit{A}_p \mid \forall (\mathtt{y} : \mathcal{X})~.~(\mathtt{y} \neq \mathtt{x} \wedge a_p'(\mathtt{y}) = a_p(\mathtt{y}))\}.
  \]
  Since $a_p$ was arbitrary, we have proven the forward direction of the containment.

  Now, let $a_p \in \mathsf{fromMem}(\sigma,a_{\forall},m')$ for some $m' \in \mathsf{desample}(m)$. Since for all $\mathtt{y} \in \mathcal{X}$, $\delta_{k+1}$ does not appear in $\sigma(\mathtt{y})$, $a_p(\delta_{k+1})$ can be any value. To prove the reverse containment, it suffices to show that $m = \mathsf{toMem}(\sigma',a_{\forall},a_p)$. Note that for every $a_p' \in \mathit{A}_p$ where $n = \mathsf{toMem}(\sigma',a_{\forall},a_p')$, $n(\mathtt{x}) = a_p'(\delta_{k+1})$. Furthermore, for every $\mathtt{y} \in \mathcal{X}$ where $\mathtt{y} \neq \mathtt{x}$, $m(\mathtt{y}) = m'(\mathtt{y})$ and $\sigma'(\mathtt{y}) = \sigma(\mathtt{y})$. This implies that $\mathsf{toMem}(\sigma,a_{\forall},a_p)(\mathtt{y}) = m'(\mathtt{y}) = m(\mathtt{y}) = \mathsf{toMem}(\sigma',a_{\forall},a_p)$.

  Without loss of generality, assume that $m(\mathtt{x}) = z$. Since $a_p(\mathtt{x})$ can be any value, let this particular value be $z$. Thus, $m(\mathtt{x}) = a_p(\delta_{k+1})$. Therefore, $m = \mathsf{toMem}(\sigma',a_{\forall},a_p)$ for all program variables. This completes the double containment.
\end{proof}

\soundnessBranching*
Before proving~\Cref{thm:branch}, we will prove a lemma.
\begin{lemma}
  \label{lem:condition}
  Let $a_{\forall} \in \mathit{A}_{\forall}$ and $a_p \in \mathit{A}_p$ be arbitrary. Then,
  \[
    \nu_{\varphi}(a_{\forall},a_p) = \Pr_{a_p' \sim \nu(a_{\forall})}[ a_p' = a_p \mid \varphi(a_{\forall},a_p') = 1 \mid \deno{c_{sym}} a_{\forall} a_p'= 1 ]
  \]
  and
  \[
    \nu_{\varphi}^{\neg}(a_{\forall},a_p) = \Pr_{a_p' \sim \nu(a_{\forall})}[ a_p' = a_p \mid \varphi(a_{\forall},a_p') = 1 \mid \deno{c_{sym}} a_{\forall} a_p' = 0 ].
  \]
\end{lemma}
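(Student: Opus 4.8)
The plan is to prove both identities purely by unfolding the definition of the conditioned distribution from Definition~\ref{def:main} and recognizing the resulting ratio as an ordinary conditional probability. Here $\nu_{\varphi}$ and $\nu_{\varphi}^{\neg}$ denote the conditional distributions of $\nu$ attached to the two successor abstractions $R_{true}$ and $R_{false}$, whose path conditions are $\varphi \wedge c_{sym}$ and $\varphi \wedge \neg c_{sym}$ respectively. Thus the real content of the lemma is the elementary fact that conditioning on a conjunctive path condition is the same as jointly conditioning on its two conjuncts.

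First I would instantiate Definition~\ref{def:main} at the true-branch abstraction $R_{true} = (\varphi \wedge c_{sym}, \sigma, P)$. By definition this gives
\[
\nu_{\varphi}(a_{\forall},a_p) = \frac{\Pr_{a_p' \sim \nu(a_{\forall})}[a_p' = a_p \land (\varphi \land c_{sym})(a_{\forall},a_p') = 1]}{\Pr_{a_p' \sim \nu(a_{\forall})}[(\varphi \land c_{sym})(a_{\forall},a_p') = 1]},
\]
where $(\varphi \land c_{sym})(a_{\forall},a_p') = 1$ abbreviates the event that the compound path condition holds under the assignment pair $(a_{\forall},a_p')$. The well-definedness clause of Definition~\ref{def:main} requires this denominator to be positive; I would discharge that using the hypothesis that the true branch is reachable (its branch probability is nonzero), exactly the setting in which Theorem~\ref{thm:branch} applies.

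Next I would use the key semantic observation that a conjunctive guard evaluates to true precisely when both conjuncts do: for every assignment pair, $(\varphi \land c_{sym})(a_{\forall},a_p') = 1$ if and only if $\varphi(a_{\forall},a_p') = 1$ and $\deno{c_{sym}} a_{\forall} a_p' = 1$. Substituting this equivalence into both the numerator and the denominator rewrites the ratio above as
\[
\frac{\Pr_{a_p' \sim \nu(a_{\forall})}[a_p' = a_p \land \varphi(a_{\forall},a_p') = 1 \land \deno{c_{sym}} a_{\forall} a_p' = 1]}{\Pr_{a_p' \sim \nu(a_{\forall})}[\varphi(a_{\forall},a_p') = 1 \land \deno{c_{sym}} a_{\forall} a_p' = 1]},
\]
which is exactly the conditional probability on the right-hand side of the lemma, namely $\nu$ conditioned jointly on $\varphi = 1$ and $c_{sym} = 1$. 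The false-branch identity for $\nu_{\varphi}^{\neg}$ is entirely symmetric: instantiate Definition~\ref{def:main} at $R_{false} = (\varphi \wedge \neg c_{sym}, \sigma, P)$ and use that $\neg c_{sym}$ holds iff $\deno{c_{sym}} a_{\forall} a_p' = 0$.

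The argument is a routine unfolding, so I do not anticipate a genuine mathematical difficulty. The only point requiring care is the bookkeeping around the iterated-conditioning notation on the right-hand side: I must verify that conditioning on $\varphi = 1$ and then on $c_{sym} = 1$ collapses to joint conditioning on the conjunction. This is just the chain rule for conditional probability, and it follows immediately once the events are written out as above, with the positivity of the joint denominator again guaranteed by reachability of the branch.
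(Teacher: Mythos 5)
Your proposal is correct and follows essentially the same route as the paper's proof: both arguments unfold the definition of the conditioned distribution at the conjunctive path condition $\varphi \wedge c_{sym}$ (resp.\ $\varphi \wedge \neg c_{sym}$) and observe that the iterated conditioning on the right-hand side collapses, after cancelling the common $\Pr[\deno{c_{sym}}\,a_{\forall}\,a_p' = 1]$ factor, to joint conditioning on both events. The only cosmetic difference is direction (you expand the left-hand side first, the paper expands the right-hand side first), plus your explicit remark on positivity of the denominator, which the paper leaves implicit.
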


\begin{proof}
  Let $a_{\forall} \in \mathit{A}_{\forall}$ and $a_p \in \mathit{A}_p$ be arbitrary. Consider the quantity
  \[
    \Pr_{a_p' \sim \nu(a_{\forall})}[ a_p' = a_p \mid \varphi(a_{\forall},a_p') = 1 \mid \deno{c_{sym}} a_{\forall} a_p'= 1 ].
  \]
  By definition of conditional probability,
  \begin{align*}
    \Pr_{a_p' \sim \nu(a_{\forall})}[ a_p' = a_p &\mid \varphi(a_{\forall},a_p') = 1 \mid \deno{c_{sym}} a_{\forall} a_p'= 1 ]\\&= \frac{\displaystyle\Pr_{a_p' \sim \nu(a_{\forall})}[a_p' = a_p \wedge \varphi(a_{\forall},a_p') = 1 \mid \deno{c_{sym}} a_{\forall} a_p' = 1 ]}{\displaystyle\Pr_{a_p' \sim \nu(a_{\forall})}[\varphi(a_{\forall},a_p') = 1 \mid \deno{c_{sym}} a_{\forall} a_p' = 1]} \\
                                                                                           &= \frac{\frac{\displaystyle\Pr_{a_p' \sim \nu(a_{\forall})}[a_p' = a_p \wedge \varphi(a_{\forall},a_p') = 1 \wedge \deno{c_{sym}} a_{\forall} a_p' = 1 ]}{\displaystyle\Pr_{a_p' \sim \nu(a_{\forall})}[ \deno{c_{sym}} a_{\forall} a_p' = 1]}}{\frac{\displaystyle\Pr_{a_p' \sim \nu(a_{\forall})}[\varphi(a_{\forall},a_p') = 1 \wedge \deno{c_{sym}} a_{\forall} a_p' = 1 ]}{\displaystyle\Pr_{a_p' \sim \nu(a_{\forall})}[ \deno{c_{sym}} a_{\forall} a_p' = 1]}} \\
                                                                                           &= \frac{\displaystyle\Pr_{a_p' \sim \nu(a_{\forall})}[a_p' = a_p \wedge \varphi(a_{\forall},a_p') = 1 \wedge \deno{c_{sym}} a_{\forall} a_p' = 1 ]}{\displaystyle\Pr_{a_p' \sim \nu(a_{\forall})}[\varphi(a_{\forall},a_p') = 1 \wedge \deno{c_{sym}} a_{\forall} a_p' = 1 ]} \\
                                                                                           &= \nu_{\varphi}(a_{\forall},a_p).
  \end{align*}
  The proof of the second quantity, $\nu_{\varphi}^{\neg}$, follows identically as above.
\end{proof}
Now we can prove~\Cref{thm:branch}.
\begin{proof}
  First assume that $\mu$ satisfies the input abstraction $R=(\varphi,\sigma,P)$. Note that by definition, for all program memories $m \in \mathit{M}$, and assignments to universal symbolic variables $a_{\forall} \in \mathit{A}_{\forall}$,
  \[
    \mu_c(a_{\forall},m) = \frac{\displaystyle\Pr_{m' \sim \mu(a_{\forall})}\left[ m' = m \land \deno{c}m' = 1 \right]}{\displaystyle\Pr_{m' \sim \mu(a_{\forall})}\left[ \deno{c}m' = 1 \right]}\quad\text{and}\quad \mu_{\neg c}(a_{\forall},m) = \frac{\displaystyle\Pr_{m' \sim \mu(a_{\forall})}\left[ m' = m \land \deno{c}m' = 0 \right]}{\displaystyle\Pr_{m' \sim \mu(a_{\forall})}\left[ \deno{c}m' = 0 \right]}.
  \]
  Also note that by definition, $\mu(a_{\forall},m) = \sum_{a_p \in \mathsf{fromMem}(\sigma,a_{\forall},m)} \nu_{\varphi}(a_{\forall},a_p)$. Combining together, we get,
  \begin{align*}
    \mu_c(a_{\forall},m) &= \frac{\displaystyle\Pr_{m' \sim \mu(a_{\forall})}\left[ m' = m \land \deno{c}m' = 1 \right]}{\displaystyle\Pr_{m' \sim \mu(a_{\forall})}\left[ \deno{c}m' = 1 \right]} \\
                 &= \frac{\displaystyle\sum_{m' \in \{m' \in \mathit{M} \mid m' = m \wedge \deno{c}m' = 1\}}\mu(a_{\forall},m')}{\displaystyle\sum_{m' \in \{m' \in \mathit{M} \mid \deno{c}m' = 1\}}\mu(a_{\forall},m')}\\
                 &= \frac{\displaystyle\sum_{m' \in \{m' \in \mathit{M} \mid m' = m \wedge \deno{c}m' = 1\}}\sum_{a_p \in \mathsf{fromMem}(\sigma,a_{\forall},m')} \nu_{\varphi}(a_{\forall},a_p)}{\displaystyle\sum_{m' \in \{m' \in \mathit{M} \mid \deno{c}m' = 1\}}\sum_{a_p \in \mathsf{fromMem}(\sigma,a_{\forall},m')} \nu_{\varphi}(a_{\forall},a_p)}\\
                 &= \frac{\displaystyle\sum_{m' \in \{m' \in \mathit{M} \mid m' = m \wedge \deno{c}m' = 1\}}\sum_{a_p \in \mathsf{fromMem}(\sigma,a_{\forall},m')} \frac{\displaystyle\Pr_{a_p' \sim \nu(a_{\forall})}[a_p' = a_p \wedge \varphi(a_{\forall},a_p')=1]}{\displaystyle\Pr_{a_p' \sim \nu(a_{\forall})}[\varphi(a_{\forall},a_p')=1]}}{\displaystyle\sum_{m' \in \{m' \in \mathit{M} \mid \deno{c}m' = 1\}}\sum_{a_p \in \mathsf{fromMem}(\sigma,a_{\forall},m')} \frac{\displaystyle\Pr_{a_p' \sim \nu(a_{\forall})}[a_p' = a_p \wedge \varphi(a_{\forall},a_p')=1]}{\displaystyle\Pr_{a_p' \sim \nu(a_{\forall})}[\varphi(a_{\forall},a_p')=1]}}\\
                 &= \frac{\displaystyle\sum_{m' \in \{m' \in \mathit{M} \mid m' = m \wedge \deno{c}m' = 1\}}\sum_{a_p \in \mathsf{fromMem}(\sigma,a_{\forall},m')} \Pr_{a_p' \sim \nu(a_{\forall})}[a_p' = a_p \wedge \varphi(a_{\forall},a_p')=1]}{\displaystyle\sum_{m' \in \{m' \in \mathit{M} \mid \deno{c}m' = 1\}}\sum_{a_p \in \mathsf{fromMem}(\sigma,a_{\forall},m')} \Pr_{a_p' \sim \nu(a_{\forall})}[a_p' = a_p \wedge \varphi(a_{\forall},a_p')=1]}
  \end{align*}
  Now consider the denominator of the above expression. The left-most summation is collecting all the memories which satisfy the guard, $c$. From these memories, an associated assignment to probabilistic variables is created using $\sigma$. Since $c_{sym} = \deno{c} \sigma$, it must be that $\deno{c_{sym}} a_{\forall} a_p = 1$. Therefore, the denominator is simply computing $\Pr_{a_p \sim \nu(a_{\forall})}[\varphi(a_{\forall},a_p) = 1 \wedge \deno{c_{sym}} a_{\forall} a_p = 1]$. As for the numerator, we can apply a similar argument to conclude that\begin{multline*}
    \sum_{m' \in \{m' \in \mathit{M} \mid m' = m \wedge \deno{c}m' = 1\}}\sum_{a_p \in \mathsf{fromMem}(\sigma,a_{\forall},m')} \Pr_{a_p' \sim \nu(a_{\forall})}[a_p' = a_p \wedge \varphi(a_{\forall},a_p')=1] \\= \sum_{a_p \in \mathsf{fromMem}(\sigma,a_{\forall},m)}\Pr_{a_p' \sim \nu(a_{\forall})}[a_p' = a_p \wedge \varphi(a_{\forall},a_p') = 1 \wedge \deno{c_{sym}} a_{\forall} a_p' = 1].
  \end{multline*}
  Thus, we have
  \begin{equation*}
    \begin{split}
      &\frac{\displaystyle\sum_{m' \in \{m' \in \mathit{M} \mid m' = m \wedge \deno{c}m' = 1\}}\sum_{a_p \in \mathsf{fromMem}(\sigma,a_{\forall},m')} \Pr_{a_p' \sim \nu(a_{\forall})}[a_p' = a_p \wedge \varphi(a_{\forall},a_p')=1]}{\displaystyle\sum_{m' \in \{m' \in \mathit{M} \mid \deno{c}m' = 1\}}\sum_{a_p \in \mathsf{fromMem}(\sigma,a_{\forall},m')} \Pr_{a_p' \sim \nu(a_{\forall})}[a_p' = a_p \wedge \varphi(a_{\forall},a_p')=1]}\\
      &= \sum_{a_p \in \mathsf{fromMem}(\sigma,a_{\forall},m)}\frac{\displaystyle\Pr_{a_p' \sim \nu(a_{\forall})}[a_p' = a_p \wedge \varphi(a_{\forall},a_p') = 1 \wedge \deno{c_{sym}} a_{\forall} a_p' = 1]}{\displaystyle\Pr_{a_p' \sim \nu(a_{\forall})}[\varphi(a_{\forall},a_p') = 1 \wedge \deno{c_{sym}} a_{\forall} a_p' = 1]}\\
      &= \sum_{a_p \in \mathsf{fromMem}(\sigma,a_{\forall},m)} \nu_{\varphi}(a_{\forall},a_p)
    \end{split}
  \end{equation*}
  We have thus proven condition 2 for $\mu_c$. An symmetric argument would apply to $\mu_{\neg c}$. Therefore, $\mu_c$ satisfies $R_{true}$ and $\mu_{\neg c}$ satisfies $R_{false}$.

  We will now prove the added condition that for all $a_{\forall} \in \mathit{A}_{\forall}$, 
  \[
    \sum_{m \in \{m \in \mathit{M} \mid \deno{c}m = 1\}}\mu(a_{\forall},m) = \deno{p_c} a_{\forall}
  \]
  and
  \[
    \sum_{m \in \{m \in \mathit{M} \mid \deno{c}m = 0\}}\mu(a_{\forall},m) = \deno{p_c'} a_{\forall}.
  \]

  Consider the left-hand side of the proposed equality. By~\Cref{def:main},
  \begin{align*}
    \sum_{m \in \{m \in \mathit{M} \mid \deno{c}m = 1\}}\mu(a_{\forall},m) &= \sum_{m \in \{m \in \mathit{M} \mid \deno{c}m = 1\}} \sum_{a_p \in \mathsf{fromMem}(\sigma,a_{\forall},m)} \nu_{\varphi}(a_{\forall},a_p)\\
                                                         &= \sum_{m \in \{m \in \mathit{M} \mid \deno{c}m = 1\}} \sum_{a_p \in \mathsf{fromMem}(\sigma,a_{\forall},m)} \Pr_{a_p' \sim \nu(a_{\forall})}[a_p' = a_p \mid \varphi(a_{\forall},a_p') = 1] 
  \end{align*}
  Note that by~\Cref{line:guard_convert} of~\Cref{alg:symb_ex}, $c_{sym} = \deno{c} \sigma$. Note that for every $a_p \in \mathsf{fromMem}(\sigma,a_{\forall},m)$, $m = \lambda(\mathtt{x} : \mathcal{X}) . \deno{\sigma(\mathtt{x})} a_{\forall} a_p$. Therefore, since $\deno{c} m = 1$, and $\deno{c} (\lambda(\mathtt{x} : \mathcal{X}) . \deno{\sigma(\mathtt{x})} a_{\forall} a_p) = \deno{c_{sym}} a_{\forall} a_p$,
  then $\deno{c_{sym}} a_{\forall} a_p = 1$. Hence,
  \begin{multline*}
    \sum_{m \in \{m \in \mathit{M} \mid \deno{c}m = 1\}} \sum_{a_p \in \mathsf{fromMem}(\sigma,a_{\forall},m)} \Pr_{a_p' \sim \nu(a_{\forall})}[a_p' = a_p \mid \varphi(a_{\forall},a_p') = 1]\\
    = \Pr_{a_p \sim \nu(a_{\forall})}[\deno{c_{sym}} a_{\forall} a_p \mid \varphi(a_{\forall},a_p') = 1].
  \end{multline*}
  Now, using standard transformations, we have
  \begin{align*}
    \Pr_{a_p \sim \nu(a_{\forall})}[\deno{c_{sym}} a_{\forall} a_p \mid \varphi(a_{\forall},a_p') = 1] &= \frac{\Pr_{a_p \sim \nu(a_{\forall})}[\deno{c_{sym}} a_{\forall} a_p = 1 \wedge \varphi(a_{\forall},a_p) = 1]}{\Pr_{a_p \sim \nu(a_{\forall})}[\varphi(a_{\forall},a_p) = 1]}\\
                                                                       &= \frac{\sum_{a_p \in \{a_p \mid \deno{c_{sym}} a_{\forall} a_p = 1 \wedge \varphi(a_{\forall},a_p) = 1\}}\nu(a_{\forall},a_p)}{\sum_{a_p \in \{a_p \mid \varphi(a_{\forall},a_p) = 1\}}\nu(a_{\forall},a_p)}\\
                                                                       &= \frac{\sum_{a_p \in \{a_p \mid \deno{c_{sym}} a_{\forall} a_p = 1 \wedge \varphi(a_{\forall},a_p) = 1\}}\nu(a_{\forall},a_p)}{\sum_{a_p \in \{a_p \mid \varphi(a_{\forall},a_p) = 1\}}\nu(a_{\forall},a_p)}
  \end{align*}
  Now consider the right-hand side of the proposed equality. By Algorithm~\ref{alg:branch},
  \[
    p_c = \frac{\displaystyle\sum_{(v_1,\ldots,v_n) \in \mathrm{domain}(D_1) \times \cdots\times \mathrm{domain}(D_n)} [(c_{sym} \wedge \varphi)\{\delta_1 \mapsto v_1,\ldots,\delta_n \mapsto v_n\}]}{\displaystyle\sum_{(v_1,\ldots,v_n) \in \mathrm{domain}(D_1) \times \cdots\times \mathrm{domain}(D_n)}[\varphi\{\delta_1 \mapsto v_1,\ldots,\delta_n \mapsto v_n\}]}.
  \]
  Note that the substitution of the probabilistic symbolic variables $\delta_1,\ldots,\delta_n$, along with the assignment of universal symbolic variables, $a_{\forall}$, represents an assignment of probabilistic symbolic variables, $a_p$. Therefore,
  \[
    \frac{\sum_{a_p \in \{a_p \mid \deno{c_{sym}} a_{\forall} a_p = 1 \wedge \varphi(a_{\forall},a_p) = 1\}}\nu(a_{\forall},a_p)}{\sum_{a_p \in \{a_p \mid \varphi(a_{\forall},a_p) = 1\}}\nu(a_{\forall},a_p)} = \deno{p_c} a_{\forall}.
  \]
  Thus,
  \[
    \sum_{m \in \{m \in \mathit{M} \mid \deno{c}m = 1\}}\mu(a_{\forall},m) = \deno{p_c} a_{\forall}.
  \]
  A symmetric argument as above can be applied to the equality
  \[
    \sum_{m \in \{m \in \mathit{M} \mid \deno{c}m = 0\}}\mu(a_{\forall},m) = \deno{p_{c}'} a_{\forall}.
  \]
\end{proof}

%\section{Path filtering: further details}
%\label{app:filtering} 
%Hoisted Up.
%
%\begin{table*}
%  \centering
%	\caption{Applying path filtering to a subset of the case studies, compared to the baselines in~\Cref{sec:case_studies}. ``Path Reduction (\%)'' is the percentage decrease in the number of paths, ``Min. Error Bound'' is the minimum $\varepsilon$ which allows the target query to be verified, and ``Speedup (\%)'' is the percentage decrease in \textbf{total} time taken for verification.}
%	\label{tab:heuristics}
%	\begin{tabular}{@{}lrrrrrrr@{}}
%		\toprule
%    & \multicolumn{3}{c}{Timing (sec.)} &&\multicolumn{1}{c}{\textbf{Path}}&\multicolumn{1}{c}{\textbf{Min. Error}}\\ \cmidrule{2-4}
%    \textbf{Case Study} &\textbf{KLEE} & \textbf{Z3} & \textbf{Total} & \textbf{Paths} & \textbf{Reduction (\%)} & \textbf{Bound} & \textbf{Speedup (\%)}\\ \midrule
%		Reservoir Sampling & 14 & 90 & \textbf{104} & 63 & 50.4\%& $0.08$ & 19.4\% \\
%		Monotone Testing & 5 & 3 & \textbf{8} & 16 & 50.0\% & 0 & 96.0\%\\
%		Quicksort & 20 & 182 & \textbf{202} & 71 & 40.8\% & $0.08$ & 45.9\%\\
%		Bloom Filter & 16 & 270 & \textbf{286} & 56 & 32.5\% & $0.18$ & 30.2\%\\
%		\bottomrule
%	\end{tabular}
%\end{table*}
%
%\Cref{tab:heuristics} presents the experimental results from running the filtered versions of a Subset of the case studies presented in~\Cref{sec:case_studies}. 
%
% The Count-min sketch and Freivalds' algorithm case studies were excluded from testing due to the relatively few number of paths present resulting in a limited ability to derive an effective partitioning scheme.

\section{Path Filtering}
\label{app:filtering}

Here we provide additional details related to the path filtering optimization.
Recall that we define a partitioning function $\Gamma: A_\forall \times A_p \rightarrow \mathbb{N}$ which maps an assignment function of the universal symbolic variables, $a_\forall \in A_\forall$ and an assignment function of the probabilistic symbolic variables, $a_p \in A_p$ to a natural number, indicating which \textit{abstract} path $a_\forall$ and $a_p$ combined produce.

Let $\Gamma^{[n]}_\forall$ and $\Gamma^{[n]}_p$ denote the set of universal and probabilistic variable states (resp.) that classify the resulting path as a member of the abstract path $n$, and let $[\vec{y} \sim Prog(\vec{x})]$ be the event that an execution of $Prog$ from $\vec{x}$ results in a state $\vec{y}$ for the probabilistic variables.
An abstract path $n$ can be eliminated if the probability for every universal symbolic variable states is less than some threshold $\epsilon$: $\text{max}_{\vec{x} \in \Gamma_\forall^{[n]}} \Pr[ \vec{y} \sim Prog(x) \land \vec{y} \in \Gamma_p^{[n]} ] < \epsilon$.
\begin{algorithm}[h]
	\caption{Path Filtering}
	\label{alg:path_filter}
	\begin{algorithmic}[1]
        \Function{PathFilter}{$\Gamma^{[n]}_\forall, \Gamma^{[n]}_p, k_\forall, k_p, N, \epsilon$}
        \State{$h \gets \emptyset$} \label{alg:histready}
        \For{$n \gets 1, N$} \label{alg:nloop}
        \State{$count \gets \emptyset$}
        \For{$u \gets 1, k_\forall$}
        \State{$\vec{x} \sim \Gamma^{[n]}_\forall$}
        \For{$v \gets 1, k_p$} \label{alg:loopstart}
        \State{$\vec{y} \gets \textsc{GetProbSamples}(\textrm{Prog}(\vec{x}))$}
        %\State{$n \gets \Gamma(\vec{x}, \vec{y})$}
        \State{$count[\vec{x}] \gets count[\vec{x}] + [\vec{y} \in \Gamma^{[n]}_p]$}
		\EndFor \label{alg:loopend}
		\EndFor
        \State{$h[n] \gets \textrm{ max}(count)$}
		\EndFor
        \State{$pths \gets \{n \mid h[n] / k_p \geq \epsilon \}$}
		\State\Return{$pths$}
		\EndFunction
	\end{algorithmic}
\end{algorithm}
Algorithm~\ref{alg:path_filter} shows our path-filtering algorithm: we invoke it with the partitioning function ($\Gamma^{[n]}_\forall, \Gamma^{[n]}_p$), number of abstract paths ($N$), and hyperparameters $k_\forall, k_p$ that control the number of universal symbolic variable settings to be evaluated and the number of stochastic runs for estimation of probability. 
The algorithm starts by initializing the distribution histogram $h$ to capture the maximum value of the counts (Line~\ref{alg:histready}).
Next, it runs over all the abstract paths $1\dots N$ (Line~\ref{alg:nloop}).
The algorithm repeatedly samples a random program input $\vec{x}$ uniformly from those corresponding to abstract path $n$, and then makes multiple stochastic runs on $Prog(\vec{x})$, recording the setting of the probabilistic variables $\vec{y}$ and counting how often $(\vec{x}, \vec{y})$ belong to abstract path $n$ (Lines \ref{alg:loopstart}-\ref{alg:loopend}).
Finally, the maximum count for any of the program input states $\vec{x}$ is recorded in $h[n]$.
Once all abstract paths are handled, we filter the paths based on the user-defined threshold $\epsilon$ and the distribution of counts in $h$.
Our partitioning function must be \emph{path disjoint}: a single control-flow path cannot map to different abstract paths, i.e., the partitioning function must impose condition on the variables such that it induces a partitioning on the control-flow paths.
Since the size of the formula is proportional to the number of control-flow paths, any partitioning scheme that does not partition the control-flow variables will not be useful.
Additionally, the partitioning function should ideally induce an \emph{asymmetric distribution}: in order for pruning to be effective, certain abstract paths should have low probability counts for all settings of universal symbolic variables.

\paragraph*{Experimental Results.}

\Cref{tab:heuristics} presents the experimental results from running the filtered versions of a subset of the case studies presented in~\Cref{sec:case_studies}. 
\begin{table*}
  \centering
	\caption{Applying path filtering to a subset of the case studies, compared to the baselines in~\Cref{sec:case_studies}. ``Path Reduction (\%)'' is the percentage decrease in the number of paths, ``Min. Error Bound'' is the minimum $\varepsilon$ which allows the target query to be verified, and ``Speedup (\%)'' is the percentage decrease in \textbf{total} time taken for verification.}
	\label{tab:heuristics}
	\begin{tabular}{@{}lrrrrrrr@{}}
		\toprule
    & \multicolumn{3}{c}{Timing (sec.)} &&\multicolumn{1}{c}{\textbf{Path}}&\multicolumn{1}{c}{\textbf{Min. Error}}\\ \cmidrule{2-4}
    \textbf{Case Study} &\textbf{KLEE} & \textbf{Z3} & \textbf{Total} & \textbf{Paths} & \textbf{Reduction (\%)} & \textbf{Bound} & \textbf{Speedup (\%)}\\ \midrule
		Reservoir Sampling & 14 & 90 & \textbf{104} & 63 & 50.4\%& $0.08$ & 19.4\% \\
		Monotone Testing & 5 & 3 & \textbf{8} & 16 & 50.0\% & 0 & 96.0\%\\
		Quicksort & 20 & 182 & \textbf{202} & 71 & 40.8\% & $0.08$ & 45.9\%\\
		Bloom Filter & 16 & 270 & \textbf{286} & 56 & 32.5\% & $0.18$ & 30.2\%\\
		\bottomrule
	\end{tabular}
\end{table*}

}{}
\end{document}